\newcommand{\hollowstar}{\text{\ding{73}}}
\newcommand{\legendbox}[1]{%
  \fcolorbox{black}[HTML]{#1}{\rule{0ex}{1ex}\rule{1ex}{0ex}}
}
\newtheorem*{theorem*}{Theorem}
\newtheorem{theorem}{Theorem}
\newtheorem{lemma}[theorem]{Lemma}
\newtheorem{definition}[theorem]{Definition}
\newtheorem{remark}[theorem]{Remark}
\crefname{lemma}{Lemma}{Lemmas}
\crefname{definition}{Definition}{Definitions}
\crefname{theorem}{Theorem}{Theorems}
\crefname{conjecture}{Conjecture}{Conjectures}
\crefname{section}{Section}{Sections}
\crefname{claim}{Claim}{Claims}
\crefname{appendix}{Appendix}{Appendices}
\crefname{figure}{Fig.}{Figs.}
\crefname{table}{Table}{Tables}
\crefname{proposition}{Proposition}{Propositions}
\crefname{corollary}{Corollary}{Corollaries}
\crefname{example}{Example}{Examples}
\crefname{remark}{Remark}{Remarks}
\providecommand\given{}
\newcommand\SetSymbol[1][]{%
    \nonscript\:#1\vert
    \allowbreak
    \nonscript\:
    \mathopen{}}
\DeclarePairedDelimiterX\Set[1]\{\}{%
    \renewcommand\given{\SetSymbol[\delimsize]}
    #1
}
\DeclarePairedDelimiter{\set}{\lbrace}{\rbrace}
\DeclarePairedDelimiter{\abs}{\lvert}{\rvert}
\DeclarePairedDelimiter{\norm}{\lVert}{\rVert}
\DeclarePairedDelimiter{\of}{\lparen}{\rparen}
\DeclarePairedDelimiter{\sof}{\lbrack}{\rbrack}
\newcommand{\defeq}{\vcentcolon=}
\renewcommand{\leq}{\leqslant}
\renewcommand{\geq}{\geqslant}
\DeclareMathOperator*{\argmax}{arg\,max}
\renewcommand{\bra}[1]{\langle{#1}\rvert}
\renewcommand{\ket}[1]{\lvert{#1}\rangle}
\renewcommand{\braket}[2]{\langle{#1}|{#2}\rangle}
\newcommand{\ketbra}[2]{\ket{#1}\bra{#2}}
\renewcommand{\proj}[1]{\ketbra{#1}{#1}}
\newcommand{\mx}[1]{\begin{pmatrix}#1\end{pmatrix}}
\newcommand{\ct}{^{\dagger}}
\newcommand{\x}{\otimes}
\newcommand{\xp}[1]{^{\otimes #1}}
\newcommand{\C}{\mathbb{C}} 
\newcommand{\cont}{\mathrm{cont}} 
\newcommand{\len}{\ell} 
\newcommand{\Brat}{\mathscr{B}} 
\newcommand{\A}{\mathcal{A}} 
\DeclareMathOperator{\poly}{poly} 
\DeclareMathOperator{\polylog}{polylog} 
\newcommand{\Usch}{U_{\mathrm{Sch}}} 
\newcommand{\Uirrep}[1]{\mathcal{W}_{#1}} 
\newcommand{\Airrep}[1]{\mathcal{H}_{#1}} 
\newcommand{\0}{\varnothing}
\let\S\relax
\DeclareMathOperator{\S}{S} 
\DeclareMathOperator{\Tr}{Tr} 
\newcommand{\Irr}[1]{\widehat#1} 
\newcommand{\Paths}{\mathrm{Paths}} 
\newcommand{\U}[1]{\mathrm{U}_{#1}} 
\newcommand{\psucc}{p_{\mathrm{succ}}} 
\newcommand{\N}{\mathcal{N}} 
\renewcommand{\AA}{\bar{A}} 
\newcommand{\BB}{\bar{B}} 
\newcommand{\GT}{\mathrm{GT}} 
\newcommand{\gtM}[1]{M_{[#1]}}
\newcommand{\pt}{\mathbin{\vdash}} 
\newcommand{\AC}{\mathrm{AC}} 
\newcommand{\RC}{\mathrm{RC}} 
\newcommand{\w}{0.5cm}
\newcommand{\bx}[3]{
  \draw[fill = white] #3 (#1*\w-\w/2,-#2*\w-\w/2) rectangle (#1*\w+\w/2,-#2*\w+\w/2);
}
\newcommand{\br}[1]{\of*{#1}}
\newcommand{\yd}[2][0.4]{%
  \begin{tikzpicture}[scale = #1, baseline={([yshift=-0.6ex]current bounding box.center)}]
    \foreach \li [count = \y] in {#2} {
      \foreach \x in {1,...,\li} {
        \bx{\x}{\y}{}
      }
    }
  \end{tikzpicture}
}
\newcommand\restr[2]{{
  \left.\kern-\nulldelimiterspace 
  #1 
  \vphantom{\big|} 
  \right|_{#2} 
  }}
\newcommand{\HD}{\ \ldots\ } 
\newcommand{\vd}{\vdots} 
\newcommand{\qb}{\setwiretype{b}} 
\newcommand{\qq}{\setwiretype{q}} 
\newcommand{\qn}{\setwiretype{n}}
\newcommand{\gateQFT}[1]{\gate{\mathrm{QFT}_{#1}}}
\newcommand{\gateQFTinv}[1]{\gate{\mathrm{QFT}_{#1}^\dagger}}
\title[Efficient quantum circuits for port-based teleportation]{Efficient quantum circuits for port-based teleportation}
\author{Dmitry Grinko\textsuperscript{1}\hspace{-.25em}}
\email{dmitry.grinko@cwi.nl}
\author{Adam Burchardt\textsuperscript{1}\hspace{-.25em}}
\address{\textsuperscript{1}Institute for Logic, Language, and Computation, University of Amsterdam and QuSoft, Amsterdam, The Netherlands}
\email{adam.burchardt@cwi.nl}
\author{Maris Ozols\textsuperscript{1,2}\hspace{-.25em}}
\address{\textsuperscript{2}Korteweg-de Vries Institute for Mathematics and Institute for Theoretical Physics, University of Amsterdam, The Netherlands}
\email{marozols@gmail.com}
\begin{document}

\begin{abstract}
Port-based teleportation (PBT) is a variant of quantum teleportation that, unlike the canonical protocol by Bennett et al., does not require a correction operation on the teleported state. Since its introduction by Ishizaka and Hiroshima in 2008, no efficient implementation of PBT was known. We close this long-standing gap by building on our recent results on representations of partially transposed permutation matrix algebras and mixed quantum Schur transform. We construct efficient quantum algorithms for probabilistic and deterministic PBT protocols on $n$ ports of arbitrary local dimension, both for EPR and optimized resource states. We describe two constructions based on different encodings of the Gelfand--Tsetlin basis for $n$ qudits: a standard encoding that achieves $\tilde{O}(n)$ time and $O(n \log(n))$ space complexity, and a Yamanouchi encoding that achieves $\tilde{O}(n^2)$ time and $O(\log(n))$ space complexity, both for constant local dimension and target error. We also describe efficient circuits for preparing the optimal resource states.
\end{abstract}

\maketitle

\tableofcontents

\newpage
\section{Introduction}\label{sec:intro}

\subsection{Background}

Quantum teleportation is a cornerstone of quantum information \cite{bennett1993teleporting}.
However, one potentially undesirable feature of the original teleportation protocol is that the receiving party needs to perform a correction operation on the received state.
\emph{Port-based teleportation} (PBT) gets around this limitation \cite{IshizakaHiroshima,ishizaka2009quantum}.
In PBT, Alice and Bob share an entangled resource state distributed among $n$ quantum systems called \emph{ports} on each side. 
To teleport an unknown quantum state, Alice measures it together with her share of the ports.
The measurement outcome, which she communicates to Bob, indicates the port on Bob's side to which the state has been teleported.
Bob does not need to perform any correction operation but simply retrieve the state from the correct port.
Each of the quantum systems involved has a fixed \emph{local dimension} $d$.

Our current understanding of PBT protocols is very detailed thanks to a long sequence of works \cite{IshizakaHiroshima,ishizaka2009quantum,beigi2011simplified,ishizaka2015remarks,studzinski2017port,mozrzymas2018optimal,leditzky2020optimality,christandl2021asymptotic}. In particular, \cite{studzinski2017port,mozrzymas2018optimal} were the first to obtain exact formulas for the asymptotic performance of PBT.
The resource requirements for PBT have been studied further in \cite{studzinski2022square,strelchuk2023minimal}.
The original PBT protocol has subsequently been extended to multi-port teleportation \cite{studzinski2020efficient,kopszak2020multiport,mozrzymas2021optimal} where several systems are teleported simultaneously.

A crucial feature of port-based teleportation is unitary equivariance \cite{grinko2022linear}, meaning that applying any unitary on Alice's input state is equivalent to applying the same unitary to all of Bob's ports.
Due to unitary equivariance, PBT can be seen as a concrete example of an approximate universal programmable quantum processor \cite{IshizakaHiroshima}.
The quantum no-programming theorem \cite{NoProgramming,Programmability,OptimalUniversalProgramming} implies that unitarily equivariant PBT protocols with a finite-dimensional resource state can only achieve approximate teleportation.
Nevertheless, certain PBT protocols are asymptotically faithful in the limit of a large number of ports.

PBT has diverse applications in non-local quantum computation and quantum communication \cite{beigi2011simplified,buhrman2016quantum,may2022complexity} with applications to quantum position verification \cite{allerstorfer2022role,allerstorfer2023relating}, channel discrimination \cite{pirandola2019fundamental}, channel simulation \cite{pereira2021characterising}, and holography in high-energy physics \cite{may2019quantum,may2022complexity}.

\subsection{Types of PBT protocols}

The two main types of PBT protocols considered are \emph{probabilistic exact} (pPBT) and \emph{deterministic inexact} (dPBT) \cite{studzinski2017port,mozrzymas2018optimal,leditzky2020optimality}.
``Exact'' means that the protocol achieves perfect entanglement fidelity $F = 1$, while ``inexact'' means that $F < 1$.
``Probabilistic'' refers to the fact that the protocol has some nonzero probability of failure, while ``deterministic'' highlights no possibility of failure, i.e., the average success probability of the protocol is $\psucc = 1$.\footnote{For brevity, in the rest of this section we will drop the terms ``exact'' and ``inexact'' in these cases since deterministic PBT protocols with finite resources are always inexact, and exact protocols cannot be deterministic due to \cite{NoProgramming}. Of course, there is still a possibility of having probabilistic inexact protocols.}
Besides pPBT and dPBT, one can also consider \emph{probabilistic inexact} protocols that interpolate between pPBT and dPBT.
One concrete example is minimal PBT (mPBT) which is a modified version of dPBT \cite{strelchuk2023minimal}.

Two types of resource states for PBT are typically considered: $n$ pairs of maximally entangled states and an arbitrary optimized state; we call these the \emph{EPR resource state} and the \emph{optimized resource state}, respectively.
Depending on the application, Alice's measurement is chosen either to maximize the entanglement fidelity $F$ (in dPBT) or the average success probability $\psucc$ (in pPBT).
While the optimized resource states are different for deterministic and probabilistic protocols, the optimal measurement for optimized states turns out to be the same in both cases.
\Cref{tab:pbt_summary} summarizes the four main types of PBT protocols and their optimal fidelity and probability of success. 

\begin{table}[ht]
    \centering
    \begin{NiceTabular}{|c||c|c|} \hline
        {\multirow{3}{*}{Resource state}} &   \multicolumn{2}{c}{Protocol type} \\ \cline{2-3}
            &  Deterministic inexact (dPBT) &  Probabilistic exact (pPBT)  \\ \hline \hline
         EPR&  \cellcolor[HTML]{FFFDD0} $\begin{aligned} F &= 1 - O\of*{\nicefrac{1}{n}} \\ \psucc &= 1\end{aligned}$& \cellcolor[HTML]{AFEEEE} $\begin{aligned}\nicefrac{F}{\psucc} &= 1 \\ \psucc &= 1-O\of*{\nicefrac{1}{\sqrt{n}}}\end{aligned}$\\ \hline 
         Optimized& \cellcolor[HTML]{FFFDD0} $\begin{aligned} F &= 1 - O\of*{\nicefrac{1}{n^2}} \\ \psucc &= 1\end{aligned}$& \cellcolor[HTML]{FDF5E6} $\begin{aligned}\nicefrac{F}{\psucc} &= 1 \\ \psucc &= 1-O\of*{\nicefrac{1}{n}}\end{aligned}$\\ \hline 
    \end{NiceTabular}
    \caption{Summary of different flavours of PBT protocols. Rows correspond to EPR and optimized resource states while columns correspond to deterministic (dPBT) \cite{studzinski2017port,mozrzymas2018optimal,leditzky2020optimality,christandl2021asymptotic} and probabilistic (pPBT) \cite{studzinski2017port,christandl2021asymptotic} protocols. Two figures of merit are used: the average success probability $\psucc$ and the entanglement fidelity $F$ (normalized by the success probability $\psucc$) as functions of the number of ports $n$, ignoring the dependence on the local dimension $d$. Cell colors correspond to different optimal POVMs: \legendbox{FDF5E6} is the standard PGM $E$ defined in \cref{def:PGM_PBT,def:PGM_PBT_schur}, \legendbox{FFFDD0} is the dPBT POVM $E^{\hollowstar}$ defined in \cref{def:dpbt_povm,def:dpbt_povm_schur} and \legendbox{AFEEEE} is the POVM $E^{\star}$ defined in \cref{def:povm_generic,def:G_lambda_ppbt}.}
    \label{tab:pbt_summary}
\end{table}

\subsection{Summary of our results}

While analytic expressions for the optimal measurement operators in PBT were known \cite{studzinski2017port,mozrzymas2018optimal,leditzky2020optimality}, efficient quantum circuits that implement them were not known until our work.
Our main result provides explicit quantum circuits for implementing PBT and analyzes their complexity.

\begin{theorem}\label{thm:pbt}
    The measurements for all four types of PBT protocols (deterministic/probabilistic and with optimized/EPR resource state) can be implemented in two ways with the following time and space complexities:
    \begin{enumerate}
        \item standard encoding: $n d^4 \polylog(n,d,1/\epsilon)$ time and $(n + d) d \log(n) \polylog(d,1/\epsilon)$ space,
        \item Yamanouchi encoding: $n^2 d^4 \polylog(n,d,1/\epsilon)$ time and $d^2 \log(n) \polylog(d,1/\epsilon)$ space,
    \end{enumerate}
    where $n$ is the number of ports, $d$ is the dimension of the teleported quantum state, and $\epsilon$ is the target error.
    In both cases, the total gate complexity is the same: $n^2 d^4 \polylog(n,d,1/\epsilon)$.
\end{theorem}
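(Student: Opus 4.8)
The plan is to reduce all four POVMs to a single computational primitive built from the mixed quantum Schur transform $\Usch$, whose efficient circuits in both encodings are established earlier, and then to bound the cost of the remaining block-local operations. The starting observation is unitary equivariance: every PBT measurement operator commutes with the diagonal (mixed) action of the unitary group on the input together with Bob's ports, so by mixed Schur--Weyl duality for the partially transposed permutation algebra it is block-diagonal in the mixed Schur basis, with blocks labelled by the irreducible representations of that algebra. Conjugating by $\Usch$ therefore turns each POVM into an operation that first reads off the irrep/block label and then applies a controlled operation acting only on the small multiplicity and port registers. The whole problem thus factors as $\Usch$, a block-local operation, and $\Uschinv$, and the two encodings enter only through the cost of $\Usch$.

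For the deterministic and optimized-probabilistic protocols the measurement is the pretty-good measurement $E$ of \cref{def:PGM_PBT}, that is $E_i=\rho^{-1/2}\sigma_i\rho^{-1/2}$ for signal operators $\sigma_i$ with average $\rho$. In the Schur basis $\rho$ is diagonal, with eigenvalues given by closed-form functions of the irrep labels, so $\rho^{\pm 1/2}$ is a diagonal operator: I would compute the relevant eigenvalue into an ancilla and apply a controlled rotation implementing a polynomial $\epsilon$-approximation of $x^{\mp 1/2}$, then uncompute. The port-selecting factor $\sigma_i$ reduces to a single Clebsch--Gordan-type transition $\CG$ that moves the port index into the output register. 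Assembling $\Usch$, this reweighting, the $\CG$ step, and $\Uschinv$ realizes $E$ up to error $\epsilon$.

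The probabilistic protocol with the EPR resource state instead uses the POVM $E^\star$ of \cref{def:povm_ppbt_1,def:povm_ppbt_2}, which is \emph{not} a plain pretty-good measurement: it is built from the block operators $G_\lambda$ and $G_{\lambda,a}$ of \cref{def:G_lambda,def:G_lambda_a}. I would implement each such operator as an additional block-diagonal controlled isometry, again reducing it to a $\CG$ transform together with controlled rotations that $\epsilon$-approximate the associated eigenvalue functions, and use one extra ancilla as the abort flag recording the event $\psucc<1$. This is the step I expect to be the main obstacle: unlike the square-root measurement, $G_\lambda$ and $G_{\lambda,a}$ do not share a single uniform closed form, so one must check that the controlled approximations still compose into a valid (subnormalized) POVM and that the per-block errors do not accumulate beyond the target $\epsilon$.

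Finally I would collect the complexities. A single $\CG$ step costs $\poly(d)=d^4\polylog(d,1/\epsilon)$, the controlled eigenvalue rotations contribute only $\polylog$ factors, and there are $O(n)$ such steps across the register. Multiplying by the cost of $\Usch$ in each encoding yields the two stated bounds: the standard encoding stores the full Gelfand--Tsetlin pattern and parallelizes the steps, giving $n d^4\polylog(n,d,1/\epsilon)$ time and $(n+d)d\log(n)\polylog(d,1/\epsilon)$ space, whereas the Yamanouchi encoding keeps only the current partition and serializes the same steps, giving $n^2 d^4\polylog(n,d,1/\epsilon)$ time and $d^2\log(n)\polylog(d,1/\epsilon)$ space. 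Because serialization converts circuit width into depth without changing the total amount of work, the two encodings share the same total gate count $n^2 d^4\polylog(n,d,1/\epsilon)$.
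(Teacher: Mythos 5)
Your high-level frame (conjugate by the mixed Schur transform and exploit block-diagonality over the irreps of $\A_{n,1}^d$) matches the paper's starting point, but the central step of your construction does not work as stated, and it is exactly the step where the paper's proof does something different. You propose to apply $\rho^{-1/2}$ by computing its eigenvalue into an ancilla and performing a controlled rotation approximating $x^{-1/2}$. But $\rho^{-1/2}$ (however normalized) is not unitary: the controlled-rotation construction yields only a \emph{block encoding}, i.e.\ a postselected, probabilistic implementation, and the same problem afflicts the contraction $\sigma_n$, which is proportional to a rank-deficient projector rather than any unitary ``CG transition.'' Turning such block encodings into a deterministic $(n{+}1)$-outcome measurement requires oblivious amplitude amplification --- precisely the machinery of the independent works \cite{fei2023efficient,wills2023efficient} that this paper deliberately avoids --- and your outline never specifies how the outcome $k$ is actually produced by a circuit, nor accounts for the postselection success probability in the claimed complexities. (A minor further slip: the costs of $\Usch$ and of the block-local operations add; ``multiplying by the cost of $\Usch$'' would not give the stated bounds.)

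The paper closes this gap with two ideas absent from your proposal. First (\cref{sec:gtbasis_pgm,sec:pgm:dilation}), in the Gelfand--Tsetlin basis each $\psi_{(\lambda,\0)}(E_n)$ is a sum of projectors onto vectors $\ket{w_{S,\lambda}}$ with $\norm{\ket{w_{S,\lambda}}}^2 = \sum_{a\in\AC_d(\lambda)} d_{\lambda\cup a}/(n \cdot d_\lambda)$; by the Young-lattice identity $n \cdot d_\lambda = \sum_{a\in\AC(\lambda)} d_{\lambda\cup a}$ this equals $1$ exactly when $\lambda_d = 0$, so the PGM is \emph{already projective} on those irreps, and on the remaining irreps it becomes projective after an explicit Naimark dilation obtained by extending the Bratteli diagram $\Brat$ to $\widetilde{\Brat}$ with row-$(d+1)$ vertices. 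Second (\cref{sec:easy_facts,sec:circuit_for_pgm}), once the measurement is a PVM of the covariant form $\Pi_k = \pi^k \Pi_n \pi^{-k}$ with $\Pi_n = I \otimes W\proj{0}W^\dagger$, the entire measurement is encoded in the single unitary $V = \sum_{k} \omega_{n+1}^{k}\Pi_k$, and the outcome $k$ is extracted by phase estimation on $V$; every gate involved ($\pi$, $\widetilde{W}_\lambda$, phase gates, QFT) is genuinely unitary and implemented exactly up to rotation-angle precision, so no amplitude amplification and no delicate per-block error composition is needed. The EPR-state pPBT case, which you flag as the main obstacle, then has a clean resolution via the one-qubit dilation of \cref{lem:ppbt_dilation} with rotations $U_{\lambda,a}$ built from the closed-form entries $G_{\lambda,a} = (d+\cont(a))/(d+\lambda_1)$. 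Without the explicit dilation and the phase-estimation primitive, your outline cannot yield a deterministic implementation of the measurements, and hence cannot establish the stated time, space, and gate complexities.
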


The setting of port-based teleportation is naturally suited for using the representation theory of the matrix algebra $\A_{n,1}^d$ of partially transposed permutations \cite{grinko2022linear,grinko2023gelfandtsetlin}. It is natural to work in the Gelfand--Tsetlin basis, which can be achieved by applying the mixed quantum Schur transform \cite{grinko2023gelfandtsetlin, nguyen2023mixed}. Therefore, to prove \Cref{thm:pbt} we first explain how to rewrite PBT measurements in the Gelfand--Tsetlin basis in \Cref{sec:std_pgm,sec:povms_dpbt_ppbt} based on representation theory of partially transposed permutations (\cref{sec:Representation theory of the partially transposed permutation algebra}) and mixed quantum Schur transform (\cref{sec:Mixed quantum Schur transform}). Next, we explain how to construct explicitly Naimark dilations of PBT measurements in \Cref{sec:easy_facts,sec:pgm:dilation}. Finally, we use these ingredients to construct efficient quantum algorithms for PBT in standard encoding in \Cref{sec:circuit_for_pgm,sec:dPBT_circuits,sec:pPBT_circuits,sec:G_PBT_circuits} and Yamanouchi encoding in \Cref{sec:yaman_enc} together with their respective complexity analysis.

We highlight that our construction works also for a large class of generic PBT protocols, see\cref{sec:pPBT_circuits}.
Moreover, we construct efficient quantum circuits for preparing optimized states for probabilistic PBT protocols in \cref{sec:opt_states}. The question of preparing efficiently optimized resource states for deterministic PBT is left open.

\subsection{Related work}

The main idea of Naimark's dilation behind the efficient construction of \emph{standard pretty good measurement (PGM)} for PBT was initially presented in version 1 of our paper \cite{grinko2023gelfandtsetlin}. This material is now presented in \cref{sec:pgm:dilation,sec:circuit_for_pgm} with more details and expanded analysis. The constructions of all PBT protocols presented in the current paper can be seen as a straightforward generalization of this main idea presented in \cite{grinko2023gelfandtsetlin}.

Our initial result for standard PGM from \cite{grinko2023gelfandtsetlin} was obtained independently and simultaneously with \cite{fei2023efficient}, which solves the problem for dPBT protocols with different methods. A relevant independent construction of mixed quantum Schur transform was obtained in \cite{nguyen2023mixed} together with our construction in \cite{grinko2023gelfandtsetlin}, which we use as a building block in PBT analysis. Moreover, a logarithmic space qubit Schur transforms described in \cite{kirby2018practical,wills2023generalised} inspired our Yamanouchi encoding construction presented in \cref{sec:yaman_enc}.

While writing this manuscript, we became aware of the independent work \cite{wills2023efficient} which tackles specifically the qubit case for all PBT protocols. In contrast to \cite{wills2023efficient}, our independent result solves the problem for dPBT and pPBT protocols for arbitrary local dimensions in full generality, achieving better time complexities. 
On the technical side, both results \cite{fei2023efficient,wills2023efficient} construct their PBT protocols by using general techniques of block encodings and amplitude amplification, while our construction requires neither technique and is based on the phase estimation primitive. This leads to more practical circuits and more appealing theoretical analysis. These features of our constructions are enabled by proper use of the representation theory of partially transposed permutation matrix algebras via the Gelfand--Tsetlin basis which we developed in \cite{grinko2023gelfandtsetlin,grinko2022linear}.

Finally, a relevant work \cite{Decker2004} discusses general methods for Naimark's dilations of rank one covariant POVMs. Our work analyses specific multi-rank covariant POVMs used in PBT and does not employ the techniques used in \cite{Decker2004}. However, the ideas presented in \cite{Decker2004} could be useful in understanding and generalizing our construction. 

\section{Preliminaries}

\subsection{Notation} 

Let $n \geq 0$ be an integer.
We write $\lambda \pt n$ to mean that $\lambda$ is a \emph{partition} of $n$, i.e.,
$\lambda = (\lambda_1,\dotsc,\lambda_d)$ for some $d$ where
$\lambda_1 + \dotsb + \lambda_d = n$ and
$\lambda_1 \geq \dots \geq \lambda_d \geq 0$.
The \emph{size} of partition $\lambda$ is $\abs{\lambda} = n$.
We say that $\lambda$ has \emph{length} $\len(\lambda) = k$ if $\lambda_k > 0$ and $\lambda_{k+1} = 0$. We use the notation $\lambda \pt_d n$ to indicate that $\lambda \pt n$ and $\len(\lambda) \leq d$.
The \emph{Young diagram} of partition $\lambda$ is a collection of $n$ \emph{cells} arranged in $\len(\lambda)$ rows, with $\lambda_i$ of them in the $i$-th row.
We will use these two notions interchangeably. Any cell $u \in \lambda$ can be specified by its row and column coordinates $i$ and $j$. 
The \emph{content} of cell $u = (i,j)$ is
\begin{equation}
    \cont(u) \defeq j - i.
    \label{eq:content}
\end{equation}
Note that content is constant on diagonals of $\lambda$ and increases by one when going right or up, and decreases by one when going left or down, see \Cref{fig:YD}. 

A cell $u \in \lambda$ is called \emph{removable} if the diagram $\lambda \setminus u$ obtained by removing $u$ from $\lambda$ is a valid Young diagram.
Similarly, a cell $u \notin \lambda$ is called \emph{addable} if the diagram $\lambda \cup u$ obtained by adding $u$ to $\lambda$ is a valid Young diagram.
We will denote the sets of all removable and addable cells of $\lambda$ by $\RC(\lambda)$ and $\AC(\lambda)$, respectively.
Furthermore, $\AC_d(\lambda) \subseteq \AC(\lambda)$ denotes the subset of addable cells $a$ of $\lambda$ such that $\lambda \cup a$ has at most $d$ rows:
\begin{equation}\label{def:AC_d}
    \AC_d(\lambda) \defeq \Set{ a \in \AC(\lambda) \given \len(\lambda \cup a) \leq d }.
\end{equation}

\begin{figure}
    \centering
    \begin{equation*}
    \yd[1]{3,2} \qquad \qquad
    \ytableausetup{centertableaux}
    \begin{ytableau}
        0 & 1 & 2  \\
       -1 & 0 
    \end{ytableau} \qquad \qquad
    \ytableausetup{centertableaux}
    \begin{ytableau}[*(lightgray)]
      & & r_1 & *(white) a_1 \\
      & r_2 & *(white) a_2 \\
      *(white) a_3
    \end{ytableau}
    \end{equation*}
    
    \caption{Example of a Young diagram $\lambda = (5,2)$ (left) together with its contents (middle) and the set $\RC(\lambda)$ of its two removable cells
    $r_1 = (1,3)$, $r_2 = (2,2)$, and the set $\AC(\lambda)$ containing three addable cells $a_1 = (1,4)$, $a_2 = (2,3)$, $a_3 = (3,1)$ (right, $\lambda$ shown in grey).}
    \label{fig:YD}
\end{figure}


\subsection{Representation theory of the partially transposed permutation algebra}
\label{sec:Representation theory of the partially transposed permutation algebra}

The setting of PBT is naturally suited for certain tools from representation theory, such as mixed Schur--Weyl duality for the matrix algebra of partially transposed permutations.
Here we present a short introduction to these topics and refer the reader to \cite{grinko2023gelfandtsetlin} for more details.

The matrix algebra $\A^d_{n,1}$ of \emph{partially transposed permutations} acts on $n+1$ qudits, each of local dimension $d$.
Its generators $\sigma_1, \dotsc, \sigma_n$ act on $(\C^d)\xp{n+1}$ in the following way:
for all $x_1, \dotsc, x_{n+1} \in [d] \defeq \set{1,\dotsc,d}$,
\begin{equation}
    \label{eq:Brauer action}
    \sigma_i \, \ket{x_1,\dotsc,x_{n+1}}
    \defeq
    \begin{cases}
        \ket{x_1,\dotsc,x_{i+1},x_i,\dotsc,x_{n+1}}, &
        \text{$i \neq n$}, \\
        \ket{x_1,\dotsc,x_{n-1}}
        \x
        \delta_{x_n,x_{n+1}}
        \sum_{k=1}^d \ket{k,k}
        , &
        \text{$i = n$}.
    \end{cases}
\end{equation}
In other words, $\sigma_i$ with $i < n$ are \emph{transpositions} that exchange qudits $i$ and $i+1$, while $\sigma_n$ is a \emph{contraction} that projects the last two qudits on the un-normalized maximally entangled state.
The irreducible representations or \emph{irreps} of $\A_{n,1}^d$ are labelled by the following pairs of Young diagrams of two types:
\begin{equation}
    \label{eq:IrrA}
    \Irr{\A_{n,1}^{d}} \defeq
    \Set[\Big]{
    (\lambda,\0) \given
    \lambda \pt n - 1, \;
    \len(\lambda) \leq d} \sqcup
    \Set[\Big]{
    (\mu,\yd[0.5]{1}) \given
    \mu \pt n , \;
    \len(\mu) \leq d-1},
\end{equation}
where $\0$ denotes the empty diagram and $\square = (1)$.
More generally, the irreducible representations of $\A_{n,m}^d$ for arbitrary $n,m \geq 0$ are labelled by pairs of partitions $\Lambda = (\lambda,\lambda')$ such that $\len(\lambda) + \len(\lambda') \leq d$ and $\lambda \pt n-k$ and $\lambda' \pt m-k$ for some $0 \leq k \leq \min(n,m)$ \cite{grinko2023gelfandtsetlin} (in our case $m=1$ and $k=0,1$).
We slightly abuse notation by not including $d$ as part of $\Lambda$ since $d$ is assumed to be fixed throughout.
However, knowing $d$ is necessary to unambiguously convert $\Lambda$ into a \emph{staircase} of length $d$, which is another convenient way of labelling the irreps of $\A_{n,m}^d$ \cite{stembridge1987rational,grinko2023gelfandtsetlin}.

To describe the representation theory of $\A_{n,1}^d$, we employ the so-called Okounkov--Vershik approach.
A central notion in this approach is the \emph{Bratteli diagram} $\Brat$
for the sequence of algebras
$\A_{0,0}^d \hookrightarrow \A_{1,0}^d \hookrightarrow \cdots \hookrightarrow \A_{n,0}^d \hookrightarrow \A_{n,1}^d$,
which is a directed acyclic simple graph obtained as follows.
The vertices of $\Brat$ are divided into $n+2$ \emph{levels} denoted by $i = 0,\dotsc,n+1$. These levels correspond to sets of irreducible representations $\Irr{\A_{0,0}^d},\Irr{\A_{1,0}^d},\dotsc,\Irr{\A_{n,1}^d}$ of the corresponding algebras $\A_{0,0}^d,\A_{1,0}^d,\dotsc,\A_{n,1}^d$.
The vertices at level $i \in \set{0,\dotsc,n}$ are labelled by Young diagrams $\lambda \pt_d i$, while the vertices at the last level $n+1$ are labelled by pairs of Young diagrams $\Lambda$ corresponding to the irreps $\Irr{\A_{n,1}^{d}}$, see \cref{eq:IrrA}.
When $i \in \set{0,\dotsc,n}$, the vertices $\mu \pt i-1$ and $\lambda \pt i$ are connected, denoted as $\mu \rightarrow \lambda$, if $\lambda$ can be obtained from $\mu$ by adding a cell, i.e., $\lambda = \mu \cup a$ for some $a \in \AC_d(\nu)$. 
Furthermore, a vertex $\mu \pt n$ at level $n$ is connected to $\Lambda$ at level $n+1$ iff $\Lambda = (\mu,\square)$ with $\len(\mu) \leq d-1$ or $\Lambda = (\lambda,\0)$ if $\mu$ can be obtained by adding a cell to $\lambda$, i.e., $\mu = \lambda \cup a$ for some $a \in \AC_d(\lambda)$. 
The Bratteli diagram $\Brat$ consists of all vertices from all levels and the directed edges between them (see \cref{fig:A51} for an example).
We denote the only vertex at level $0$ by $\0$ and call it \emph{root}, while the vertices at the last level of $\Brat$ we call \emph{leaves} (they correspond bijectively to the irreps $\Irr{\A_{n,1}^{d}}$).
For any leaf $\Lambda \in \Irr{\A_{n,1}^{d}}$, we denote by
\begin{equation}
    \Paths(\Lambda,\Brat) \defeq
     \Set[\Big]{(T^0,T^1,\dotsc,T^n,T^{n+1}) \given T^{k} \pt k \text{ for } k \leq n, \,
     T^{i-1} \rightarrow T^{i} \text{ for } i \in [n+1] \text{ and } T^{n+1} = \Lambda
     }
     \label{eq:paths}
\end{equation}
the set of all paths in $\Brat$ starting at the root $T^0 = \0$ and terminating at $\Lambda$. 
If $i \in [n]$ is an intermediate position on a path $T$, the diagram $T^i$ is obtained from $T^{i-1}$ by adding some addable cell $a \in \AC_d(T^{i-1})$,
hence $T^i \setminus T^{i+1}$ consists of a single cell.
For any path $T = (T^0,\dotsc,T^{n+1}) \in \Paths(\Lambda,\Brat)$ and $i \in [n]$ we define the \emph{content} of $i$ in $T$ as
\begin{equation}
    \cont_i(T) \defeq \cont(T^{i} \backslash T^{i-1})
    \label{eq:conti}
\end{equation}
and the \emph{axial distance} between $i$ and $i+1$ in $T$ as
\begin{equation}
    r_i(T) = \cont_{i+1}(T)-\cont_{i}(T).
    \label{eq:ri}
\end{equation}
Similar to \cref{eq:paths}, for any intermediate vertex $\lambda \pt k$ at level $k \leq n$ in $\Brat$, we use $\Paths_{k}(\lambda,\Brat)$ to denote the set of all paths in $\Brat$ terminating at $\lambda$.
Furthermore, we denote by $\Paths(\Brat)$ the set of all paths in $\Brat$, i.e.,
$\Paths(\Brat) \defeq \bigsqcup_{\Lambda \in \Irr{\A_{n,1}^{d}}} \Paths(\Lambda,\Brat)$.

The Bratteli diagram $\Brat$ plays an important role in the representation theory of $\A_{n,1}^{d}$ \cite{grinko2022linear,grinko2023gelfandtsetlin}.
Recall that the vertices on the last level of $\Brat$ correspond to irreducible representations $\Irr{\A_{n,1}^{d}}$ of $\A_{n,1}^{d}$.
For a given $\Lambda \in \Irr{\A_{n,1}^{d}}$, the corresponding irrep $\psi_\Lambda$ has a convenient explicit description in the so-called \emph{Gelfand--Tsetlin basis}
$\set{\ket{T} \mid T \in \Paths(\Lambda,\Brat)}$ which is spanned by all paths from $\0$ to $\Lambda$ in $\Brat$.
Recall from \cref{eq:Brauer action} that $\A_{n,1}^{d}$ is generated by $n-1$ transpositions $\sigma_1,\dotsc,\sigma_{n-1}$ and one contraction $\sigma_n$.
For any irrep $\Lambda \in \Irr{\A_{n,1}^{d}}$, a transposition $\sigma_i$ acts on a given path $T \in \Paths(\Lambda,\Brat)$ as follows:
\begin{align}
    \psi_\Lambda(\sigma_i) \, \ket{T}
    &= \frac{1}{r_i(T)} \, \ket{T} + \sqrt{1 - \frac{1}{r_i(T)^2}} \, \ket{\sigma_i T}
    \qquad
    \text{for $i \neq n$},
    \label{gtbasis:transpositions}
\end{align}
where $\sigma_i T$
denotes the path $T$ with vertex $T^i$ at level $i$ replaced by $T^{i-1} \cup (T^{i+1} \setminus T^i)$.
\Cref{gtbasis:transpositions} coincides with the well-known Young--Yamanouchi basis for the symmetric group $\S_n$.
The remaining generator $\sigma_n$ acts trivially in irreps $\Lambda=(\mu,\yd[0.5]{1})$:
\begin{align}
    \psi_{\of*{\mu,\yd[0.3]{1}}}(\sigma_n) \, \ket{T}
    &= 0.
    \label{gtbasis:contraction}
\end{align}
To describe the action of $\sigma_n$ in the remaining irreps $\Lambda = (\lambda,\0)$, notice that any path $T \in \Paths((\lambda,\0),\Brat)$ is of the form $T = S \circ (\lambda\cup a) \circ (\lambda,\0)$ for some prefix $S \in \Paths_{n-1}(\lambda,\Brat)$, partition $\lambda \pt n-1$, and addable cell $a \in \AC_d(\lambda)$, where $\circ$ extends the path by appending new vertices. 
The generator $\sigma_n$ acts on such path in the following way (see \cite[Theorem 3.2]{grinko2023gelfandtsetlin}):
\begin{align}
    \psi_{(\lambda,\0)}(\sigma_n) \, \ket{S\circ (\lambda\cup a) \circ (\lambda,\0)}
    &=\sum_{a'\in \AC_d(\lambda)} \frac{ \sqrt{ m_{\lambda\cup a} m_{\lambda\cup a'}}}{ m_{\lambda} } \ket{S\circ (\lambda\cup a') \circ (\lambda,\0)},
    \label{gtbasis:contraction1}
\end{align}
where $m_\lambda$ denotes the dimension of the $\lambda$-irrep of the unitary group $\U{d}$.
According to the well-known \emph{Weyl dimension}
\cite[eq.~(11.46)]{louck2008unitary} and \emph{hook-content} formulas, $m_{\lambda}$ can be computed for a given Young diagram $\lambda$ as
\begin{equation}
    m_{\lambda} = \prod_{1 \leq i < j \leq d}
    \frac{\lambda_i - \lambda_j + j - i}{j-i} = \prod_{c \in \lambda} \frac{d + \cont(c)}{h_{\lambda}(c)},
    \label{eq:unitary_dimension}
\end{equation}
where $c$ ranges over all cells in $\lambda$ and $h_{\lambda}(c)$ is the \emph{hook length} of $c$ in $\lambda$. Weyl dimension formula also works for a general staircase $\Lambda$:
\begin{equation}
    m_{\Lambda} = \prod_{1 \leq i < j \leq d}
    \frac{\Lambda_i - \Lambda_j + j - i}{j-i}.
\end{equation}

\subsection{Mixed quantum Schur transform}
\label{sec:Mixed quantum Schur transform}

A generalization of Schur--Weyl duality known as \emph{mixed Schur--Weyl duality} \cite{grinko2022linear,grinko2023gelfandtsetlin}. 
A variant of mixed Schur--Weyl duality partitions the space $(\C^d)\xp{n+1}$ into subspaces that are invariant under the natural $U^{\otimes n} \otimes \bar{U}$ action of $U \in \U{d}$ and the action \eqref{eq:Brauer action} of the matrix algebra $\A^d_{n,1}$. 
Moreover, there exists a unitary basis change $\Usch(n,1) \in \U{d^{n+1}}$ known as \emph{mixed quantum Schur transform} \cite{grinko2023gelfandtsetlin,nguyen2023mixed} that maps the computational basis $\set{\ket{x} \mid x \in [d]^{n+1}}$ of $(\C^d)\xp{n+1}$ to a new basis composed of irreducible representations $\Uirrep{\Lambda}$ and $\Airrep{\Lambda}$ of the aforementioned actions of $\U{d}$ and $\A^d_{n,1}$, respectively:
\begin{equation}
    \label{eq:mSW}
    \Usch(n,1) \colon
    (\C^d)\xp{n+1}
    \to
    \bigoplus_{\Lambda \in \Irr{\A_{n,1}^{d}}} \Airrep{\Lambda} \x \Uirrep{\Lambda}
    \quad\text{where}\quad
    \Airrep{\Lambda} \defeq \C^{\Paths(\Lambda,\Brat)}
    \text{ and }
    \Uirrep{\Lambda} \defeq \C^{\GT(\Lambda)}.
\end{equation}
Here the direct sum ranges over all irreducible representations $\Lambda$ of $\A_{n,1}^d$, see \cref{eq:IrrA}, and $\GT(\Lambda)$ denotes the set of, so-called, Gelfand--Tsetlin patterns of shape $\Lambda$ (they index a natural basis for the $\Lambda$-irrep of the unitary group $\U{d}$, see \cite{grinko2023gelfandtsetlin} for more details).
We will denote the dimensions of $\Airrep{\Lambda}$ and $\Uirrep{\Lambda}$ by
\begin{align}
    d_\Lambda
    &\defeq \dim \Airrep{\Lambda}
    = \abs{\Paths(\Lambda,\Brat)},
    \label{eq:d_lambda} \\
    m_{\Lambda}
    &\defeq \dim \Uirrep{\Lambda}
    = \abs{\GT(\Lambda)}.
    \label{eq:m_lambda}
\end{align}
Note that both irrep dimensions implicitly depend on the local dimension $d$ which is implicit in $\Lambda$.
Moreover, for any path $T \in \Paths(\Lambda,\Brat)$ and for any level $k$ but the last, the dimension $d_{T^{k}}$ coincides with the dimension of the corresponding $T^k$-irrep of the symmetric group $\S_k$.
As Schur transform \eqref{eq:mSW} is a basis transformation, the non-trivial combinatorial identity $d^{n+1} = \sum_{\Lambda \in \Irr{\A_{n,1}^{d}}} d_\Lambda \, m_{\Lambda}$ holds true. 
Any element $\sigma \in \A_{n,1}^d$, such as the generators $\sigma_i$ from \cref{eq:Brauer action}, transform by Schur transform as follows:
\begin{align}
    \Usch(n,1) \; \sigma \; \Usch\ct(n,1)
    &= \bigoplus_{\Lambda \in \Irr{\A_{n,1}^{d}}}
    \psi_\Lambda(\sigma) \x I_{m_{\Lambda}},
    \label{eq:Schur transform of sigma}
\end{align}
where $\psi_\Lambda$ denotes the corresponding $d_\Lambda$-dimensional irrep of $\A_{n,1}^{d}$ defined by \cref{gtbasis:transpositions,gtbasis:contraction,gtbasis:contraction1}.

\begin{figure}
\begin{tikzpicture}[> = latex,
cut/.style = {thick, dashed, blue, rounded corners = 12pt},
  every node/.style = {inner sep = 3pt, anchor = west},
  cut2/.style = {thick, dashed, gray, rounded corners = 12pt},
  every node/.style = {inner sep = 3pt, anchor = west},
 MT/.style = {draw = blue!40, line width = 3pt},
  every node/.style = {inner sep = 1pt}]
  \def\W{2.0cm}
  \def\H{1cm}
  \foreach \i/\p/\q in {0/0/0, 0.7/1/0, 1.5/2/0, 2.5/3/0, 3.5/4/0, 4.5/5/0, 5.7/5/1} {
    \node at (\i*\W,5.1*\H) {$\A^3_{\p,\q}$};
  }
  \draw[dashed] (5*\W,5.4*\H) -- (5*\W,-2.5*\H);
  \node (0)   at (0.0*\W, 0.0*\H) {$\0$};
  \node (1)   at (0.7*\W, 0.0*\H) {$\yd{1}$};
  \node (2)   at (1.5*\W, 1.0*\H) {$\yd{2}$};
  \node (11)  at (1.5*\W,-1.0*\H) {$\yd{1,1}$};
  \node (3)    at (2.5*\W, 2*\H) {$\yd{3}$};
  \node (21)   at (2.5*\W, 0*\H) {$\yd{2,1}$};
  \node (111)  at (2.5*\W,-2*\H) {$\yd{1,1,1}$};
  \node (4)    at (3.5*\W, 3.0*\H) {$\yd{4}$};
  \node (31)   at (3.5*\W, 1.5*\H) {$\yd{3,1}$};
  \node (22)   at (3.5*\W, 0.0*\H) {$\yd{2,2}$};
  \node (211)  at (3.5*\W,-1.5*\H) {$\yd{2,1,1}$};
  \node (1111) at (3.5*\W,-3.2*\H) {$\yd{1,1,1,1}$};
  \node (5)    at (4.5*\W, 4.0*\H) {$\yd{5}$};
  \node (41)   at (4.5*\W, 2.5*\H) {$\yd{4,1}$};
  \node (32)   at (4.5*\W, 1.0*\H) {$\yd{3,2}$};
  \node (311)  at (4.5*\W,-0.5*\H) {$\yd{3,1,1}$};
  \node (221)  at (4.5*\W,-2.0*\H) {$\yd{2,2,1}$};
  \node (2111) at (4.5*\W,-3.2*\H) {$\yd{2,1,1,1}$};
  \begin{scope}[shift={(0,1.2)}]
  \node (5!1)    at (5.7*\W, 3*\H) {$\br{\yd{5},\yd{1}}$};
  \node (4!0)    at (5.7*\W, 0*\H) {$\br{\yd{4},\0}$};
  \node (41!1)   at (5.7*\W, 2*\H) {$\br{\yd{4,1},\yd{1}}$};
  \node (31!0)   at (5.7*\W, -1*\H) {$\br{\yd{3,1},\0}$};
  \node (32!1)   at (5.7*\W, 1*\H) {$\br{\yd{3,2},\yd{1}}$};
  \node (22!0)   at (5.7*\W, -2*\H) {$\br{\yd{2,2},\0}$};
  \node (211!0)  at (5.7*\W,-3*\H) {$\br{\yd{2,1,1},\0}$};
  \end{scope}
  \draw[->] (0) -- (1);
  \draw[->] (1.north east) -- (2);
  \draw[->] (1.south east) -- (11);
  \draw[->] (2.north east) -- (3);
  \draw[->] (2.south east) -- (21);
  \draw[->] (11.north east) -- (21);
  \draw[->] (11.south east) -- (111);
  \draw[->] (3) -- (4);
  \draw[->] (3) -- (31);
  \draw[->] (21) -- (22);
  \draw[->] (21) -- (211);
  \draw[->] (21) -- (31);
  \draw[->] (111) -- (211);
  \draw[->] (111) -- (1111);
  \draw[->] (4) -- (5);
  \draw[->] (4) -- (41);
  \draw[->] (31) -- (41);
  \draw[->] (31) -- (32);
  \draw[->] (31) -- (311);
  \draw[->] (22) -- (32);
  \draw[->] (22) -- (311);
  \draw[->] (211) -- (311);
  \draw[->] (211) -- (221);
  \draw[->] (211) -- (2111);
  \draw[->] (1111) -- (2111);
  \draw[->] (5) -- (5!1.west);
  \draw[->] (5) -- (4!0.north west);
  \draw[->] (41) -- (41!1.west);
  \draw[->] (41) -- (31!0.north west);
  \draw[->] (41) -- (41!1.west);
  \draw[->] (32) -- (32!1.west);
  \draw[->] (32) -- (22!0.north west);
  \draw[->] (32) -- (31!0.west);
  \draw[->] (311) -- (31!0.south west);
  \draw[->] (311) -- (211!0.north west);
  \draw[->] (221) -- (22!0.west);
  \draw[->] (221) -- (211!0.west);
  \draw[->] (2111) -- (211!0.south west);
  \begin{scope}[on background layer]
    \draw[cut, rounded corners = 18pt] (-0.3*\W, 0*\H) -- ++(0*\W, -2.5*\H) -- ++(6.5*\W, 0*\H) -- ++(0*\W, 7.2*\H) -- ++(-6.5*\W, 0*\H)-- ++(0*\W, -4.7*\H);
    \node[text=blue] (B)  at (0.1*\W,-2*\H) {\LARGE $\Brat$};
    \draw[cut2, rounded corners = 18pt] (-0.4*\W, 0*\H) -- ++(0*\W, -3.7*\H) -- ++(6.7*\W, 0*\H) -- ++(0*\W, 8.5*\H) -- ++(-6.7*\W, 0*\H)-- ++(0*\W, -4.7*\H);
    \node[text=gray] (B)  at (0.1*\W,-3.2*\H) {\LARGE $\widetilde{\Brat}$};
  \end{scope}
  \draw [decorate,decoration={brace,amplitude=8pt,mirror,raise=4pt},yshift=0pt,thick]
(6.4*\W, -2.5*\H) -- (6.4*\W, 1.4*\H) node [black,midway,xshift=1.3cm] {$\Lambda=(\lambda,\0)$};
\draw [decorate,decoration={brace,amplitude=8pt,mirror,raise=4pt},yshift=0pt,thick]
(6.4*\W, 1.6*\H) -- (6.4*\W, 4.5*\H) node [black,midway,xshift=1.3cm] {$\Lambda=(\mu,\yd[0.5]{1})$};
\end{tikzpicture}
\caption{The Bratteli diagram $\Brat$ and the extended Bratteli diagram $\widetilde{\Brat}$ associated with the algebra $\A_{5,1}^3$. The extension of $\Brat$ to $\widetilde{\Brat} $ allows the implementation of optimal POVMs as PVMs, see \cref{sec:pgm:dilation}. 
Vertices in the last column correspond to the set $\Irr{\A_{5,1}^3}$ of irreducible representations of $\A_{5,1}^3$. 
A vector space corresponding to an irrep $\Lambda\in \Irr{\A_{5,1}^3}$ is spanned by $\Paths(\Lambda,\Brat)$ the set of all paths terminating at $\Lambda$, \cref{gtbasis:transpositions,gtbasis:contraction,gtbasis:contraction1} describe the action of the generators $\sigma_i$ of the algebra $\A_{5,1}^3$ in corresponding irrep. 
Irreps $\Lambda\in \Irr{\A_{5,1}^3}$ are of two different types: either $\Lambda=(\lambda,\0)$, or $\Lambda = (\mu,\yd[0.5]{1})$. Effects of POVMs implementing optimal measurements for PBT protocols corresponding to successful teleportation outcomes are supported only on irreps corresponding to $\Lambda=(\lambda,\0)$. 
}
\label{fig:A51}
\end{figure}

The mixed quantum Schur transform \eqref{eq:mSW} is a basis change that maps the computational basis into the mixed Schur basis and hence is a unitary operator. The mixed Schur basis is labelled by pairs of vectors $\ket{T,M}$, where $M$ is a Gelfand–Tsetlin pattern and $T=(T^1,\dotsc,T^{n+1}) \in \Paths(\Brat)$ is a path in the Bratteli diagram $\Brat$. Notice that the set of all paths in $\Brat$ is not a Cartesian product of allowed vertices in each level, i.e., $\Paths(\Brat) \neq \Irr{\A_{0,0}^d} \times \cdots \times \Irr{\A_{n,1}^d}$, as consecutive vertices in a valid path $T \in \Paths(\Brat)$ must differ only by one cell. However, the set of all paths in $\Brat$ is naturally embedded in the set $\Irr{\A_{0,0}^d} \times \cdots \times \Irr{\A_{n,1}^d}$. We call the mixed quantum Schur transform based on this embedding a \emph{mixed Schur isometry}
\begin{equation}
    \label{eq:quantum_schur_domain_range}
    U_{\text{Sch}} (n,1) \colon (\C^d)\xp{n+1} \to
    \underbrace{\C^{\Irr{\A_{0,0}^d}} \x \cdots \x \C^{\Irr{\A_{n,1}^d}}  }_T
    \otimes
    \underbrace{\C^{\GT((n,1),d)}}_{\gtM{d-1}},
\end{equation}
to distinguish it from the unitary transformation \eqref{eq:mSW}, where $\gtM{d-1}$ denotes the Gelfand--Tsetlin pattern $M$ without the top $d$-th row (the subscript means that we keep only the bottom $d-1$ rows of $M$).
In \cite{grinko2023gelfandtsetlin}, we described a quantum circuit implementing the mixed Schur isometry, which for any computational basis vector outputs the corresponding superposition of paths $T \in \Paths(\Brat)$ and Gelfand--Tsetlin patterns without top row $\gtM{d-1} \in \GT((n,1),d)$, where $\GT((n,1),d) \defeq \Set{ \gtM{d-1} \given M \in \bigsqcup_{\Lambda \in \Irr{\A_{n,1}^{d}}}\GT(\Lambda) }$. 
The complexity of implementing the mixed quantum Schur transform is $\widetilde{O}(nd^4)$ \cite{nguyen2023mixed,grinko2023gelfandtsetlin}. 
Notice from \cref{eq:quantum_schur_domain_range} that a path $T \in \Paths(\Brat)$ is stored as a tensor product state
\begin{equation}
    \ket{T} = \ket{T^2} \otimes \cdots \otimes  \ket{T^{n+1}},
    \label{eq:T standard encoding}
\end{equation}
where we have suppressed the registers $\ket{T^0}$ and $\ket{T^1}$ since they are one-dimensional ($T^0 = \0$ and $T^1 = \square$ for any path $T$). $T^{n+1}$ represents the irrep label, which we usually refer to as $T^{n+1} = \Lambda$.
We call \cref{eq:T standard encoding} the \emph{standard encoding} of $\ket{T}$.

We will also consider another more space-efficient isometry implementing the mixed quantum Schur transform. Notice that for a given path $T = (T^0, \dotsc, T^{n+1}) \in \Paths(\Lambda,\Brat)$, the vertex $T^i$ is uniquely determined by the previous vertex $T^{i-1}$ and the row number $y_i$ of the added (or removed) box $T^i \setminus T^{i-1}$. The sequence $(y_1,\dotsc,y_{n+1})$ is called the \emph{Yamanouchi word} of path $T$.
Since $y_i \in [d]$ for each $i$, encoding a path $T$ as a sequence of $y_i$ instead of $T^i$ is more space-efficient.
Indeed, each $y_i$ can be stored directly in the $i$-th input qudit without requiring additional memory, while storing each $T^i$ takes $O(d \log n)$ qubits and thus $O(n d \log n)$ auxiliary qubits for $T$ in total.
We call this more efficient encoding of the mixed Schur transform the \emph{Yamanouchi encoding}:
\begin{equation}
    \label{eq:quantum_schur_domain_range_yamanouchi}
    U_{\text{Sch}} (n,1)   \colon
        (\C^d)^{n+1} \to
    \underbrace{\C^{d} \x \cdots \x \C^{d}  }_{(y_1,\dotsc,y_{n+1})}
    \x
    \underbrace{\C^{\Irr{\A_{n,1}^d}}}_\Lambda
    \otimes
    \underbrace{ \C^{\GT((n,1),d)}}_{\gtM{d-1}}.
\end{equation}
More specifically, we store a path $T \in \Paths(\Brat)$ as the following tensor product state:
\begin{equation}
\label{eq:Yamanouchi_encoding}
    \ket{T} = \ket{y_2} \otimes \cdots \otimes \ket{y_n} \otimes \ket{y_{n+1}} \otimes \ket{\Lambda},
\end{equation}
where the first register $\ket{y_1}$ is suppressed since it is one-dimensional ($y_1 = 1$ for any path).
While the Yamanouchi encoding is more space-efficient, it makes certain operations less time-efficient.
For example, to recover the $i$-th vertex $T^i$ of a path $T$, one needs to perform a certain computation on $y_1, \dotsc, y_i$ stored in the first $i$ registers of $\ket{T}$, as opposed to directly looking up the $i$-th register $\ket{T^i}$ in the standard encoding \eqref{eq:T standard encoding}.

\subsection{PBT measurement and figures of merit}
\label{sec:F and p}

In this section, we briefly describe the general setting of PBT and the relevant figures of merit.
In a PBT protocol two parties, Alice and Bob, share a resource state distributed among $n$ quantum systems called ports, each of local dimension $d$.
We denote Alice's ports by $A_1,\dotsc,A_n$ and Bob's ports by $B_1,\dotsc,B_n$.
Alice has an additional $(n+1)$-th register $\AA$ on her side that contains an unknown input state $\ket{\psi}_{\AA} \in \C^d$ that she must teleport to Bob.

The main ingredient of every PBT protocol is a measurement that is performed by Alice on all her registers $A_1,\dotsc,A_n,\AA$.
The rest of the protocol consists of Alice transmitting the measurement outcome to Bob who uses it to locate the teleported state on one of his ports.
The most general measurement in quantum mechanics is known as \textit{positive operator-valued measure} (POVM), which is described mathematically as follows.

\begin{definition}
A positive operator-valued measure (POVM) is a set $E \defeq \set{E_k}_{k=1}^n$ of positive semi-definite matrices $E_k \succeq 0$ that sum to the identity matrix: $\sum_{k=1}^n E_k = I$.
\end{definition}

In dPBT, Alice's measurement has $n$ outcomes and the outcome $k \in [n]$ indicates the port where Bob should find the teleported state $\ket{\psi}$.
Probabilistic protocols, such as mPBT and pPBT, have an additional outcome $k=0$ corresponding to the failure of the teleportation task.

We can describe a general PBT protocol in terms of a quantum channel $\N$:
\begin{equation}
    \N_{\AA \to \BB}(\rho) \defeq
    \sum_{k=1}^n
    \Tr_{A^n \AA B_k'} \sof*{
      \of*{(\sqrt{E_k})_{A^n \AA} \x I_{B^n}}
      \of*{\Psi_{A^nB^n} \x \rho_{\AA}}
      \of*{\sqrt{E_k}_{A^n \AA} \x I_{B^n}}
    },
    \label{eq:N}
\end{equation}
where
$A^n \defeq A_1 \dots A_n$ and
$B^n \defeq B_1 \dots B_n$
denote Alice's and Bob's ports,
$B_k' \defeq B^n \setminus B_k$ denotes all of Bob's ports but the $k$-th,
$\Psi_{A^n B^n}$ is the resource state shared between Alice and Bob,
$\rho_{\AA}$ is the state to be teleported,
and $E_k$ are Alice's POVM operators.
Note that the sum in \cref{eq:N} omits the value $k = 0$, hence $\N$ is trace-decreasing if Alice's POVM contains a failure operator $E_0 \neq 0$.

The following two figures of merit are commonly used for characterizing the performance of PBT.
The \emph{entanglement fidelity} of the protocol is given by
\begin{equation}
    F \defeq \Tr \sof[\Big]{
      \Phi^+_{\BB R} \of{\N_{\AA \to \BB} \x I_R} \sof*{\Phi^+_{\AA R}}
    }
\end{equation}
where $\Phi^+_{\AA R}$ denotes the two-qudit maximally entangled state between $\AA$ and a reference system $R$.
The \emph{average success probability} is given by
\begin{equation}
    \psucc \defeq \Tr \sof[\big]{\N_{\AA \to \BB}(I/d)},
\end{equation}
which can be less than $1$ since in general $\N$ is trace-decreasing. 

Next, we describe the optimal measurements for different types of PBT \cite{studzinski2017port,mozrzymas2018optimal,leditzky2020optimality,strelchuk2023minimal} in the mixed Schur basis (or, equivalently, the Gelfand--Tsetlin basis). The key role in optimal POVM constructions is played by a special type of POVM called \emph{pretty good measurement} (PGM) \cite{PGM}. We describe this \emph{standard PGM} in \Cref{sec:std_pgm}. In \Cref{sec:povms_dpbt_ppbt}, we describe optimal measurements for dPBT and pPBT using the standard PGM from \Cref{sec:std_pgm}.

\subsection{The standard PGM}
\label{sec:std_pgm}

The fundamental POVM for PBT protocols is of a pretty good measurement type. We denote this POVM by $E = \set{E_k}_{k=0}^{n}$ and call it the \emph{standard PGM} for PBT. It is used in mPBT for both types of resource states \cite{strelchuk2023minimal} and it is defined as follows:
\begin{align}
    \label{def:PGM_PBT}
    E_k &\defeq \rho^{-1/2} \rho_k \rho^{-1/2} \text{ for every $k \in [n]$}, \quad \rho \defeq \sum_{k=1}^n \rho_k, \quad \rho_k \defeq \pi^{k} \sigma_n \pi^{-k}, \quad E_0 \defeq I-\sum_{k=1}^{n} E_k,
\end{align}
where $\rho^{-1}$ is the generalized inverse of $\rho$,
\begin{equation}
    \pi \defeq \sigma_{1}\sigma_{2}\dots\sigma_{n-2}\sigma_{n-1} \in \A^d_{n,1}
\end{equation}
is the cyclic shift permutation $(1\,2\,\dotsc\,n)$ on the first $n$ systems, and $\sigma_n \in \A^d_{n,1}$ is the contraction between systems $n$ and $n+1$ (it corresponds to the un-normalized projection onto the maximally entangled state between them). Since $\rho$ commutes with $\A_{n,0}^d$ and thus with $\pi$, the POVM elements $E_k$ for $k \in [n]$ can be written as
\begin{equation}
    E_k = \pi^{k} E_n \pi^{-k}
    \quad\text{where}\quad
    E_n = \rho^{-1/2} \sigma_n \rho^{-1/2},
    \label{eq:Ek}
\end{equation}
hence the standard PGM is group-covariant \cite{Decker2004} with respect to the cyclic group on $n$ elements.

The key to finding an efficient implementation of the measurement \eqref{def:PGM_PBT} is expressing the operators $E_k$ in the mixed Schur basis \eqref{eq:Schur transform of sigma}.
This requires deriving an explicit formula for $\psi_\Lambda(E_k)$ for any irrep $\Lambda \in \Irr{\A_{n,1}^d}$. Let's express $\rho$ operator in the mixed Schur basis first.

Since $\rho$ coincides with the shifted \emph{Jucys--Murphy element} $d-J_{n+1}$ of $\A_{n,1}^d$, its spectrum can be easily computed, see \cite{grinko2023gelfandtsetlin}.
More concretely, $\rho$ is nonzero and diagonal in the Gelfand--Tsetlin basis of each irrep $(\lambda,\0) \in \Irr{\A_{n,1}^d}$:
\begin{equation}
    \label{def:rho}
    \psi_{(\lambda,\0)}(\rho) = \sum_{a \in \AC_d(\lambda)} \of[\big]{d + \cont(a)} \Pi_{\lambda,a}, \quad \Pi_{\lambda,a} \defeq \sum_{S \in \Paths_{n-1}(\lambda,\Brat)} \proj{S \circ (\lambda \cup a) \circ (\lambda,\0)},
\end{equation}
and $\psi_{(\mu,\square)}(\rho) = 0$ for every $(\mu,\square) \in \Irr{\A_{n,1}^d}$. In particular, $E_0 E_k = 0$ for every $k \in [n]$ or, more precisely, we have for every $k \in [n]$:
\begin{align}
    \psi_{(\mu,\square)}(E_k) &= 0 \text{ for every } (\mu,\square) \in \Irr{\A_{n,1}^d}, \qquad
    \psi_{(\lambda,\0)}(E_0) = 0 \text{ for every } (\lambda,\0) \in \Irr{\A_{n,1}^d}.
\end{align}
Next, due to \cref{gtbasis:contraction1} the generator $\sigma_n$ in the Gelfand--Tsetlin basis of any irrep $(\lambda,\0) \in \Irr{\A_{n,1}^d}$ can be written as
\begin{equation}
    \label{def:sigma_irrep}
    \psi_{(\lambda,\0)}(\sigma_n) = \sum_{S \in \Paths_{n-1}(\lambda,\Brat)} \proj{v_{S,\lambda}}, \quad \text{where} \quad \ket{v_{S,\lambda}} \defeq \sum_{a \in \AC_d(\lambda)} \sqrt{\frac{m_{\lambda \cup a}}{ m_{\lambda}}} \, \ket{S \circ (\lambda \cup a) \circ (\lambda,\0)}
\end{equation}
and $\psi_{(\mu,\square)}(\sigma_n) = 0$ for every $(\mu,\square) \in \Irr{\A_{n,1}^d}$. Also note that due to \cite[Lemma B.1]{grinko2023gelfandtsetlin} for every $\lambda \pt_d n-1$ and $a \in \AC(\lambda)$ we have 
\begin{equation}
    \label{eq:cont_ratios}
    n \cdot \frac{d_{\lambda}}{m_{\lambda}} \cdot \frac{m_{\lambda \cup a}}{d_{\lambda \cup a}} = d + \cont(a).
\end{equation}
Using \cref{eq:cont_ratios,def:rho,def:sigma_irrep}, we can rewrite $E_n$ from \cref{eq:Ek} in the Gelfand--Tsetlin basis of irrep $\Lambda = (\lambda,\0) \in \Irr{\A_{n,1}^d}$ as follows:
\begin{align}
    \label{eq:pbt_main}
    \psi_{(\lambda,\0)}(E_n) = \psi_{(\lambda,\0)}(\rho^{-1/2} \sigma_n \rho^{-1/2}) &= \sum_{S \in \Paths_{n-1}(\lambda,\Brat)} \of*{\psi_{(\lambda,\0)}(\rho)}^{-1/2}\, \proj{v_{S,\lambda}}\, \of*{\psi_{(\lambda,\0)}(\rho)}^{-1/2} \\
    &= \sum_{S \in \Paths_{n-1}(\lambda,\Brat)} \proj{w_{S,\lambda}},
\end{align}
where for $\lambda \pt_d n-1$ and $S \in \Paths_{n-1}(\lambda,\Brat)$ we defined a vector $\ket{w_{S,\lambda}}$ in the irrep $(\lambda,\0) \in \Irr{\A_{n,1}^d}$ as
\begin{equation}
    \ket{w_{S,\lambda}} \defeq \sum_{a \in \AC_d(\lambda)} \sqrt{\frac{d_{\lambda \cup a}}{n \cdot d_{\lambda}}} \ket{S \circ \of{\lambda \cup a} \circ (\lambda,\0)}.
\end{equation}
Summarizing everything above, we can write the standard PGM $E$ in the Gelfand--Tsetlin basis for every irrep $\Lambda \in \Irr{\A_{n,1}^d}$ and every $k \in [n]$ as follows:
\begin{align}
    \label{def:PGM_PBT_schur}
    \psi_{\Lambda}(E_0) &=
    \begin{cases}
        I &
        \text{if $\Lambda = (\mu,\square)$}, \\
        0 &
        \text{if $\Lambda = (\lambda,\0)$},
    \end{cases} \quad
    \psi_{\Lambda}(E_k) =
    \begin{cases}
        0 &
        \text{if $\Lambda = (\mu,\square)$}, \\
        \psi_{(\lambda,\0)}(\pi^{k}E_n\pi^{-k}) &
        \text{if $\Lambda = (\lambda,\0)$},
    \end{cases} \\ \nonumber
    \psi_{(\lambda,\0)}(E_n) &= \sum_{S \in \Paths_{n-1}(\lambda,\Brat)} \proj{w_{S,\lambda}}, \quad \ket{w_{S,\lambda}} = \sum_{a \in \AC_d(\lambda)} \sqrt{\frac{d_{\lambda \cup a}}{n \cdot d_{\lambda}}} \ket{S \circ \of{\lambda \cup a} \circ (\lambda,\0)}.
\end{align}

\subsection{POVMs for deterministic and probabilistic PBT}\label{sec:povms_dpbt_ppbt}

We are now ready to describe the optimal POVMs for dPBT and pPBT from \cite{studzinski2017port,mozrzymas2018optimal,leditzky2020optimality} in the Gelfand--Tsetlin basis. They are closely related to the standard PGM $E$ from \cref{def:PGM_PBT}.

\subsubsection{POVM for deterministic PBT}

In the case of deterministic PBT there is no failure outcome corresponding to $k=0$. The optimal POVMs for optimized and EPR resource states turn out to be the same \cite{leditzky2020optimality,mozrzymas2018optimal} and are closely related to the standard PGM $E$ from \cref{def:PGM_PBT,def:PGM_PBT_schur}. This POVM $E^{\hollowstar} = \set{E^{\hollowstar}_k}_{k=1}^{n}$ is defined as follows:
\begin{align}
    \label{def:dpbt_povm}
    E^{\hollowstar}_k \defeq E_k + \frac{E_0}{n},
\end{align}
and in the Gelfand--Tsetlin basis for every irrep $\Lambda \in \Irr{\A_{n,1}^d}$ and every $k \in [n]$ it is given as
\begin{align}
    \label{def:dpbt_povm_schur}
    \psi_{\Lambda}(E^{\hollowstar}_k) &=
    \begin{cases}
        \dfrac{I}{n} &
        \text{if $\Lambda = (\mu,\square)$ where $\mu \pt_{d-1} n$}, \\
       \psi_{(\lambda,\0)}(E_k) &
        \text{if $\Lambda = (\lambda,\0)$ where $\lambda \pt_d n-1$},
    \end{cases} 
\end{align}
where $\psi_{(\lambda,\0)}(E_k)$ are given in \cref{def:PGM_PBT_schur}.

\subsubsection{Generic PBT measurements}

To describe measurements for pPBT, we first need to explain how one can parameterize \emph{generic} PBT POVMs. Thanks to \cite[Propositions~1.7 and~3.4]{christandl2021asymptotic}, we can motivate the following definition: a \emph{generic} PBT measurement $E^{\star} = \set{E^{\star}_k}_{k=0}^{n}$ is defined in the Gelfand--Tsetlin basis as follows for $k \in [n]$:
\begin{align}
    \label{def:povm_generic}
    E^{\star}_k \defeq \sqrt{G} E^{\hollowstar}_k \sqrt{G}, \quad E^{\star}_0 \defeq I - G, \quad I \succeq G \succeq 0
\end{align}
for some choice of $G \in \A_{n,1}^d$ such that $G$ commutes with $\A_{n,0}^d$. Equivalently, $G \in \A_{n,1}^d$ is a diagonal matrix in the Gelfand--Tsetlin basis:
\begin{align}
    \psi_{\Lambda}(G) \defeq
    \begin{cases}
        g_{\mu} I &
        \text{if $\Lambda = (\mu,\square)$ where $\mu \pt_{d-1} n$}, \\
        \sum_{a \in \AC_d(\lambda)} g_{\lambda,a} \Pi_{\lambda,a} &
        \text{if $\Lambda = (\lambda,\0)$ where $\lambda \pt_d n-1$},
    \end{cases}
    \label{def:G_lambda_povm_generic}
\end{align}
where the projectors $\Pi_{\lambda,a}$ are defined in \cref{def:rho}. Diagonal entries of matrices $\psi_{\Lambda}(G)$
\begin{align}
    g_{\lambda,a} &\defeq \bra{S \circ (\lambda \cup a) \circ (\lambda,\0)} \psi_{(\lambda,\0)}(G) \ket{S \circ (\lambda \cup a) \circ (\lambda,\0)}, \\
    g_\mu &\defeq \bra{T \circ (\mu,\square)} \psi_{(\mu,\square)}(G) \ket{T \circ (\mu,\square)}
\end{align} for every $\lambda \pt_{d} n-1, \, S \in \Paths_{n-1}(\lambda,\Brat), \, a \in \AC_{d}(\lambda)$ and for every $\mu \pt_{d-1} n, \, T \in \Paths_{n}(\mu,\Brat)$ respectively satisfy 
\begin{equation}\label{eq:diag_G_condition}
    1 \geq g_{\mu} \geq 0, \qquad 1 \geq g_{\lambda,a} \geq 0.
\end{equation} 

In particular, the POVM $E^{\hollowstar}$ of dPBT corresponds to $G = I$, i.e. $\psi_{\Lambda}(G) = I$ for every $\Lambda \in \Irr{\A_{n,1}^d}$, and the standard PGM $E$ corresponds to
\begin{align}
    \psi_{\Lambda}(G) =
    \begin{cases}
        0 &
        \text{if $\Lambda = (\mu,\square)$ where $\mu \pt_{d-1} n$}, \\
        I &
        \text{if $\Lambda = (\lambda,\0)$ where $\lambda \pt_d n-1$}.
    \end{cases}
    \label{def:G_lambda_std_pgm}
\end{align}

\subsubsection{POVMs for probabilistic PBT}

The optimal pPBT measurement for optimized resource state turns out to be equal to the standard PGM $E$ from \cref{def:PGM_PBT,def:PGM_PBT_schur}, see \cite{studzinski2017port}.\footnote{Note that the statements regarding optimal POVMs for PBT protocols with optimized resource states in the original papers \cite{studzinski2017port,mozrzymas2018optimal} refer to so-called "dressed" effect operators, not the true effects $E^{\star}_i$.} 

In contrast, the optimal measurement for pPBT with EPR resource state is not the standard PGM $E$, but it is still closely related. According to \cite{studzinski2017port}, the optimal POVM for pPBT with EPR resource state corresponds to the POVM $E^{\star}$ from \cref{def:povm_generic} with the operator $G \in \A_{n,1}^d$ defined as
\begin{align}\label{def:G_lambda_ppbt}
    \psi_{\Lambda}(G) \defeq
    \begin{cases}
        0 &
        \text{if $\Lambda = (\mu,\square)$ where $\mu \pt_{d-1} n$}, \\
        \dfrac{\psi_{(\lambda,\0)}(\rho)}{d + \lambda_1} &
        \text{if $\Lambda = (\lambda,\0)$ where $\lambda \pt_d n-1$},
    \end{cases}
\end{align}
where $\psi_{\Lambda}(\rho)$ is given in \cref{def:rho}. In other words, the matrix $\psi_{\Lambda}(G)$ is diagonal for every $\Lambda \in \Irr{\A_{n,1}^d}$ and its diagonal entries corresponding to \cref{def:G_lambda_povm_generic} can be written as 
\begin{equation}\label{def:G_lambda_a_ppbt}
    g_{\lambda,a} = \frac{d + \cont(a)}{d + \lambda_1}, \qquad g_{\mu} = 0.
\end{equation}
Note that $\lambda_1$ is the highest possible content among $a \in \AC_d(\lambda)$, so \cref{eq:diag_G_condition} is satisfied.

\subsection{Naimark dilations and implementation of measurements}\label{sec:easy_facts}

In this section, we state some facts about projection-valued measures, which we will need in the later sections to work with \emph{Naimark dilations} of PBT POVMs. Naimark dilation is a realization of a given POVM as a projection-valued measure in a larger, so-called \emph{dilated}, Hilbert space. First, recall that

\begin{definition}
    $\Pi \defeq \set{\Pi_i}_{i=1}^n$ is a projection-valued measure (PVM) on a given Hilbert space $\mathcal{H}$ if $\sum_{i=1}^n \Pi_i = I_{\mathcal{H}}$ and $\Pi_i$ is an orthogonal projection, i.e. $\Pi_i^2 = \Pi_i = \Pi_i^\dagger \succeq 0$, for every $i \in [n]$.
\end{definition}

\begin{remark} If $\set{\Pi_i}_{i=1}^n$ is a PVM then for every pair $i,j \in [n], \, i \neq j$ the projectors $\Pi_i$ and      $\Pi_j$ are mutually orthogonal, i.e. $\Pi_i \Pi_j = \delta_{i,j} \Pi_i$. Indeed, take a vector $\ket{v} \in \mathrm{im}(\Pi_i) \subseteq \mathcal{H}$, and since $\sum_{j} \Pi_j = I_{\mathcal{H}}$ we have $1 + \sum_{j \neq i} \bra{v} \Pi_j \ket{v} = 1$.
Therefore, since each $\Pi_j$ is positive semidefinite it must be $\bra{v} \Pi_j \ket{v} = 0$ for every $j \neq i$, which implies $\Pi_i \Pi_j = \delta_{i,j} \Pi_i$. The converse is also true: if every pair of operators in a given POVM are mutually orthogonal then this POVM is a PVM.
\end{remark}

We can now describe an easy example of a Naimark dilation, which we will need later to implement the measurement for dPBT.

\begin{lemma}\label{lem:dpbt_dilation}
    Let $E \defeq \set{E_i}_{i=1}^n$ be a POMV on a Hilbert space $\mathcal{H}$ such that $E_i = \frac{I_{\mathcal{H}}}{n}$ for every $i \in n$. Then its dilation on the Hilbert space $\hat{\mathcal{H}} \defeq \C^n \otimes \mathcal{H}$ is given by $\Pi_i = \proj{i} \otimes I_{\mathcal{H}}$ and isometry $V : \mathcal{H} \rightarrow \hat{\mathcal{H}}$ defined as $V\ket{\psi} = \ket{+} \otimes \ket{\psi}$, where $\ket{+} = \frac{1}{\sqrt{n}} \sum_{i=1}^n \ket{i}$. 
\end{lemma}
\begin{proof}
    The isometry $V : \ket{\psi} \mapsto \ket{+} \otimes \ket{\psi}$ means that the input density matrix on $\mathcal{H}$ is embedded in $\hat{\mathcal{H}}$ as $\rho \otimes \proj{+}$. It is easy to see that $\Tr[(\rho \otimes \proj{+}) \Pi_i] = \Tr[\rho E_i]$.
\end{proof}


Now assume that for a certain POVM with $n+1$ outcomes $k \in \{0,1,\dotsc,n\}$ we managed to dilate outcomes $k \in [n]$ to some PVM $\Pi$. Next lemma explains how to dilate a ``leftover'' POVM $E$, related to $\Pi$ in a certain way. This result will be used in \cref{sec:pPBT_circuits}.

\begin{lemma}\label{lem:ppbt_dilation}
    Let $\Pi \defeq \set{\Pi_i}_{i=1}^n$ be a PVM on a Hilbert space $\mathcal{H}$ and let $G$ be a positive semidefinite operator on $\mathcal{H}$ such that $G \preceq I_\mathcal{H}$. 
    Suppose that there is also a POVM $E \defeq \set{E_i}_{i=0}^n$ on $\mathcal{H}$ defined as 
    \begin{align}
        E_i &\defeq \sqrt{G} \, \Pi_i \sqrt{G} \text{ for every } i \in [n], \qquad E_0 \defeq I_\mathcal{H} - G.
    \end{align}
    Then one can dilate the POVM $E$ on $\mathcal{H}$ to a PVM $\hat{\Pi} \defeq \set{\hat{\Pi}_i}_{i=0}^n$ on $\hat{\mathcal{H}} \defeq \C^2 \otimes \mathcal{H} $ via isometric embedding $\ket{\psi} \mapsto \ket{0} \otimes \ket{\psi}$ as follows:
    \begin{align}
        \hat{\Pi}_i &\defeq
        \begin{pNiceMatrix}
            \sqrt{G}\, \Pi_i \sqrt{G} & -\sqrt{G}\, \Pi_i \sqrt{I_\mathcal{H} - G}  \\
            - \sqrt{I_\mathcal{H} - G}\, \Pi_i \sqrt{G} & \sqrt{I_\mathcal{H} - G}\, \Pi_i \sqrt{I_\mathcal{H} - G}
        \end{pNiceMatrix} \quad
        \forall\, i \in [n], \quad 
        \hat{\Pi}_0 \defeq
        \begin{pNiceMatrix}
            I_\mathcal{H} - G & \sqrt{G} \sqrt{I_\mathcal{H} - G}  \\
            \sqrt{G} \sqrt{I_\mathcal{H} - G} & G 
        \end{pNiceMatrix}.
    \end{align}
\end{lemma}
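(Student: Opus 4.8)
The plan is to verify directly that the proposed family $\hat\Pi = \set{\hat\Pi_i}_{i=0}^n$ is a PVM on $\hat{\mathcal{H}}$ and that it dilates the POVM $E$. I would dispatch the dilation property first, since it is the easy half. If $V\colon \mathcal{H}\to\hat{\mathcal{H}}$ denotes the isometric embedding $V\ket\psi = \ket 0\otimes\ket\psi$, then $V^\dagger\hat\Pi_i V$ is exactly the top-left block of $\hat\Pi_i$, which by inspection equals $\sqrt G\,\Pi_i\sqrt G = E_i$ for $i\in[n]$ and $I_\mathcal{H}-G = E_0$ for $i=0$. Along the way one records that $E$ really is a POVM, since $\sum_{i=1}^n E_i = \sqrt G\,\of[\big]{\sum_{i=1}^n\Pi_i}\sqrt G = G$, so that $\sum_{i=0}^n E_i = I_\mathcal{H}$.

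The key observation I would use to make the PVM property transparent is to exhibit $\hat\Pi$ as a single, obvious PVM conjugated by a unitary. Write $P \defeq \sqrt G$ and $Q \defeq \sqrt{I_\mathcal{H}-G}$; as functions of the same operator $G$ they commute, $PQ = QP$, and they satisfy $P^2 + Q^2 = I_\mathcal{H}$. Introduce the block operator
\[
    U \defeq \begin{pmatrix} P & Q \\ -Q & P \end{pmatrix} \quad\text{on}\quad \hat{\mathcal{H}} = \C^2\otimes\mathcal{H}.
\]
Using only $P^2+Q^2 = I_\mathcal{H}$ and $PQ = QP$ one checks $U^\dagger U = U U^\dagger = I_{\hat{\mathcal{H}}}$, so $U$ is unitary. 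A short block multiplication then yields
\[
    \hat\Pi_i = U\,(\proj{0}\otimes\Pi_i)\,U^\dagger \quad (i\in[n]), \qquad \hat\Pi_0 = U\,(\proj{1}\otimes I_\mathcal{H})\,U^\dagger .
\]

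With this in hand the conclusion is immediate. The family $\set{\proj 0\otimes\Pi_i}_{i=1}^n \cup \set{\proj 1\otimes I_\mathcal{H}}$ is manifestly a PVM on $\hat{\mathcal{H}}$: each member is an orthogonal projection, any two are mutually orthogonal (the $i\in[n]$ ones because the $\Pi_i$ are mutually orthogonal by the preceding remark, and the $i=0$ one because it is supported on the orthogonal $\ket 1$ sector), and they sum to $\proj0\otimes I_\mathcal{H} + \proj1\otimes I_\mathcal{H} = I_{\hat{\mathcal{H}}}$. Since conjugation by the unitary $U$ preserves orthogonal projections and their sum, $\hat\Pi$ is a PVM, and together with the dilation identity this proves the lemma.

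The one genuinely non-automatic point — and what I expect to be the only thing to watch — is that $G$, and hence $P$ and $Q$, need not commute with the projectors $\Pi_i$, so products such as $P\Pi_i Q$ cannot be simplified by moving $P,Q$ through $\Pi_i$. The whole argument is therefore arranged so that the sole algebraic inputs are $P^2+Q^2 = I_\mathcal{H}$, $PQ = QP$, $\Pi_i^2 = \Pi_i$, and $\sum_i\Pi_i = I_\mathcal{H}$; the unitary-conjugation repackaging confines all the $P$–$Q$–$\Pi_i$ bookkeeping to the routine checks that $U$ is unitary and to the single block expansion above. If one prefers to avoid guessing $U$, the same facts give a direct verification: writing $\hat\Pi_i = N\Pi_i N^\dagger$ with the isometry $N \defeq \begin{pmatrix} P \\ -Q\end{pmatrix}$ (so that $N^\dagger N = P^2+Q^2 = I_\mathcal{H}$) gives $\hat\Pi_i\hat\Pi_j = N\Pi_i (N^\dagger N)\Pi_j N^\dagger = \delta_{ij}\hat\Pi_i$, establishing idempotency and mutual orthogonality at once, while $\hat\Pi_0 = MM^\dagger$ with $M \defeq \begin{pmatrix} Q \\ P\end{pmatrix}$ together with $NN^\dagger + MM^\dagger = I_{\hat{\mathcal{H}}}$ supplies completeness.
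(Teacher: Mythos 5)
Your proof is correct and takes essentially the same route as the paper: both exhibit $\hat\Pi$ as the conjugation of the manifest PVM $\set{\proj{0}\otimes\Pi_i}_{i\in[n]}\cup\set{\proj{1}\otimes I_\mathcal{H}}$ by a block unitary built from $\sqrt{G}$ and $\sqrt{I_\mathcal{H}-G}$, the only difference being that you spell out the unitarity check, the block multiplication, and the restriction to the $\ket{0}$ sector, which the paper leaves as ``easy to see.'' Incidentally, your $U$ is the adjoint of the paper's, and it is your sign convention that reproduces the minus signs in the stated $\hat\Pi_i$ exactly.
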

\begin{proof}
    It is easy to see that the dilated PVM $\set{\hat{\Pi}_i}_{i=0}^n$ on $\hat{\mathcal{H}}$ has the form
    \begin{equation}
        \hat{\Pi}_i = U \begin{pNiceMatrix} \Pi_i & 0  \\ 0 & 0 \end{pNiceMatrix} U^\dagger \quad \forall\, i\in[n],\, \text{ and }  \, \hat{\Pi}_0 = U \begin{pNiceMatrix} 0 & 0  \\ 0 & I_\mathcal{H} \end{pNiceMatrix} U^\dagger,
    \end{equation}
    where the unitary $U$ is defined as 
    \begin{equation}
        U \defeq
        \begin{pNiceMatrix}
            \sqrt{G} & -\sqrt{I_\mathcal{H} - G}  \\
            \sqrt{I_\mathcal{H} - G} & \sqrt{G}
        \end{pNiceMatrix}.
    \end{equation}
    It is easy to see that $\hat{\Pi}_i^2=\hat{\Pi}_i=\hat{\Pi}_i^\dagger$ for every $i$ and $\sum_{i=0}^{n} \hat{\Pi}_i = I_{\hat{\mathcal{H}}}$. Therefore $\hat{\Pi}$ is indeed a PVM on $\hat{\mathcal{H}}$. This PVM manifestly acts on vectors $\ket{0} \otimes \ket{\psi} \in \hat{\mathcal{H}}$ in the same way as the POVM $E$ acts on $\ket{\psi} \in \mathcal{H}$, so it is indeed a Naimark's dilation of the POVM $E$.
\end{proof}



Now we explain how to implement a PVM of a certain form, which we will later use in \cref{sec:circuit_for_pgm}, assuming one can efficiently implement unitaries which define this PVM. Namely, assume that we have a PVM $\Pi = \set{\Pi_k}_{k=0}^n$ on $\mathcal{H} \defeq \mathcal{H}_1 \otimes \mathcal{H}_2$ for some Hilbert spaces $\mathcal{H}_1,\mathcal{H}_2$ such that
\begin{align}
    \label{def:general_Pi_k}
    \Pi_k &\defeq U_k \Pi_n U_k^\dagger \text{ for every } k \in [n] , \qquad
    \Pi_n \defeq I_{\mathcal{H}_1} \otimes \of*{W \proj{0} W^\dagger}_{\mathcal{H}_2}, \qquad \Pi_0 = I_{\mathcal{H}} - \sum_{k=1}^n \Pi_k
\end{align}
where $U_n \defeq I_{\mathcal{H}}$, and we assume that all $U_k$ are some easy-to-implement unitaries on $\mathcal{H} = \mathcal{H}_1 \otimes \mathcal{H}_2$, $W$ is some easy-to-implement unitary on $\mathcal{H}_2$ and $\ket{0}$ is a computational basis vector on $\mathcal{H}_2$.

We can implement the PVM $\Pi$ as follows:
\begin{itemize}
    \item Define a unitary $V$ on $\mathcal{H}$ as
        \begin{equation}\label{def:general_V}
            V \defeq \sum_{k=0}^{n} \omega^{k}_{n+1} \Pi_k,
        \end{equation}
    where $\omega_{n+1}$ is root of unity of order $n+1$. Observe, that we can implement the unitary $V$ efficiently thanks to our assumptions via the following circuit on $\mathcal{H} = \mathcal{H}_1 \otimes \mathcal{H}_2$:
    \begin{equation*}
        \begin{adjustbox}{width=0.08\textwidth}
        \begin{quantikz}
        &\gate[2]{V} & \\
        & &
        \end{quantikz}
        \end{adjustbox}
        \hspace{0.1em}=
        \begin{adjustbox}{width=0.8\textwidth}
        \begin{quantikz}
        &\gate[2]{U^\dagger_1} &                  &                     &          & \gate[2]{U_1}  & \HD &\gate[2]{U^\dagger_{n-1}} &                  &                           &          &\gate[2]{U_{n-1}} &                   &                       &          & \\
        &                      & \gate{W^\dagger} & \gate{\omega_{n+1}} & \gate{W} &                & \HD &                          & \gate{W^\dagger} & \gate{\omega^{n-1}_{n+1}} & \gate{W} &                  &  \gate{W^\dagger} & \gate{\omega^n_{n+1}} & \gate{W} &
        \end{quantikz}
        \end{adjustbox}
    \end{equation*}
    The gate \begin{quantikz}[scale=0.6]\gate{\omega^{k}_{n+1}}\end{quantikz} represents the operator $\omega^{k}_{n+1} \proj{0} + \of*{I-\proj{0}}$ on $\mathcal{H}_2$. We also used $U_n = I_{\mathcal{H}}$.
    
    \item Note that implementing $V^i$ is also easy:
    \begin{equation*}
        \begin{adjustbox}{width=0.08\textwidth}
        \begin{quantikz}
        &\gate[2]{V^i} & \\
        & &
        \end{quantikz}
        \end{adjustbox}
        \hspace{0.1em}=
        \begin{adjustbox}{width=0.8\textwidth}
        \begin{quantikz}
        &\gate[2]{U^\dagger_1} &                  &                       &          & \gate[2]{U_1}  & \HD &\gate[2]{U^\dagger_{n-1}} &                  &                              &          &\gate[2]{U_{n-1}} &                   &                          &          & \\
        &                      & \gate{W^\dagger} & \gate{\omega^i_{n+1}} & \gate{W} &                & \HD &                          & \gate{W^\dagger} & \gate{\omega^{(n-1)i}_{n+1}} & \gate{W} &                  &  \gate{W^\dagger} & \gate{\omega^{ni}_{n+1}} & \gate{W} &
        \end{quantikz}
        \end{adjustbox}
    \end{equation*}
    
    \item Now we can run the phase estimation circuit to measure a given state $\ket{\psi} \in \mathcal{H}$ with the PVM $\Pi$:
    \begin{equation*}
    \begin{quantikz}
        \lstick{$\ket{0}$}  & \gate{\mathrm{QFT}_{n+1}} & \ctrl{1} \wire[l][1]["i"{above,pos=0.2}]{a} & \gate{\mathrm{QFT}_{n+1}^\dagger} & \meter{} & \setwiretype{c} \rstick{$k$} \\
        \lstick{$\ket{\psi}$} & & \gate{V^i} & & &
    \end{quantikz}
    \end{equation*}
    Indeed, this circuit implements the following unitary evolution for every input state $\ket{\psi} \in \mathcal{H}$:
    \begin{align}
        \ket{\psi}\ket{0} \mapsto  \sum_{i=0}^n \tfrac{1}{\sqrt{n+1}} \ket{\psi} \ket{i}  \mapsto \sum_{i=0}^n \tfrac{1}{\sqrt{n+1}} V^i\ket{\psi} \ket{i} &= \sum_{i=0}^n \tfrac{1}{\sqrt{n+1}} \sum_{k=0}^{n} \omega^{ik}_{n+1} \Pi_k\ket{\psi} \ket{i} \nonumber \\
        &= \sum_{k=0}^{n} \Pi_k\ket{\psi} \sum_{i=0}^n \tfrac{\omega^{ik}_{n+1} }{\sqrt{n+1}} \ket{i} \mapsto \sum_{k=0}^{n} \Pi_k \ket{\psi} \ket{k},
    \end{align}
    so measuring $k$ returns the projected state $\frac{\Pi_k \ket{\psi}}{\norm{\Pi_k \ket{\psi}}}$ according to the Born rule.
\end{itemize}

\section{Efficient quantum algorithms for PBT in standard encoding}
\label{sec:std_enc}

\subsection{Naimark's dilation of the standard PGM}
\label{sec:pgm:dilation}

Before presenting our circuits, we need to explain how to dilate the POVM $E$ from \cref{def:PGM_PBT} to a projective measurement $\Pi$. In principle, that is possible for any POVM due to \emph{Naimark's dilation} theorem. However, in general it is not obvious how to achieve a conceptually simple dilation that can be efficiently implemented. We first explain how to construct such dilation explicitly, and then in \cref{sec:circuit_for_pgm} present an efficient circuit for $E$.

Recall that the Bratteli diagram of the symmetric group is the \emph{Young lattice}\footnote{Sometimes Young lattice is also called \emph{Young graph}. Young graph usually includes all possible partitions $\nu \pt k$ at a given level $k$. However, our Bratteli diagram $\Brat$ up to level $n$ coincides with a subset of the Young graph: only the partitions $\nu \pt k$ with $\len(\nu) \leq d$ are included in our $\Brat$.} \cite{Sagan}, and the following identity holds for every $\lambda \pt n-1$ \cite{stanley2013algebraic} in the Young lattice:
\begin{equation}
    n \cdot d_\lambda = \sum_{a \in \AC(\lambda)}  d_{\lambda \cup a},
    \label{pbt:naimrak_sym_induction}
\end{equation}
where the notation $\lambda \cup a$ denotes the Young diagram in the Young lattice obtained by adding a box $a$ to $\lambda$, and $d_\lambda$ is the dimension of the symmetric group irrep $\lambda$.\footnote{The dimension $d_\lambda$ can be both understood as the number of paths from the root to a vertex $\lambda$ in the Young lattice as well as in the Bratteli diagram of $\A_{n,0}^d$, since up to level $n$ the Bratteli diagram is a subset of the full Young lattice and the procedure of adding a cell is monotonic with respect to the number of rows in $\lambda$ along a given path in the Young lattice.}

The main observation of this section is that for a Young diagram $\lambda \pt n-1$, we have $\AC_d(\lambda) = \AC(\lambda)$ if $\lambda_d = 0$ and $\AC_d(\lambda) \neq \AC(\lambda)$ if $\lambda_d > 0$. In particular, when $\lambda_d = 0$ this implies that
\begin{equation}
    \norm{\ket{w_{S,\lambda}}}^2 = \sum_{a \in \AC_d(\lambda)} \frac{d_{\lambda \cup a}}{n \cdot d_\lambda} = \sum_{a \in \AC(\lambda)} \frac{d_{\lambda \cup a}}{n \cdot d_\lambda} = 1,
\end{equation}
so $\psi_{(\lambda,\0)}(E_n)$ is an orthogonal projector. Since the cyclic shift $\pi$ acts unitarily, all $\psi_{(\lambda,\0)}(E_i)$ are orthogonal projectors as well. Because $E$ provides a resolution of the identity in the irreducible representation $(\lambda,\0) \in \Irr{\A_{n,1}^d}$, the POVM $E$ restricted to the irreducible representation $(\lambda,\0)$ with $\lambda_d = 0$ is actually a PVM there. We will replace $\psi_{(\lambda,\0)}(E_n)$ by $\psi_{(\lambda,\0)}(\Pi_n)$ from now on to indicate that $E$ is actually a PVM on $(\lambda,\0) \in \Irr{\A_{n,1}^d}$.

However, for the irreps $\lambda$ with $\lambda_d > 0$ the POVM $E^\lambda$ is not a PVM because $\AC(\lambda) = \AC_d(\lambda) \sqcup \set{(d+1,1)}$ and the vectors $\ket{w_{S,\lambda}}$ are not normalized anymore:
\begin{equation}\label{eq:pbt_norm_<1}
    \norm{\ket{w_{S,\lambda}}}^2 = \sum_{a \in \AC_d(\lambda)} \frac{d_{\lambda \cup a}}{n \cdot d_\lambda} = \of[\bigg]{\sum_{a \in \AC(\lambda)} \frac{d_{\lambda \cup a}}{n \cdot d_\lambda}} - \frac{d_{\lambda \cup (d+1,1)}}{n \cdot d_\lambda} = 1 - \frac{d_{\lambda \cup (d+1,1)}}{n \cdot d_\lambda} < 1,
\end{equation}
where $\lambda \cup (d+1,1)$ denotes the Young diagram obtained from $\lambda$ by adding a cell with coordinates $(d+1,1)$, so that $\len \of*{\lambda \cup (d+1,1)}=d+1$. The vertex corresponding to this Young diagram does not exist in the Bratteli diagram $\Brat$. Fortunately, \cref{eq:pbt_norm_<1} suggests immediately how to construct a Naimark's dilation $\Pi^\lambda$ of $E^\lambda$ for $\lambda$ with $\lambda_d > 0$. For this construction, one needs to modify the Bratteli diagram $\Brat$ by adding vertices to each level $\geq n$. Then the set of all paths in this modified Bratteli diagram will define a new basis for the Naimark dilated Hilbert space.\footnote{We have provided a \emph{Wolfram Mathematica} notebook implementing our construction on \href{https://github.com/dgrinko/walledbrauer-gtbasis}{GitHub}.}

More concretely, to each level $k \leq p$ of the Bratteli diagram $\Brat$ we add all possible vertices labelled by all Young diagrams $\nu \pt k$ such that $\nu_{d+1} = 1$. An edge between a pair of Young diagrams in two consecutive levels is added if the latter diagram can be obtained by adding a cell to the previous one. This procedure ensures that all the levels up to $n$ of the new Bratteli diagram form a subset of the Young lattice, such that for every vertex at level $n$ the irrep dimensions still satisfy \cref{pbt:naimrak_sym_induction}. We call the new Bratteli diagram $\widetilde{\Brat}$. The basis for the Naimark dilated Hilbert space for the irrep $\Lambda \in \Irr{\A_{n,1}^{d}}$ consists of all paths from the root to the leaf $\Lambda$ in this modified Bratteli diagram, which we denote by $\Paths(\Lambda,\widetilde{\Brat})$. Formally, for every $\Lambda \in \Irr{\A_{n,1}^{d}}$, if $\Lambda = (\lambda,\0) $ for $\lambda \pt n-1, \, \len(\lambda) \geq d$ we define
\begin{equation}
    \Paths(\Lambda,\widetilde{\Brat}) \defeq
        \set*{ T = (T^{0},T^{1},\dotsc,T^{n},\Lambda) \in \Irr{\A_{1,0}^{d+1}} \times \dotsb \times \Irr{\A_{n,0}^{d+1}} \times \Irr{\A_{n,1}^{d}} \mid T \text{ satisfies \cref{def:bratteli_rule}} },
\end{equation}
where
\begin{equation}\label{def:bratteli_rule}
    T^{k}_{d+1} \leq 1 \; \forall k \, \in [n], \text{ and } T^{k-1} \rightarrow T^{k} \; \forall k \in [n], \text{ and } T^{n} = \lambda \cup a \text{ for some } a \in \AC(\lambda).
\end{equation}
If $\Lambda_r \neq \0$ then
\begin{equation}
     \Paths(\Lambda,\widetilde{\Brat}) \defeq \Paths(\Lambda,\Brat).
\end{equation}
This extension of the Bratteli diagram is illustrated in \cref{fig:A51}.

The action $\widetilde{\psi}_{\Lambda}$ of the transposition generators $\sigma_1,\dotsc,\sigma_{n-1}$ of $\A_{n,1}^d$ in this dilated Bratteli diagram $\widetilde{\Brat}$ is given by the generalization of \cref{gtbasis:transpositions} to all paths in $\Paths(\Lambda,\widetilde{\Brat})$, i.e. for every $T \in \Paths(\Lambda,\widetilde{\Brat})$ and every $i \in [n-1]$ we define:
\begin{align}
     \widetilde{\psi}_{\Lambda}(\sigma_i) \, \ket{T}
    &= \frac{1}{r_i(T)} \, \ket{T} + \sqrt{1 - \frac{1}{r_i(T)^2}} \, \ket{\sigma_i T}
    \qquad
    \text{for $i \neq n$}.
    \label{def:action_dilated_perm}
\end{align}
For this new Bratteli diagram $\widetilde{\Brat}$, we define the dilated versions $\ket{\widetilde{w}_{S,\lambda}}$ of vectors $\ket{w_{S,\lambda}}$ for $S \in \Paths_{n-1}(\lambda,\widetilde{\Brat})$ as
\begin{equation}
    \ket{\widetilde{w}_{S,\lambda}} \defeq \sum_{a \in \AC(\lambda)} \sqrt{\frac{d_{\lambda \cup a}}{n \cdot d_{\lambda}}} \ket{S \circ \of{\lambda \cup a} \circ (\lambda,\0)}.
\end{equation}
Therefore in the dilated space since \cref{pbt:naimrak_sym_induction} holds we have
\begin{equation}
    \norm{\ket{\widetilde{w}_{S,\lambda}}}^2 = \sum_{a \in \AC(\lambda)} \frac{d_{\lambda \cup a}}{n \cdot d_{\lambda}} = 1.
\end{equation}
More importantly, we have the following 
\begin{lemma}
    For every $\Lambda = (\lambda, \0)$ in the dilated Hilbert space spanned by $\Paths(\Lambda,\widetilde{\Brat})$ we have
    \begin{equation}
        \sum_{k=1}^n \sum_{S \in \Paths_{n-1}(\lambda,\widetilde{\Brat})} \widetilde{\psi}_{\Lambda}(\pi^k) \proj{\widetilde{w}_{S,\lambda}} \widetilde{\psi}_{\Lambda}(\pi^{-k}) = I
    \end{equation}
\end{lemma}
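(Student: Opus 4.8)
The plan is to observe that the Naimark dilation constructed above, at local dimension $d$, reproduces \emph{verbatim} the ordinary (undilated) representation theory of $\A_{n,1}^{d+1}$ at local dimension $d+1$, for which the desired resolution of the identity is already implicit in \cref{sec:gtbasis_pgm}. Let $\Brat'$ denote the Bratteli diagram of $\A_{n,1}^{d+1}$, and let $\psi'_{(\lambda,\0)}$, $\rho'$, $\ket{w'_{S,\lambda}}$, and $E'_k$ denote the objects of \cref{sec:gtbasis_pgm,sec:pgm:dilation} built for $\A_{n,1}^{d+1}$ instead of $\A_{n,1}^d$. The crux of the argument is the identification, valid for every leaf $\Lambda = (\lambda,\0)$ with $\lambda \pt n-1$ and $\len(\lambda)\leq d$,
\begin{equation}
    \Paths((\lambda,\0),\widetilde{\Brat}) = \Paths((\lambda,\0),\Brat'),
    \qquad
    \widetilde{\psi}_{(\lambda,\0)}(\sigma_i) = \psi'_{(\lambda,\0)}(\sigma_i) \text{ for } i \in [n-1],
    \qquad
    \ket{\widetilde{w}_{S,\lambda}} = \ket{w'_{S,\lambda}}.
\end{equation}

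I would establish the first equality by noting that the number of cells in row $d+1$ is non-decreasing along any path, while the endpoint $\lambda\cup a$ with $a\in\AC(\lambda)$ has at most one such cell (since $\len(\lambda\cup a)\leq d+1$); hence every intermediate shape $T^k$ has $T^k_{d+1}\leq 1$, which is exactly the constraint \cref{def:bratteli_rule} defining $\widetilde{\Brat}$, whereas in $\Brat'$ it follows automatically from $\len(T^k)\leq d+1$. The second equality is immediate because the dilated action \cref{def:action_dilated_perm} and the Young--Yamanouchi action \cref{gtbasis:transpositions} are given by the identical formula in which $r_i(T)$ depends only on contents, and contents are intrinsic to the shapes and independent of the local dimension. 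For the third equality, $\AC_{d+1}(\lambda) = \AC(\lambda)$ when $\len(\lambda)\leq d$ by \cref{def:AC_d}, and the coefficients $\sqrt{d_{\lambda\cup a}/(n\,d_\lambda)}$ are symmetric-group dimensions with no dependence on the local dimension. Summing over $S$ and invoking \cref{eq:pbt_main} at local dimension $d+1$ then gives $\sum_S \proj{\widetilde{w}_{S,\lambda}} = \psi'_{(\lambda,\0)}(E'_n)$.

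With this dictionary, and since $\pi=\sigma_1\cdots\sigma_{n-1}$ is a product of transpositions so that $\widetilde{\psi}_{(\lambda,\0)}(\pi^k)=\psi'_{(\lambda,\0)}(\pi^k)$, the left-hand side of the lemma becomes
\begin{equation}
    \sum_{k=1}^n \widetilde{\psi}_{(\lambda,\0)}(\pi^k) \Big(\sum_{S} \proj{\widetilde{w}_{S,\lambda}}\Big) \widetilde{\psi}_{(\lambda,\0)}(\pi^{-k})
    = \sum_{k=1}^n \psi'_{(\lambda,\0)}(\pi^k E'_n \pi^{-k})
    = \psi'_{(\lambda,\0)}\Big(\sum_{k=1}^n E'_k\Big),
\end{equation}
using $E'_k = \pi^k E'_n \pi^{-k}$ from \cref{eq:Ek}. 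By \cref{def:PGM_PBT,eq:Ek} at local dimension $d+1$, $\sum_{k=1}^n E'_k = (\rho')^{-1/2}\rho'(\rho')^{-1/2}$ is the orthogonal projector onto the support of $\rho'$. Finally, \cref{def:rho} at local dimension $d+1$ shows that $\psi'_{(\lambda,\0)}(\rho')$ is diagonal with eigenvalues $(d+1)+\cont(a)$ over $a\in\AC(\lambda)$; the smallest, attained at the new-row cell $(\len(\lambda)+1,1)$ of content $-\len(\lambda)\geq-d$, is at least $1>0$. Thus $\rho'$ is invertible on $(\lambda,\0)$, its support projector is $I$, and the lemma follows.

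The main obstacle is the first of the three identifications: $\widetilde{\Brat}$ and $\Brat'$ are genuinely different graphs — $\widetilde{\Brat}$ caps row $d+1$ at one cell everywhere, whereas $\Brat'$ allows arbitrarily many cells there — so the coincidence of path sets must be argued at the level of the specific leaves $(\lambda,\0)$ of interest, via the monotonicity of row-$(d+1)$ occupancy together with the bound $(\lambda\cup a)_{d+1}\leq1$; this confines the discrepancy to regions never visited by the relevant paths. Everything downstream is bookkeeping that reuses \cref{def:rho,eq:pbt_main,eq:Ek} with $d$ replaced by $d+1$. A more intrinsic alternative would realise the dilated space as $\mathrm{Ind}_{\S_{n-1}}^{\S_n}$ of the irreducible $\S_{n-1}$-representation labelled by $\lambda$, with $\{\1,\pi,\dotsc,\pi^{n-1}\}$ a transversal of $\S_{n-1}$ in $\S_n$ and $\sum_S\proj{\widetilde{w}_{S,\lambda}}$ the projector onto the identity-coset copy of $\lambda$, so that the cyclic shifts tile the space orthogonally; this picture is appealing but requires showing that the coefficients $\sqrt{d_{\lambda\cup a}/(n\,d_\lambda)}$ single out precisely the canonical coset embedding, which is more delicate than the reduction to dimension $d+1$.
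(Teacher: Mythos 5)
Your proposal is correct, and it takes a genuinely different route from the paper's own proof. The paper works entirely inside the dilated space: it shows that the operator in question commutes with the whole $\S_n$-action (because $\sum_{S}\proj{\widetilde{w}_{S,\lambda}}$ commutes with the $\S_{n-1}$-action and $\set{\pi^k}_{k=1}^n$ is a transversal of $\S_{n-1}$ in $\S_n$), then uses the multiplicity-free decomposition of the dilated space into $\S_n$-irreps $\lambda\cup a$, $a\in\AC(\lambda)$, to conclude the operator is diagonal with entries depending only on $T^n$, and finally evaluates each entry to $1$ by a counting argument based on $\abs{\braket{T}{\widetilde{w}_{S,\lambda}}}^2 = d_{\lambda\cup a}/(n\, d_\lambda)$, $\abs{\Paths_{n-1}(\lambda,\widetilde{\Brat})} = d_\lambda$, and \cref{pbt:naimrak_sym_induction}. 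You instead prove the structural fact that the dilation \emph{is} the undilated theory at local dimension $d+1$: your monotonicity argument for $\Paths((\lambda,\0),\widetilde{\Brat}) = \Paths((\lambda,\0),\Brat')$ is exactly right, the other two identifications are immediate, and the lemma then reduces to completeness of the standard PGM for $\A_{n,1}^{d+1}$ on the irrep $(\lambda,\0)$, i.e., to positive definiteness of $\rho'$ there. Your route explains conceptually why the extension of the Bratteli diagram works and lets you reuse \cref{eq:pbt_main,eq:Ek}; the paper's route is self-contained on the dilated space and never needs the spectrum of $\rho$.

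One point in your final step deserves care. You cite \cref{def:rho} (at local dimension $d+1$) for invertibility of $\psi'_{(\lambda,\0)}(\rho')$, but as literally written that formula exhibits the eigenvalues only on paths passing through $\lambda$ at level $n-1$. Paths with $T^n=\lambda\cup a$ but $T^{n-1}\neq\lambda$ do exist (e.g.\ $\0\to(1)\to(1,1)\to(2,1)\to((2),\0)$ for $\A_{3,1}^2$), and on a literal reading of \cref{def:rho} they would lie in the kernel of $\rho'$, which would break the conclusion that the support projector is $I$. The true statement --- forced by $\Tr$ consistency, and implicitly used by the paper itself when it asserts $\psi_{(\lambda,\0)}(E_0)=0$ for every $(\lambda,\0)\in\Irr{\A_{n,1}^d}$ --- is that $\rho$ has eigenvalue $d+\cont(a)$ on \emph{every} path with $T^n=\lambda\cup a$, i.e.\ with multiplicity $d_{\lambda\cup a}$ rather than $d_\lambda$. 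With that (correct) reading, which is the one your argument needs, your proof is complete; it would be worth making this explicit, since invertibility of $\rho'$ is the crux of your reduction.
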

\begin{proof}
    Denote 
    \begin{equation}
        A \defeq \sum_{k=1}^n \sum_{S \in \Paths_{n-1}(\lambda,\widetilde{\Brat})} \widetilde{\psi}_{\Lambda}(\pi^k) \proj{\widetilde{w}_{S,\lambda}} \widetilde{\psi}_{\Lambda}(\pi^{-k}).
    \end{equation}
    Note that by construction \cref{def:action_dilated_perm} any $\sigma \in \S_{n-1}$ (we think of $\sigma$ as an element of the algebra $\A_{n-1,0}^d$) commutes with the following element:
    \begin{equation}
        \sum_{S \in \Paths_{n-1}(\lambda,\widetilde{\Brat})} \proj{\widetilde{w}_{S,\lambda}} \widetilde{\psi}_{\Lambda}(\sigma) = 
        \widetilde{\psi}_{\Lambda}(\sigma) \sum_{S \in \Paths_{n-1}(\lambda,\widetilde{\Brat})} \proj{\widetilde{w}_{S,\lambda}},  
    \end{equation}
    since the above element acts as identity on $\Paths_{n-1}(\lambda,\widetilde{\Brat})$.
    Therefore $A$ must commute with $\A_{n,0}^d$ since $\pi^{k}$ are transversals for cosets of $\A_{n,0}^d$ over $\A_{n-1,0}^d$:
    \begin{align}
        A &= \frac{1}{(n-1)!}
        \sum_{\sigma \in \S_{n-1}} \sum_{k=1}^n \sum_{S \in \Paths_{n-1}(\lambda,\widetilde{\Brat})} \widetilde{\psi}_{\Lambda}(\pi^k \sigma) \proj{\widetilde{w}_{S,\lambda}} \widetilde{\psi}_{\Lambda}((\pi^{k}\sigma)^{-1}) \\
        &= \frac{1}{(n-1)!} \sum_{\sigma \in \S_{n}} \sum_{S \in \Paths_{n-1}(\lambda,\widetilde{\Brat})} \widetilde{\psi}_{\Lambda}(\sigma) \proj{\widetilde{w}_{S,\lambda}} \widetilde{\psi}_{\Lambda}(\sigma^{-1}).
    \end{align}
    This means that $A$ is a diagonal matrix and $\bra{T} A \ket{T}$ depends only on $T^n$, so for every $a \in \AC(\lambda)$ and every $T \in \Paths(\Lambda,\widetilde{\Brat})$ with $T^n = \lambda \cup a$ we can write 
    \begin{equation}
        \bra{T} A \ket{T} = \frac{1}{d_{\lambda \cup a}} \sum_{\substack{T \in \Paths(\Lambda,\widetilde{\Brat}) \\ T^n = \lambda \cup a}} \bra{T} A \ket{T} = \frac{1}{d_{\lambda \cup a} (n-1)!} \sum_{\substack{S \in \Paths_{n-1}(\lambda,\widetilde{\Brat}) \\ T \in \Paths(\Lambda,\widetilde{\Brat}) \\ T^n = \lambda \cup a}} \sum_{\sigma \in \S_{n}} \bra{\widetilde{w}_{S,\lambda}} \widetilde{\psi}_{\Lambda}(\sigma^{-1}) \proj{T} \widetilde{\psi}_{\Lambda}(\sigma) \ket{\widetilde{w}_{S,\lambda}},
    \end{equation}
    but in the above formula $\widetilde{\psi}_{\Lambda}(\sigma)$ commutes with $\sum_{T} \proj{T}$. Therefore,
    \begin{equation}
        \bra{T} A \ket{T} = \frac{1}{d_{\lambda \cup a} (n-1)!} \sum_{\sigma \in \S_{n}} \sum_{\substack{S \in \Paths_{n-1}(\lambda,\widetilde{\Brat}) \\ T \in \Paths(\Lambda,\widetilde{\Brat}) \\ T^n = \lambda \cup a}} \abs{\braket{T}{\widetilde{w}_{S,\lambda}}}^2 = \frac{n \cdot d_{\lambda}}{d_{\lambda \cup a}} \frac{d_{\lambda \cup a}}{n \cdot d_{\lambda}} = 1.
    \end{equation}
    Since every diagonal element of $A$ is $1$, so $A=I$.
\end{proof}

Consequently, in the dilated space our POVM $E$ becomes a PVM, which we denote by $\Pi$. From now on assume that we work in the dilated Gelfand--Tsetlin basis spanned by $T \in \Paths(\Lambda,\widetilde{\Brat})$ and we want to implement the PVM $\Pi = \set{\Pi_k}_{k=0}^n$, where for every $k \in [n]$:
\begin{align}\label{def:pbt_dilated_pvm}
    \Pi_k \defeq \pi^{k} \Pi_n \pi^{-k} \text{ and } \widetilde{\psi}_{(\lambda,\0)}(\Pi_n) = \hspace{-1em} \sum_{S \in \Paths_{n-1}(\lambda,\widetilde{\Brat})} \hspace{-10pt} \proj{\widetilde{w}_{S,\lambda}}, \qquad  \Pi_0 = I - \sum_{k=1}^{n}\Pi_k.
\end{align}

\subsection{Quantum circuit for standard PGM}
\label{sec:circuit_for_pgm}

Using the results of \cref{sec:pgm:dilation}, our task now is to implement the PVM $\Pi$ from \cref{def:pbt_dilated_pvm}. We embed $\Paths(\Brat)$ in a Hilbert space with appropriate tensor product structure dictated by the mixed quantum Schur transform \cite{grinko2023gelfandtsetlin}. Mixed quantum Schur transform can be implemented using two different encodings for the Gelfand--Tsetlin basis of $\A_{n,1}^d$: the \emph{standard} encoding or the \emph{Yamanouchi} encoding, see \cite{grinko2023gelfandtsetlin} for more details. In this section, we discuss how to implement $\Pi$ with the standard encoding. The Yamanouchi encoding implementation is explained in the \cref{sec:yaman_enc}.

Before presenting our circuit for $\Pi$, we need to define a unitary $\widetilde{W}$ acting on registers $T^{n-1},T^n,T^{n+1}$, which can be used to prepare the states $\ket{\widetilde{w}_{S,\lambda}} \defeq \ket{S}\ket{\widetilde{w}_{\lambda}}$ for every $(\lambda,\0) \in \Irr{\A_{n,1}^{d}}$ and $S \in \Paths(\Lambda,\widetilde{\Brat}) $. Namely, we first define
\begin{equation}
    \label{def:w_lambda_tilde}
    \widetilde{W}_\lambda \ket{0} \defeq \ket{\widetilde{w}_{\lambda}}, \qquad \ket{\widetilde{w}_{\lambda}} \defeq \sum_{a \in \AC(\lambda)} \sqrt{\frac{d_{\lambda \cup a}}{n \cdot d_{\lambda}}} \ket{\lambda \cup a},
\end{equation}
where $\widetilde{W}_\lambda$ is a unitary matrix of size at most $(d+1) \times (d+1)$ with easy-to-compute entries $\ket{\widetilde{w}_{\lambda}}$ in its first column. Now we define $\widetilde{W}$, a controlled version of $\widetilde{W}_\lambda$, as a unitary which prepares $\ket{\widetilde{w}_{\lambda}}$ conditioned on $\lambda$:
\begin{align}
    \label{def:W_lambda}
    &\widetilde{W} \defeq \sum_{\substack{\lambda \pt_d n - 1 }} \of[\bigg]{ \of[\bigg]{\sum_{\substack{\lambda' \neq \lambda \\ \lambda' \pt_d n - 1}} \proj{\lambda'}} \otimes I + \proj{\lambda} \otimes \widetilde{W}_\lambda} \otimes \proj{(\lambda,\0)} + \sum_{\substack{\mu \pt_{d-1} n }} I \otimes I \otimes \proj{(\mu,\square)}.
\end{align}
This transformation can be implemented via a sequence of Givens rotations. Similarly to \cref{def:w_lambda_tilde}, we can define a unitary $W_\lambda$ which prepares the normalized version of $\ket{w_{\lambda}}$:
\begin{align}
    \label{def:W_lambda_wo_tilde}
    W_\lambda \ket{0} &\defeq \frac{\ket{w_{\lambda}}}{\norm{\ket{w_{\lambda}}}}, \qquad
    \ket{w_{\lambda}} = \sum_{a \in \AC_d(\lambda)} \sqrt{\frac{d_{\lambda \cup a}}{n \cdot d_{\lambda}}} \ket{\lambda \cup a}.
\end{align}
The gate $W$ is now defined for \cref{def:W_lambda_wo_tilde} in the same way as $\widetilde{W}$ in \cref{def:W_lambda}.

Assume the initial state is $\ket{S}\ket{0}\ket{\lambda} \defeq \ket{S^{2}} \dotsc \ket{S^{n-2}} \ket{\lambda} \ket{0} \ket{(\lambda,\0)}$ for arbitrary $S \in \Paths_{n-1}(\lambda,\widetilde{\Brat})$, where $\ket{0}$ is some basis state of the register, corresponding to the $n$-th level of the dilated Bratteli diagram $\widetilde{\Brat}$. Then we can prepare a state $\ket{S}\ket{\widetilde{w}_{\lambda}}\ket{(\lambda,\0)}$ as follows:
\begin{equation}
    \of*{I \otimes \widetilde{W}} \ket{S} \ket{0} \ket{\lambda} = \ket{S}\ket{\widetilde{w}_{\lambda}}\ket{\lambda},
\end{equation}
where identity $I$ acts on the registers $\ket{S^{2}} \dotsc \ket{S^{n-2}}$. Moreover, note that the amplitudes of the state $\ket{\widetilde{w}_{\lambda}}$ are easy to calculate on a classical computer in time $\widetilde{O}(d)$ due to:

\begin{lemma}[{\cite{kosuda2003new}}]\label{lem:sym_irrep_dim_ration_contents}
    For every $\lambda \pt n-1$ and $a \in \AC(\lambda)$ there holds
    \begin{equation}
        \frac{ \prod_{c \in \RC(\lambda)} \of{\cont(a)-\cont(c)}}{\prod_{c \in \AC(\lambda) \setminus a} \of{\cont(a)-\cont(c)}} = \frac{d_{\lambda \cup a}}{n \cdot d_\lambda}.
    \end{equation}
\end{lemma}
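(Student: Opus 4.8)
The plan is to derive the identity from the \emph{hook length formula} $d_\lambda = \abs{\lambda}!\big/\prod_{c\in\lambda}h_\lambda(c)$, where $h_\lambda(c)$ is the hook length appearing in \cref{eq:unitary_dimension} and $d_\lambda$ is the dimension of the symmetric group irrep $\lambda$. Since $\lambda\pt n-1$ and $\lambda\cup a\pt n$, the factorials together with the explicit factor of $n$ cancel, so that $\frac{d_{\lambda\cup a}}{n\,d_\lambda}=\prod_{c\in\lambda}h_\lambda(c)\big/\prod_{c\in\lambda\cup a}h_{\lambda\cup a}(c)$. The first step is the standard observation about how hooks change when a single cell $a=(r,s)$ is added: the new cell has hook length $1$; every cell in the \emph{arm} of $a$ (row $r$, columns $<s$) and every cell in the \emph{leg} of $a$ (column $s$, rows $<r$) has its hook length increased by exactly $1$; and all remaining hook lengths are unchanged. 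Hence the ratio collapses to $\prod_c \frac{h_\lambda(c)}{h_\lambda(c)+1}$ with $c$ ranging over the arm and leg of $a$.

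Next I would evaluate these hook lengths in terms of contents. Writing $x\defeq\cont(a)=s-r$ and using that $a$ being addable forces $\lambda_r=s-1$ and that column $s$ has exactly $r-1$ cells, a direct count gives, for an arm cell $(r,j)$, $h_\lambda(r,j)=x-\cont(b_j)$ where $b_j\defeq(\lambda'_j,j)$ is the bottom cell of column $j$ (with $\lambda'$ the conjugate partition), and for a leg cell $(i,s)$, $h_\lambda(i,s)=\cont(e_i)-x$ where $e_i\defeq(i,\lambda_i)$ is the last cell of row $i$. Thus the arm contributes $\prod_{j<s}\frac{x-\cont(b_j)}{x-\cont(b_j)+1}$ and the leg contributes $\prod_{i<r}\frac{\cont(e_i)-x}{\cont(e_i)-x+1}$.

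The heart of the proof is to telescope each of these products. Consider the leg first: as $i$ increases the contents $\cont(e_i)=\lambda_i-i$ strictly decrease, and the numerator of row $i$ cancels against the denominator of row $i+1$ precisely when $\lambda_i=\lambda_{i+1}$ (equal-length rows). The surviving numerators $\cont(e_i)-x$ occur exactly at strict descents $\lambda_i>\lambda_{i+1}$ and at $i=r-1$, where $e_i$ is a \emph{removable} cell of content $>x$; the surviving denominators correspond to the cells $(i,\lambda_i+1)$, which at $i=1$ and at each strict descent are \emph{addable} cells of content $>x$. Each strict descent (between rows $i$ and $i+1$, $1\le i\le r-2$) produces one surviving numerator and one surviving denominator, while the two boundary rows contribute the numerator at $i=r-1$ and the denominator at $i=1$; hence these factors are equinumerous, the overall sign $(-1)$ per factor cancels, and the leg product equals $\prod_{c\in\RC(\lambda),\,\cont(c)>x}(x-\cont(c))\big/\prod_{c\in\AC(\lambda),\,\cont(c)>x}(x-\cont(c))$. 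The arm product is handled identically, driven now by equal-length columns (equivalently, by conjugation symmetry $\lambda\mapsto\lambda'$, which negates all contents and interchanges arm with leg); since the relevant cells all have content $<x$, every factor $x-\cont(\cdot)$ is positive and the product equals $\prod_{c\in\RC(\lambda),\,\cont(c)<x}(x-\cont(c))\big/\prod_{c\in\AC(\lambda),\,\cont(c)<x}(x-\cont(c))$. Because $a$ is the unique addable or removable cell of content exactly $x$, multiplying the arm and leg products reassembles the full right-hand side.

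I expect the main obstacle to be the bookkeeping in the telescoping: correctly pinning down the surviving boundary factors adjacent to $a$ (at column $s-1$ and row $r-1$) as the removable cells immediately flanking $a$, and confirming that each removable and addable cell of content $\neq x$ is counted exactly once with the correct sign. A quick sanity check on a small diagram (for instance $\lambda=(2,2,1)$ with $a=(3,2)$, giving both sides equal to $\nicefrac{1}{6}$) would verify the correspondence before fixing the general indexing.
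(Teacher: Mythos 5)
Your proposal is correct, but it cannot be compared against a proof in the paper for a simple reason: the paper offers none. \Cref{lem:sym_irrep_dim_ration_contents} is stated with a citation to Kosuda's work and is used as a black box (its role in the paper is only to guarantee that the amplitudes of $\ket{\widetilde{w}_\lambda}$ are classically computable in $\widetilde{O}(d)$ time). Your argument supplies a self-contained elementary derivation, and I checked its key steps: the hook-length formula does reduce the right-hand side to $\prod_c h_\lambda(c)/\of{h_\lambda(c)+1}$ over the arm and leg of $a=(r,s)$; the identifications $h_\lambda(r,j)=\cont(a)-\cont(b_j)$ and $h_\lambda(i,s)=\cont(e_i)-\cont(a)$ are exactly right (they use $\lambda_r=s-1$ and $\lambda'_s=r-1$, both forced by addability of $a$); the cancellation condition for the leg (numerator of row $i$ against denominator of row $i+1$ iff $\lambda_i=\lambda_{i+1}$) is correct, and the surviving factors are precisely the removable cells, respectively the addable cells other than $a$, with content greater than $\cont(a)$, with the arm handling those of content less than $\cont(a)$ by conjugation; the factor counts match so the signs cancel; and the final reassembly uses the true fact that addable and removable cells have pairwise distinct contents, so $a$ is the only such cell on its diagonal. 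Your numerical check for $\lambda=(2,2,1)$, $a=(3,2)$ (both sides $\nicefrac{1}{6}$) is also correct. What your route buys is a proof from first principles that could replace the external citation; what the paper's route buys is brevity, since the identity is classical and the paper only needs it as an efficiency guarantee.
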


Each $\widetilde{W}_\lambda$ gate can be implemented as a sequence of simple controlled $\mathrm{R}_i$ gates, as shown in \cref{fig:W_gate}, which we define as
\begin{equation}\label{eq:R_i_def}
    \mathrm{R}_i \ket{0} \defeq
    \sqrt{
        \frac{1 - \sum_{j=1}^{i} \eta_j}{1 - \sum_{j=1}^{i-1} \eta_j}
    } \; \ket{0}
    + \sqrt{
        \frac{\eta_i}{1 - \sum_{j=1}^{i-1} \eta_j}
    } \; \ket{1},
    \qquad
    \eta_j \defeq \frac{d_{\lambda \cup a_j}}{n \cdot d_\lambda}
\end{equation}
where the cell $a_j \in \AC(\lambda)$ is located in the $j$-th row of $\lambda$. If for a given $j$ there is no such cell then the control on $\mathrm{R}_i$ in \cref{fig:W_gate} is not triggered. In other words, $\mathrm{R}_i$ is triggered only when $\lambda_{i-1} > \lambda_i$ ($\mathrm{R}_1$ is always triggered). Note that due to \cref{lem:sym_irrep_dim_ration_contents} the amplitudes in \cref{eq:R_i_def} are easy to compute classically in time $\widetilde{O}(d)$.

Following the prescription outlined above, we construct in \cref{fig:pbt_std_pgm_circuit} an efficient circuit for the PGM $E$ dilated as $\Pi$. The circuit mainly acts on the dilated space spanned by $(T^{0},T^{1},T^{2},\dotsc,T^{n},T^{n+1}) \in \Paths(\widetilde{\Brat})$, which form the Gelfand--Tsetlin basis. The ancilla registers are used implicitly in the circuit.

\begin{figure}[H]
\centering
\newcommand{\Sgi}{\gate[wires=8]{U_{\mathrm{Sch}}(n,1)^\dagger}}
\newcommand{\Sg}{\gate[wires=8,label style={yshift=0.3cm}]{U_{\mathrm{Sch}}(n,1)}}
\newcommand{\go}[1]{\gateoutput{#1}}
\newcommand{\iA}[1]{\lstick{$A_{#1}$}}
\newcommand{\ipsi}{\lstick{$\ket{\psi}$}}
\newcommand{\keto}{\lstick{$\ket{0}_{n+1}$}}
\newcommand{\namel}[1]{\wire[l][1]["(\lambda {,} \0)"{above,pos=#1}]{a} }
\newcommand{\name}[1]{\wire[l][1]["#1"{above,pos=0.1}]{a} }
\newcommand{\Cyc}[1]{\gate[wires=#1]{\pi^\dagger}}
\newcommand{\gW}[1]{\gate[wires=#1]{\widetilde{W}_{\lambda}}}
\newcommand{\gWinv}[1]{\gate[wires=#1]{\widetilde{W}_{\lambda}^\dagger}}
\newcommand{\gateCorr}{\gate[wires=7]{\mathrm{Corr}}}
\newcommand{\gateR}[1]{\gate{\omega_{n+1}^{#1}}}

\resizebox{\textwidth}{!}{
\begin{quantikz}[classical gap = 2.5pt, row sep = 0.6cm]
\iA{1}  &\Sg \go{$\gtM{d-1}$}  &\qb& & & & & & & & & \HD & & & & & & \Sgi & \qq \\
\iA{2}  &\go{$T^{2}$}          &\qq& \Cyc{6}  & & & & \Cyc{6}  & & & & \HD & \Cyc{6} & & & & \gateCorr & & \\
\iA{3}  &\go{$T^{3}$}          &\qq&           & & & &           & & & & \HD &          & & & & & & \\
\qn\vd  &                      &\vd&           & \vd & \vd & \vd & & \vd & \vd & \vd & \vd & & \vd & \vd & \vd & & & \vd \\
\iA{n-2}&\go{$T^{n-2}$}        &\qq&           & & & & & & & & \HD & & & & & & & \\
\iA{n-1}&\go{$T^{n-1}$}        &\qq&           & \ctrl{ 1} \name{\lambda} & \ctrl{1}  & \ctrl{1} & & \ctrl{1} & \ctrl{1} & \ctrl{1} & \HD & & \ctrl{1} & \ctrl{1} & \ctrl{1} & & & \\
\iA{n}  &\go{$T^{n}$}          &\qq&           & \gWinv{1} & \gateR{i} & \gW{1}  & & \gWinv{1} & \gateR{2i} & \gW{1} & \HD & & \gWinv{1} & \gateR{ni} & \gW{1} & & & \\
\ipsi   &\go{$T^{n+1}=\Lambda$}&\qq&           & \ctrl{-1} \namel{0.3} & \ctrl{-1} & \ctrl{-1} & & \ctrl{-1} & \ctrl{-1} & \ctrl{-1} & \HD & & \ctrl{-1} & \ctrl{-1} & \ctrl{-1} & & & \\
\keto   &\gateQFT{n+1}              &   & && \ctrl{-1} \name{i} & & & & \ctrl{-1} & & \HD & & & \ctrl{-1} & \gateQFTinv{n+1} & \ctrl{-1} \name{k} & \meter{} \wire[r][1][classical gap = 0.2pt]{c} & \qn \rstick{$k$}
\end{quantikz}
}
\caption{The circuit implementation of the PGM $E$ from \cref{def:PGM_PBT,def:PGM_PBT_schur} in standard encoding. The registers $T^{2},T^{3},\dotsc,T^{n}$ are dilated as per \cref{sec:pgm:dilation}. The correction gate $\mathrm{Corr}$ together with $U_{\mathrm{Sch}}(n,1)^\dagger$ transform is optional: it is used to bring the post-measurement state to the form defined by the original PGM measurement $E$.
}
\label{fig:pbt_std_pgm_circuit}
\end{figure}

First, a mixed quantum Schur transform maps the computational basis to the mixed Schur basis which is usually labelled by $\ket{\gtM{d-1},T}$, where $M$ is a Gelfand--Tsetlin pattern and $T$ is a path in $\Brat$. We assume a tensor product structure for different vertices $T^{i}$ of the path $T \in \Paths(\Brat)$ and we use the standard encoding for $\ket{T}$. Moreover, all registers $T^{2},\dotsc,T^{n}$ are assumed to be dilated, according to the procedure explained in \cref{sec:pgm:dilation}. Since $T^{0}$ and $T^{1}$ can only have one possible value, we omit those registers from the diagram since they are one-dimensional. The last level $T^{n+1}$ of the path $T$ is labelled by $\lambda$ and indicates an irreducible representation. The cyclic permutation gate $\pi = (1 2 \dotsc n) = \sigma_{1}\sigma_{2} \cdots \sigma_{n-1}$ acts only on $n-1$ wires of the dilated Gelfand--Tsetlin basis, and each of the transpositions $\sigma_i$ acts only locally on the registers $T^{i-1}, T^{i}, T^{i+1}$ ($\sigma_1$ acts only on $T^{2}$, and $\sigma_2$ acts only on $T^{2}, T^{3}$). $\widetilde{W}$ prepares the state $\ket{\widetilde{w}_{\lambda}}$ conditioned on $\lambda$, i.e., $\widetilde{W}_{\lambda} \ket{0} = \ket{\widetilde{w}_{\lambda}}$ where $\widetilde{W}_{\lambda}$ is controlled on $\lambda$. The phase gates $\omega^{ki}_{n+1}$ act non-trivially only on $\ket{0}$ in the register $T^{n}$, and are controlled on the condition $T^{n-1} = T^{n+1} = \lambda$. Finally, the measurement outcome $k = 0$ corresponds to the failure of the protocol, otherwise $k \in [n]$ indicates the port where Bob can find the teleported state $\ket{\psi}$ of dimension $d$.

We now argue that the gate complexity of our circuit in \cref{fig:pbt_std_pgm_circuit} is $\widetilde{O}(n d^4)$:

\begin{enumerate}[leftmargin=*]
    \item The complexity of implementing the mixed quantum Schur transform $U_{\text{Sch}}(n,1)$ is $\widetilde{O}(nd^4)$ \cite{nguyen2023mixed,grinko2023gelfandtsetlin}.
    \item The complexity of implementing $\pi = \sigma_{1}\sigma_{2} \dotsc \sigma_{n-1}$ based on \cref{fig:cyclic_circuit_std} is $\widetilde{O}(n d^2)$.
    The factor $n$ comes from the number of transpositions $\sigma_i$ in $\pi$.
    Each transposition $\sigma_i$ is 3-local: it acts only on registers $T^{i-1}, T^{i}, T^{i+1}$.
    More specifically, $\sigma_i$ is a $2 \times 2$ rotation on $T^{i}$ controlled by $T^{i-1}$ and $T^{i+1}$.
    According to \cref{fig:transposition_circuit_std}, each $\sigma_i$ can be implemented with $\widetilde{O}(d^2)$ gates, each of which decompose into $\widetilde{O}(1)$ elementary gates.
    In particular, each of the $\mathrm{R}_{j,k}$ gates
    \begin{align}
        \mathrm{R}_{j,k} \defeq
        \mx{
            \frac{1}{r_{j,k}} &
            \sqrt{1 - \frac{1}{r_{j,k}^2}} \\
            \sqrt{1 - \frac{1}{r_{j,k}^2}} &
          - \frac{1}{r_{j,k}}
        },
        \qquad
        r_{j,k} \defeq \lambda_j - \lambda_k + k - j
        \label{eq:Rjk}
    \end{align}
    appearing in \cref{fig:transposition_circuit_std} can be implemented with $\widetilde{O}(1)$ elementary gates and $\widetilde{O}(1)$ auxiliary qubits for computation of rotation parameters $r_{j,k}$.
        
        \begin{figure}[!ht]
        \centering
        \newcommand{\mideq}{\midstick[9,brackets=none]{\text{$=$}}}
        \newcommand{\Cyc}[1]{\gate[wires=#1]{\pi^\dagger}}
        \newcommand{\T}[1]{\lstick{$T^{#1}$}}
        \newcommand{\sgm}[1]{\gate{\sigma_{#1}}}
        
        \resizebox{0.6\textwidth}{!}{
        \begin{quantikz}[classical gap = 2pt, row sep = 0.3cm]
        \T{2}  &\Cyc{9}&   &\mideq\qn&&\sgm{1}\qq& \sgm{2} & \ctrl{1}&         &\HD&         &         &\\
        \T{3}  &       &   &      \qn&&\qq       &\ctrl{-1}& \sgm{3} &\ctrl{1} &\HD&         &         &\\
        \T{4}  &       &   &      \qn&&\qq       &         &\ctrl{-1}&\sgm{4}  &\HD&         &         &\\
        \T{5}  &       &   &      \qn&&\qq       &         &         &\ctrl{-1}&\HD&         &         &\\
        \qn\vd &       &\vd&      \qn&&     \vd  &  \vd    &   \vd   &    \vd  &\vd&    \vd  &  \vd    &\\
        \T{n-3}&       &   &      \qn&&\qq       &         &         &         &\HD&\ctrl{1} &         &\\
        \T{n-2}&       &   &      \qn&&\qq       &         &         &         &\HD&\sgm{n-2}&\ctrl{1} &\\
        \T{n-1}&       &   &      \qn&&\qq       &         &         &         &\HD&\ctrl{-1}&\sgm{n-1}&\\
        \T{n}  &       &   &      \qn&&\qq       &         &         &         &\HD&         &\ctrl{-1}&
        \end{quantikz}
        }
        \caption{Circuit for the cyclic permutation $\pi^\dagger = \sigma_{n-1} \sigma_{n-2} \cdots \sigma_2 \sigma_1$ in the Gelfand--Tsetlin basis. Each transposition $\sigma_i$ acts locally on registers $T^{i-1}, T^{i}, T^{i+1}$, with $T^{i-1}$ and $T^{i+1}$ used as controls. Note, that $\sigma_1$ acts only on $T^{2}$, and $\sigma_2$ acts only on $T^{2}, T^{3}$ because we have dropped the registers $T^0, T^1$ (they are always one-dimensional).}
        \label{fig:cyclic_circuit_std}
        \end{figure}
        
        \begin{figure}[!ht]
        \centering
        \newcommand{\y}[1]{\lstick{$y_{#1}$}}
        \newcommand{\Tu}{\lstick{$T^{i-1}$}}
        \newcommand{\Tm}{\lstick{$T^{i}$}}
        \newcommand{\Td}{\lstick{$T^{i+1}$}}
        
        \newcommand{\Tust}{\lstick[5]{$T^{i-1}$}}
        \newcommand{\Tmst}{\lstick[5]{$T^{i}$}}
        \newcommand{\Tdst}{\lstick[5]{$T^{i+1}$}}
        
        \renewcommand{\Tust}{\lstick[5]{}}
        \renewcommand{\Tmst}{\lstick[5]{}}
        \renewcommand{\Tdst}{\lstick[5]{}}
        
        \newcommand{\Tui}[1]{\rstick{$T^{i-1}_{#1}$}}
        \newcommand{\Tmi}[1]{\rstick{$T^{i}_{#1}$}}
        \newcommand{\Tdi}[1]{\rstick{$T^{i+1}_{#1}$}}
        
        \newcommand{\grm}[1]{\gate{\mathrm{#1}}}
        
        \newcommand{\sgma}[1]{\gate[wires=1]{\sigma_{#1}}}
        \newcommand{\sgm}[1]{\gate[wires=1]{\mathrm{R}_{#1}}}
        
        \newcommand{\clo}[1]{\ctrl[open]{#1}}  
        
        \newcommand{\mideq}{\midstick[15,brackets=none]{\text{$=$}}}
        
        \resizebox{\textwidth}{!}{
        \begin{quantikz}[classical gap = 2.5pt, row sep = 0.5cm]
        \qn&         &&\mideq\qn&&\Tust&\qq&\ctrl{10}&\HD&         &         &\ctrl{1} &         &\HD&         & \ctrl{4}  &         &\HD&         &           &         &\HD&         &           &         &         &\HD&\ctrl{10}&\Tui{1}  \\
        \qn&         &&      \qn&&     &\qq&         &\HD&         &         &\ctrl{4} &         &\HD&         &           &         &\HD&         & \ctrl{4}  &         &\HD&         &           &         &         &\HD&         &\Tui{2}  \\
        \Tu&\ctrl{5} &&      \qn&&     &\qn&         &\vd&  \vd    &  \vd    &         &  \vd    &\vd&  \vd    &           &  \vd    &\vd& \vd     &           &  \vd    &\vd&   \vd   &           &  \vd    &  \vd    &\vd&         &\vd      \\
        \qn&         &&      \qn&&     &\qq&         &\HD&         &         &         &         &\HD&         &           &         &\HD&         &           &         &\HD&         & \ctrl{1}  &         &         &\HD&         &\Tui{d}  \\
        \qn&         &&      \qn&&     &\qq&         &\HD&\ctrl{10}&         &         &         &\HD&         & \ctrl{1}  &         &\HD&         & \ctrl{2}  &         &\HD&         & \ctrl{4}  &         &\ctrl{10}&\HD&         &\Tui{d+1}\\
        \qn&         &&      \qn&&\Tmst&\qq&\grm{-}  &\HD&         & \grm{+} &\ctrl{1} & \grm{-} &\HD& \grm{+} & \ctrl{4}  & \grm{-} &\HD&         &           &         &\HD&         &           &         &         &\HD& \grm{+} &\Tmi{1}  \\
        \qn&         &&      \qn&&     &\qq&         &\HD&         &\ctrl{-1}&\sgm{1,2}&\ctrl{-1}&\HD&         &           &         &\HD& \grm{+} & \ctrl{3}  & \grm{-} &\HD&         &           &         &         &\HD&         &\Tmi{2}  \\
        \Tm&\sgma{i} &&      \qn&&     &\qn&         &\vd&         &  \vd    &         &  \vd    &\vd&         &           &         &\vd&         &           &         &\vd&   \vd   &           &   \vd   &         &\vd&         &\vd      \\
        \qn&         &&      \qn&&     &\qq&         &\HD&         &         &         &         &\HD&         &           &         &\HD&         &           &         &\HD& \grm{+} & \ctrl{1}  & \grm{-} &         &\HD&         &\Tmi{d}  \\
        \qn&         &&      \qn&&     &\qq&         &\HD& \grm{-} &         &         &         &\HD&\ctrl{-4}&\sgm{1,d+1}&\ctrl{-4}&\HD&\ctrl{-3}&\sgm{2,d+1}&\ctrl{-3}&\HD&\ctrl{-1}&\sgm{d,d+1}&\ctrl{-1}& \grm{+} &\HD&         &\Tmi{d+1}\\
        \qn&         &&      \qn&&\Tdst&\qq&\grm{-}  &\HD&         &         &\clo{-4} &         &\HD&         &\clo{-1}   &         &\HD&         &           &         &\HD&         &           &         &         &\HD&\grm{+}  &\Tdi{1}  \\
        \qn&         &&      \qn&&     &\qq&         &\HD&         &         &\clo{-1} &         &\HD&         &           &         &\HD&         &\clo{-2}   &         &\HD&         &           &         &         &\HD&         &\Tdi{2}  \\
        \Td&\ctrl{-5}&&      \qn&&     &\qn&  \vd    &\vd&         &  \vd    &  \vd    &  \vd    &\vd&  \vd    &           &  \vd    &\vd&  \vd    &           &  \vd    &\vd&  \vd    &           &  \vd    &         &\vd&  \vd    &\vd      \\
        \qn&         &&      \qn&&     &\qq&         &\HD&         &         &         &         &\HD&         &           &         &\HD&         &           &         &\HD&         &\clo{-4}   &         &         &\HD&         &\Tdi{d}  \\
        \qn&         &&      \qn&&     &\qq&         &\HD&\grm{-}  &         &         &         &\HD&         &\clo{-4}   &         &\HD&         &\clo{-3}   &         &\HD&         &\clo{-1}   &         &\grm{+}  &\HD&         &\Tdi{d+1}
        \end{quantikz}   
        }
        \caption{
        Quantum circuit for implementing the transposition $\sigma_i$ in the Gelfand--Tsetlin basis. The controlled ``$\pm$'' gates perform arithmetic operations that add / subtract the control register from the target registers.
        For each $j,k$ such that $1 \leq j < k \leq d$, the corresponding $\mathrm{R}_{j,k}$ gate defined in \cref{eq:Rjk} acts on the $k$-th register of $T^i$ and has five controls:
        the $j$-th and $k$-th registers of $T^{i-1}$ and $T^{i+1}$, and the $j$-th register of $T^i$.
        At this level of abstraction, the implementation of $\sigma_i$ contains $\widetilde{O}(d^2)$ gates.
        }
        \label{fig:transposition_circuit_std}
        \end{figure}

        \begin{figure}[!ht]
        \centering
        
        \newcommand{\grm}[1]{\gate{\mathrm{#1}}}
        
        \newcommand{\Tust}{\lstick[5]{$T^{n-1}$}}
        \newcommand{\Tmst}{\lstick[5]{$T^{n}$}}
        \newcommand{\Tdst}{\lstick[5]{$T^{n+1}$}}
        
        \renewcommand{\Tust}{\lstick[5]{}}
        \renewcommand{\Tmst}{\lstick[5]{}}
        \renewcommand{\Tdst}{\lstick[5]{}}
        
        \newcommand{\Tu}{\lstick{$T^{n-1}$}}
        \newcommand{\Tm}{\lstick{$T^{n}$}}
        \newcommand{\Td}{\lstick{$T^{n+1}$}}

        \newcommand{\Tui}[1]{\rstick{$T^{n-1}_{#1}$}}
        \newcommand{\Tmi}[1]{\rstick{$T^{n}_{#1}$}}
        \newcommand{\Tdi}[1]{\rstick{$T^{n+1}_{#1}$}}
        
        \newcommand{\gW}{\gate{\widetilde{W}_{\lambda}}}
        \newcommand{\namelbd}[1]{\wire[l][1]["(\lambda {,} \0)"{above,pos=#1}]{a}}  
        \newcommand{\namel}[1]{\wire[l][1]["\lambda_{#1}"{above,pos=0.1}]{a} }      
        \newcommand{\name}[1]{\wire[l][1]["#1"{above,pos=0.1}]{a} }                 
        
        \newcommand{\clnm}[2]{\ctrl{#1}\namel{#2}}                                  
        \newcommand{\clo}[1]{\ctrl[open]{#1}}  
        \newcommand{\cl}[1]{\ctrl{#1}} 
        \newcommand{\gR}[1]{\gate{\mathrm{R}_{#1}}}

        \newcommand{\mideq}{\midstick[15,brackets=none]{\text{$=$}}}
        
        \resizebox{\linewidth}{!}{
        \begin{quantikz}[classical gap = 2.5pt, row sep = 0.5cm]
        \qn&         &&\mideq\qn&&\Tust&\qq&\clnm{10}{1}&            &\HD&            &              &\cl{1}  &\cl{1}  &\HD&\cl{1}  &\cl{1}  &\cl{10}&       &\HD&       &       &\Tui{1}  \\
        \qn&         &&      \qn&&     &\qq&            &\clnm{10}{2}&\HD&            &              &\cl{2}  &\cl{2}  &\HD&\cl{2}  &\cl{2}  &       &\cl{10}&\HD&       &       &\Tui{2}  \\
        \Tu&\ctrl{5} &&      \qn&&     &\qn&            &            &\vd&  \vd       &  \vd         &        &        &\vd&        &        &       &       &\vd&  \vd  &  \vd  &\vd      \\
        \qn&         &&      \qn&&     &\qq&            &            &\HD&\clnm{10}{d}&              &\cl{1}  &\cl{1}  &\HD&\cl{1}  &\cl{1}  &       &       &\HD&\cl{10}&       &\Tui{d}  \\
        \qn&         &&      \qn&&     &\qq&            &            &\HD&            &\clnm{10}{d+1}&\cl{1}  &\cl{1}  &\HD&\cl{1}  &\cl{1}  &       &       &\HD&       &\cl{10}&\Tui{d+1}\\
        \qn&         &&      \qn&&\Tmst&\qq&            &            &\HD&            &              &\gR{1}  &\clo{1} &\HD&\clo{1} &\clo{1} &\grm{+}&       &\HD&       &       &\Tmi{1}  \\
        \qn&         &&      \qn&&     &\qq&            &            &\HD&            &              &        &\gR{2}  &\HD&\clo{2} &\clo{2} &       &\grm{+}&\HD&       &       &\Tmi{2}  \\
        \Tm&\gW      &&      \qn&&     &\qn&            &            &\vd&            &              &        &        &\vd&        &        &       &       &\vd&       &       &\vd      \\
        \qn&         &&      \qn&&     &\qq&            &            &\HD&            &              &        &        &\HD&\gR{d}  &\clo{1} &       &       &\HD&\grm{+}&       &\Tmi{d}  \\
        \qn&         &&      \qn&&     &\qq&            &            &\HD&            &              &        &        &\HD&        &\gR{d+1}&       &       &\HD&       &\grm{+}&\Tmi{d+1}\\
        \qn&         &&      \qn&&\Tdst&\qq&\grm{-}     &            &\HD&            &              &\clo{-5}&\clo{-4}&\HD&\clo{-2}&\clo{-1}&\grm{+}&       &\HD&       &       &\Tdi{1}  \\
        \qn&         &&      \qn&&     &\qq&            &\grm{-}     &\HD&            &              &\clo{-1}&\clo{-1}&\HD&\clo{-1}&\clo{-1}&       &\grm{+}&\HD&       &       &\Tdi{2}  \\
        \Td&\ctrl{-5}&&      \qn&&     &\qn&  \vd       &  \vd       &\vd&            &              &        &        &\vd&        &        &  \vd  &  \vd  &\vd&       &       &\vd      \\
        \qn&         &&      \qn&&     &\qq&            &            &\HD&\grm{-}     &              &\clo{-2}&\clo{-2}&\HD&\clo{-2}&\clo{-2}&       &       &\HD&\grm{+}&       &\Tdi{d}  \\
        \qn&         &&      \qn&&     &\qq&            &            &\HD&            &\grm{-}       &\clo{-1}&\clo{-1}&\HD&\clo{-1}&\clo{-1}&       &       &\HD&       &\grm{+}&\Tdi{d+1}
        \end{quantikz}   
        }
        \caption{Quantum circuit for implementing the unitary $\widetilde{W}$ defined by (\ref{def:W_lambda}) in the Gelfand--Tsetlin basis. The controlled ``$\pm$'' gates perform arithmetic operations that add / subtract the control register from the target registers, just as on \cref{fig:transposition_circuit_std}. \cref{eq:Rjk} defines rotation gates $\mathrm{R}_{j}$ that are controlled on $2(d+1+j)$ registers in total: all registers of $T^{i-1}$ and $T^{i+1}$ and first $(j-1)$ registers of $T^i$. The overall gate complexity of $\widetilde{W}$ is $\widetilde{O}(d^2)$.}
        \label{fig:W_gate}
        \end{figure}
    \item The operation $\widetilde{W}$ defined in \cref{def:W_lambda} is a $\lambda$-controlled unitary $\widetilde{W}_\lambda$ that acts non-trivially only on a $(d+1)$-dimensional subspace of the register $T^{n}$. This register consists of $d+1$ wires corresponding to $d+1$ rows of a Young diagram, see \cref{fig:W_gate}. The matrix entries of $\widetilde{W}_\lambda$ (see \cref{def:w_lambda_tilde}) can be reversibly computed on the fly using gates $\mathrm{R}_i, \, i \in [d+1]$ from \cref{eq:R_i_def} in classical time $\widetilde{O}(d)$, which must be implemented in a quantum circuit coherently. Therefore, implementing $\widetilde{W}$ would have the gate complexity $\widetilde{O}(d^2)$.
    \item $\omega^{ki}_{n+1}$ denotes the gate $\omega^{ki}_{n+1} \proj{0} + (I-\proj{0})$ on register $T^{n}$ conditioned on the registers $T^{n-1}=\lambda,\, T^{n+1}=\Lambda=(\lambda,\0)$. This has complexity $\widetilde{O}(1)$.
    \item The complexity of the Quantum Fourier Transform $\mathrm{QFT}_{n+1}$ is $\widetilde{O}(1)$.
    \item One can optionally implement the correction gate $\mathrm{Corr}$ together with inverse mixed Schur transform at the end to get the right post-measurement state (according to the definition of PGM $E$ from \cref{def:PGM_PBT}). For that one needs to uncompute the gates $\pi^k$ for $k \in [n]$ and $\widetilde{W}_\lambda$, and instead run the gate $W_\lambda$ from \cref{def:W_lambda} followed by $\pi^k$. The complexity of implementing the correction gate does not change both the total gate and time complexities of the full circuit, adding only a constant factor overhead.
        \begin{figure}[!ht]
        \centering
        \newcommand{\mideq}{\midstick[8,brackets=none]{\text{$=$}}}
        \newcommand{\Corr}{\gate[wires=7]{\mathrm{Corr}}}
        \newcommand{\T}[1]{\lstick{$T^{#1}$}}
        \newcommand{\name}[1]{\wire[l][1]["#1"{above,pos=0.1}]{a} }  
        \newcommand{\Cyc}[1]{\gate[wires=6]{\pi^{#1}}}
        \newcommand{\gWt}{\gate{\widetilde{W}_{\lambda}^\dagger}}
        \newcommand{\gW}{\gate{W_{\lambda}}}
        
        \resizebox{0.5\textwidth}{!}{
        \begin{quantikz}[classical gap = 2pt]
        \T{2}  &  \Corr          &   &\mideq\qn&&\qq&    \Cyc{-k}     &         &         &    \Cyc{k}      &   &\\
        \T{3}  &                 &   &      \qn&&\qq&                 &         &         &                 &   &\\
        \qn\vd &                 &\vd&      \qn&&\vd&                 &  \vd    &  \vd    &                 &\vd&\\
        \T{n-2}&                 &   &      \qn&&\qq&                 &         &         &                 &   &\\
        \T{n-1}&                 &   &      \qn&&\qq&                 &\ctrl{1} &\ctrl{1} &                 &   &\\
        \T{n}  &                 &   &      \qn&&\qq&                 &\gWt     &\gW      &                 &   &\\
        \L     &                 &   &      \qn&&\qq&                 &\ctrl{-1}&\ctrl{-1}&                 &   &\\
               &\ctrl{-1}\name{k}&   &      \qn&&\qq&\ctrl{-2}\name{k}&         &         &\ctrl{-2}\name{k}&   &
        \end{quantikz}
        }
        \caption{Circuit for the correction operation $\mathrm{Corr}$. $W_\lambda$ gate is defined in \cref{def:W_lambda} and its implementation is completely analogous to $\widetilde{W}_\lambda$ from \cref{fig:W_gate}.}
        \label{fig:correction_gate}
        \end{figure}
    \item Now we are ready to count the total gate and time complexity. For that, notice that each gate $\pi^\dagger$ in \cref{fig:pbt_std_pgm_circuit} consisting of local gates $\sigma_i$ as in \cref{fig:cyclic_circuit_std} can be pushed to the left of the circuit. This will reduce the naive complexity $\widetilde{O}(n^2d^2)$ of implementing $n$ sequential gates $\pi^\dagger$ to just $\widetilde{O}(n d^2)$. Counting everything together gives $\widetilde{O}(n d^4)$ total time complexity. The gate complexity is $\widetilde{O}(n^2 d^4)$.
\end{enumerate}

\noindent Similarly, we can count the number of auxiliary qubits needed to implement our circuit \cref{fig:pbt_std_pgm_circuit} in the standard encoding. However, we must be careful with the precision of implementing each gate. If the total target precision is $\epsilon$ then if the circuit has $\poly(n,d)$ gates it implies the precision per each gate must be $\epsilon_g \defeq \epsilon / \poly(n,d)$. This will translate into the size of auxiliary registers for classical computation of matrix entries of the unitaries which need to be implemented coherently: the number of qubits needed for such computations will scale as $\log(1 / \epsilon_g) = O(\log(1 / \epsilon) + \log(n) + \log(d))$. With that in mind, we can summarize the total space complexity:

\begin{enumerate}[leftmargin=*]
    \item The number of auxiliary qubits needed to implement the mixed Schur transform isometry in the standard encoding \cite{grinko2023gelfandtsetlin} and create a Naimark's dilation from \cref{sec:pgm:dilation} after the mixed Schur transfer is $(n+d)d\log(n)\polylog(d,1/\epsilon)$.
    \item The number of auxiliary qubits needed to implement each gate $\sigma_i$ from \cref{fig:transposition_circuit_std} inside $\pi^\dagger$ gate from \cref{fig:cyclic_circuit_std} is $\log(n)\polylog(d,1/\epsilon)$. However, when $O(n)$ gates $\sigma_i$ are implemented in parallel with the shift trick of $\pi^\dagger$ gates we need to have $n\log(n)\polylog(d,1/\epsilon)$ auxiliary qubits available.
    \item The gates $\widetilde{W}_\lambda$ and $W_\lambda$ require $d^2\log(n)\polylog(d,1/\epsilon)$ qubits to implement gates $\mathrm{R}_i$, see \cref{fig:W_gate}.
    \item Overall, the total number of auxiliary qubits needed is $(n+d)d\log(n)\polylog(d,1/\epsilon)$.
\end{enumerate}

The above analysis finishes the proof of the first statement of \cref{thm:pbt} for standard PGM. In the next section, we describe how to extend the construction for standard PGM for generic PBT measurements, including the measurement needed to implement pPBT with the EPR resource state.

\subsection{Quantum circuit for deterministic PBT measurement}
\label{sec:dPBT_circuits}

Using the quantum circuit for the standard PGM $E$ \cref{fig:pbt_std_pgm_circuit} we can easily implement the POVM $E^{\hollowstar}$ for the dPBT protocols from \cref{def:dpbt_povm,def:dpbt_povm_schur}. For that, we need to use \Cref{lem:dpbt_dilation}, which tells us to use one additional auxiliary qudit of dimension $n$, prepared in the state $\ket{+}_n = \sum_{i=0}^{n-1} \frac{1}{\sqrt{n}} \ket{i}$. The resulting circuit is presented in \Cref{fig:dpbt_circuit}. Notice that the circuit is almost the same as the one for the standard PGM $E$ presented in \cref{fig:pbt_std_pgm_circuit}. The only difference is an additional register of dimension $n$ and a controlled Pauli gate $Z \defeq \sum_{i=0}^{n-1} \omega_{n}^{i} \proj{i}$.

\begin{figure}[H]
\centering
\newcommand{\Sgi}{\gate[wires=8]{U_{\mathrm{Sch}}(n,1)^\dagger}}
\newcommand{\Sg}{\gate[wires=8,label style={yshift=0.3cm}]{U_{\mathrm{Sch}}(n,1)}}
\newcommand{\go}[1]{\gateoutput{#1}}
\newcommand{\iA}[1]{\lstick{$A_{#1}$}}
\newcommand{\ipsi}{\lstick{$\ket{\psi}$}}
\newcommand{\keto}{\lstick{$\ket{0}_n$}}
\newcommand{\Zgate}[1]{\gate{Z^{#1}}}
\newcommand{\namel}[1]{\wire[l][1]["(\lambda {,} \0)"{above,pos=#1}]{a} }
\newcommand{\namer}[1]{\wire[r][1]["(\lambda {,} \0)"{above,pos=#1}]{a} }
\newcommand{\namerm}[1]{\wire[r][1]["(\mu {,} \square)"{above,pos=#1}]{a} }
\newcommand{\name}[1]{\wire[l][1]["#1"{above,pos=0.1}]{a} }
\newcommand{\Cyc}[1]{\gate[wires=#1]{\pi^\dagger}}
\newcommand{\gW}[1]{\gate[wires=#1]{\widetilde{W}_{\lambda}}}
\newcommand{\gWinv}[1]{\gate[wires=#1]{\widetilde{W}_{\lambda}^\dagger}}
\newcommand{\gateCorr}{\gate[wires=8]{\mathrm{Corr}}}
\newcommand{\clo}[1]{\ctrl[open]{#1}}  
\newcommand{\gateR}[1]{\gate{\omega_{n}^{#1}}}

\resizebox{\textwidth}{!}{
\begin{quantikz}[classical gap = 2.5pt, row sep = 0.6cm]
\iA{1}  &\Sg \go{$\gtM{d-1}$} &\qb                   & & & & & & & & \HD & & & &                                                                                                     & & & \Sgi & \qq \\
\iA{2}  &\go{$T^{2}$}         &                     \Cyc{6}  & & & & \Cyc{6}  & & & & \HD & \Cyc{6} & & &                                                                & \Cyc{6} &  \gateCorr & & \\
\iA{3}  &\go{$T^{3}$}         &                                & & & &           & & & & \HD &          & & &                                                                        & & & & \\
\qn\vd  &                     &                            & \vd & \vd & \vd & & \vd & \vd & \vd & \vd & & \vd & \vd & \vd                            & & & & \vd \\
\iA{n-2}&\go{$T^{n-2}$}       &                                & & & & & & & & \HD & & & &                                                                                            & & & & \\
\iA{n-1}&\go{$T^{n-1}$}       &                                & \ctrl{1} \name{\lambda} & \ctrl{1}  & \ctrl{1} & & \ctrl{1} & \ctrl{1} & \ctrl{1} & \HD & & \ctrl{1} & \ctrl{1} & \ctrl{1}        & & & & \\
\iA{n}  &\go{$T^{n}$}         &                                & \gWinv{1} & \gateR{i} & \gW{1}  & & \gWinv{1} & \gateR{2i} & \gW{1} & \HD & & \gWinv{1} & \gateR{-i} & \gW{1}     & & & & \\
\ipsi   &\go{$\Lambda$}       &\ctrl{1}\namerm{0.3}  & \ctrl{-1} \namer{0.3} & \ctrl{-1} & \ctrl{-1} & & \ctrl{-1} & \ctrl{-1} & \ctrl{-1} & \HD & & \ctrl{-1} & \ctrl{-1} & \ctrl{-1} & &      & & \\
\qn     &\lstick{$\ket{+}_n$} &   \Zgate{i}  \qq     &           &          &           & &           &          &           & \HD & &                      &  & &   &  & \rstick{$\ket{0}_n$}   & \qn   \\
\keto   &\gateQFT{n}          &\ctrl{-1} \name{i}    & &\ctrl{-2} \name{i} & & & & \ctrl{-2} & & \HD & & & \ctrl{-2} & \gateQFTinv{n} & & \ctrl{-1} \name{k} & \meter{} \wire[r][1][classical gap = 0.2pt]{c} & \qn \rstick{$k$}
\end{quantikz}
}
\caption{The circuit implementation of the POVM $E^{\hollowstar}$ from \cref{def:dpbt_povm,def:dpbt_povm_schur} for dPBT in standard encoding. }
\label{fig:dpbt_circuit}
\end{figure}

\begin{figure}[H]
    \centering
    \newcommand{\mideq}{\midstick[8,brackets=none]{\text{$=$}}}
    \newcommand{\Corr}{\gate[wires=8]{\mathrm{Corr}}}
    \newcommand{\T}[1]{\lstick{$T^{#1}$}}
    \newcommand{\Linit}{\lstick{$\Lambda$}}
    \newcommand{\name}[1]{\wire[l][1]["#1"{above,pos=0.1}]{a} }  
    \newcommand{\Cyc}[1]{\gate[wires=6]{\pi^{#1}}}
    \newcommand{\gWt}{\gate{\widetilde{W}_{\lambda}^\dagger}}
    \newcommand{\gW}{\gate{W_{\lambda}}}
    \newcommand{\Xg}[1]{\gate{X^{#1}}}
    \newcommand{\namep}[2]{\wire[l][1]["#1"{above,pos=#2}]{a} } 
    
    \resizebox{0.5\textwidth}{!}{
    \begin{quantikz}[classical gap = 2pt]
    \T{2}  &  \Corr          &   &\mideq\qn&&\qq&    \Cyc{-k}     &         &         &    \Cyc{k}      &   &\\
    \T{3}  &                 &   &      \qn&&\qq&                 &         &         &                 &   &\\
    \qn\vd &                 &\vd&      \qn&&\vd&                 &  \vd    &  \vd    &                 &\vd&\\
    \T{n-2}&                 &   &      \qn&&\qq&                 &         &         &                 &   &\\
    \T{n-1}&                 &   &      \qn&&\qq&&\ctrl{1}\namep{\lambda}{0.1}&\ctrl{1} &               &   &\\
    \T{n}  &                 &   &      \qn&&\qq&                 &\gWt     &\gW      &                 &   &\\
    \Linit &                 &   &      \qn&&\qq&&\ctrl{-1}\namep{(\lambda{,}\0)}{0.3}&\ctrl{-1}&       &   &\\
           &                 &   &      \qn&&\qq&                 & \Xg{-k} &         &                 &   &\\
           &\ctrl{-1}\name{k}&   &      \qn&&\qq&\ctrl{-3}\name{k}&\ctrl{-1}\name{k}& &\ctrl{-3}\name{k}&   &
    \end{quantikz}
    }
    \caption{Circuit for the correction gate $\mathrm{Corr}$ from \Cref{fig:dpbt_circuit}. It is a slight modification of the correction gate for standard PGM from \Cref{fig:correction_gate}.}
    \label{fig:correction_gate_dpbt}
\end{figure}

\subsection{Quantum circuit for probabilistic PBT measurement with EPR resource state}
\label{sec:pPBT_circuits}

The measurement for optimized resource state pPBT is the standard PGM $E$ from \cref{fig:pbt_std_pgm_circuit}. For EPR state case, it is straightforward now to extend the quantum circuit for the standard PGM to a quantum circuit for pPBT POVM defined via \cref{def:G_lambda_ppbt}. For that, we can employ \cref{lem:ppbt_dilation} for each irrep $(\lambda,\0) \in \Irr{\A_{n,1}^d}$, where $\widetilde{\psi}_{(\lambda,\0)}(U)$ is determined by the corresponding diagonal matrix $G$ from \cref{def:G_lambda_ppbt} extended to Naimark dilated space as $\bra{T}\widetilde{\psi}_{(\lambda,\0)}(G)\ket{T} = 0$ for all $T \in \Paths(\Lambda,\widetilde{\Brat}) \setminus \Paths(\Lambda,\Brat)$. We introduce an additional qubit on which we would act with a unitary $U_{\lambda,a}$ depending on the irrep $(\lambda,\0) \in \Irr{\A_{n,1}^d}$ and $a \in \AC(\lambda)$:
\begin{equation}
    U_{\lambda,a} = 
    \begin{pNiceMatrix}
            \sqrt{g_{\lambda,a}} & -\sqrt{1-g_{\lambda,a}}  \\
            \sqrt{1-g_{\lambda,a}} & \sqrt{g_{\lambda,a}} 
    \end{pNiceMatrix},
    \label{eq:U_lambda_a}
\end{equation}
where $g_{\lambda,a}$ are defined in \cref{def:G_lambda_a_ppbt} for pPBT with EPR resource state. The resulting circuit \cref{fig:p_pbt_epr_circuit} is almost the same as \cref{fig:pbt_std_pgm_circuit}. The only difference is an additional register and a controlled rotation matrix (\ref{eq:U_lambda_a}). 

\begin{figure}[H]
\centering
\newcommand{\Sgi}{\gate[wires=8]{U_{\mathrm{Sch}}(n,1)^\dagger}}
\newcommand{\Sg}{\gate[wires=8,label style={yshift=0.3cm}]{U_{\mathrm{Sch}}(n,1)}}
\newcommand{\go}[1]{\gateoutput{#1}}
\newcommand{\iA}[1]{\lstick{$A_{#1}$}}
\newcommand{\ipsi}{\lstick{$\ket{\psi}$}}
\newcommand{\keto}{\lstick{$\ket{0}$}}
\newcommand{\Ginv}{\gate{U^\dagger_{\lambda,a}}}
\newcommand{\Ggate}{\gate{U_{\lambda,a}}}
\newcommand{\namel}[1]{\wire[l][1]["(\lambda {,} \0)"{above,pos=#1}]{a} }
\newcommand{\namer}[1]{\wire[r][1]["(\lambda {,} \0)"{above,pos=#1}]{a} }
\newcommand{\name}[1]{\wire[l][1]["#1"{above,pos=0.1}]{a} }
\newcommand{\Cyc}[1]{\gate[wires=#1]{\pi^\dagger}}
\newcommand{\gW}[1]{\gate[wires=#1]{\widetilde{W}_{\lambda}}}
\newcommand{\gWinv}[1]{\gate[wires=#1]{\widetilde{W}_{\lambda}^\dagger}}
\newcommand{\gateCorr}{\gate[wires=8]{\mathrm{Corr}}}
\newcommand{\clo}[1]{\ctrl[open]{#1}}  
\newcommand{\gateR}[1]{\gate{\omega_{n+1}^{#1}}}

\resizebox{\textwidth}{!}{
\begin{quantikz}[classical gap = 2.5pt, row sep = 0.6cm]
\iA{1}  &\Sg \go{$\gtM{d-1}$}&\qb                   & & & & & & & & & \HD & & & &                                                                                                     &  & \Sgi & \qq \\
\iA{2}  &\go{$T^{2}$}        &                      &\Cyc{6}  & & & & \Cyc{6}  & & & & \HD & \Cyc{6} & & &                                                                 & \gateCorr & & \\
\iA{3}  &\go{$T^{3}$}        &                      &           & & & &           & & & & \HD &          & & &                                                                         & & & \\
\qn\vd  &                    & \vd                  &           & \vd & \vd & \vd & & \vd & \vd & \vd & \vd & & \vd & \vd & \vd                            & & & \vd \\
\iA{n-2}&\go{$T^{n-2}$}      &                      &           & & & & & & & & \HD & & & &                                                                                            & & & \\
\iA{n-1}&\go{$T^{n-1}$}      &                      &           & \ctrl{1} \name{\lambda} & \ctrl{1}  & \ctrl{1} & & \ctrl{1} & \ctrl{1} & \ctrl{1} & \HD & & \ctrl{1} & \ctrl{1} & \ctrl{1}         & & & \\
\iA{n}  &\go{$T^{n}$}        &\ctrl{1}\name{\lambda \cup a}    &           & \gWinv{1} & \gateR{i} & \gW{1}  & & \gWinv{1} & \gateR{2i} & \gW{1} & \HD & & \gWinv{1} & \gateR{ni} & \gW{1}       & & & \\
\ipsi   &\go{$\Lambda$}      &\ctrl{1}\namer{0.3}   &           & \ctrl{-1} & \ctrl{-1} & \ctrl{-1} & & \ctrl{-1} & \ctrl{-1} & \ctrl{-1} & \HD & & \ctrl{-1} & \ctrl{-1} & \ctrl{-1} & & & \\
\qn     &\lstick{$\ket{0}_2$}  &\Ginv \qq             &           &           & \clo{-1} &           & &           & \clo{-1} &           & \HD & &           & \clo{-1} &           & & \rstick{$\ket{0}_2$}   & \qn   \\
\keto   &\gateQFT{n+1}       &                      & & &\ctrl{-2} \name{i} & & & & \ctrl{-2} & & \HD & & & \ctrl{-2} & \gateQFTinv{n+1} & \ctrl{-1} \name{k} & \meter{} \wire[r][1][classical gap = 0.2pt]{c} & \qn \rstick{$k$}
\end{quantikz}
}
\caption{The circuit implementation of the POVM $E^{\star}$ for pPBT with EPR resource state defined via \cref{def:G_lambda_ppbt} in standard encoding. }
\label{fig:p_pbt_epr_circuit}
\end{figure}

The complexity of computing $g_{\lambda,a}$ classically is $\widetilde{O}(1)$. Therefore the time, gate and space complexities of the circuit \cref{fig:p_pbt_epr_circuit} are the same as for \cref{fig:pbt_std_pgm_circuit}.
The correction gate $\mathrm{Corr}$ in \cref{fig:p_pbt_epr_circuit} is implemented in similar way as in \cref{fig:correction_gate}, except one needs to uncompute $U_{\lambda,a}$ on the additional qubit register and do the uncomputation conditioned on the outcome $k=0$.

\begin{figure}[H]
    \centering
    \newcommand{\mideq}{\midstick[8,brackets=none]{\text{$=$}}}
    \newcommand{\Corr}{\gate[wires=8]{\mathrm{Corr}}}
    \newcommand{\T}[1]{\lstick{$T^{#1}$}}
    \newcommand{\Linit}{\lstick{$\Lambda$}}
    \newcommand{\name}[1]{\wire[l][1]["#1"{above,pos=0.1}]{a} }  
    \newcommand{\Cyc}[1]{\gate[wires=6]{\pi^{#1}}}
    \newcommand{\gWt}{\gate{\widetilde{W}_{\lambda}^\dagger}}
    \newcommand{\gW}{\gate{W_{\lambda}}}
    \newcommand{\Xg}[1]{\gate{X^{#1}}}
    \newcommand{\namep}[2]{\wire[l][1]["#1"{above,pos=#2}]{a} } 
    
    \resizebox{0.5\textwidth}{!}{
    \begin{quantikz}[classical gap = 2pt]
    \T{2}  &  \Corr          &   &\mideq\qn&&\qq&    \Cyc{-k}     &         &         &    \Cyc{k}      &   &\\
    \T{3}  &                 &   &      \qn&&\qq&                 &         &         &                 &   &\\
    \qn\vd &                 &\vd&      \qn&&\vd&                 &  \vd    &  \vd    &                 &\vd&\\
    \T{n-2}&                 &   &      \qn&&\qq&                 &         &         &                 &   &\\
    \T{n-1}&                 &   &      \qn&&\qq&&\ctrl{1}\namep{\lambda}{0.1}&\ctrl{1} &               &   &\\
    \T{n}  &                 &   &      \qn&&\qq&                 &\gWt     &\gW      &                 &   &\\
    \Linit &                 &   &      \qn&&\qq&&\ctrl{-1}\namep{(\lambda{,}\0)}{0.25}&\ctrl{-1}&      &   &\\
           &                 &   &      \qn&&\qq&                 & \targ{} &         &                 &   &\\
           &\ctrl{-1}\name{k}&   &      \qn&&\qq&\ctrl{-3}\name{k}&\ctrl[open]{-1}    & &\ctrl{-3}\name{k}&   &
    \end{quantikz}
    }
    \caption{Circuit for the correction gate $\mathrm{Corr}$ from \Cref{fig:p_pbt_epr_circuit}. It is a slight modification of the correction gate for standard PGM from \Cref{fig:correction_gate}.}
    \label{fig:correction_gate_ppbt}
\end{figure}

\subsection{Efficient quantum algorithms for generic PBT measurements}
\label{sec:G_PBT_circuits}

We can combine two implementations for dPBT and pPBT measurement and implement a generic measurement $E^{\star}$ from \cref{def:povm_generic} defined via $G$ operator, which is given in the Gelfand--Tsetlin basis via \cref{def:G_lambda_povm_generic}. 

Given a phase gate $\widetilde{Z} \defeq \sum_{i=1}^n \omega_{n+1}^i \proj{i}$ and a unitary $U_{\Lambda,\gamma}$ acting on auxiliary qubit defined as
\begin{equation}
    U_{\Lambda,\gamma} = 
    \begin{cases}
    R_y(g_{\lambda,a}) 
    &\text{if $\Lambda = (\lambda,\0), \gamma = \lambda \cup a$},\\
    R_y(g_{\mu}) 
    &\text{if $\Lambda = (\mu,\square), \gamma = \mu$ },
    \end{cases}
    \qquad
    R_y(g) \defeq 
    \begin{pNiceMatrix}
            \sqrt{g} & -\sqrt{1-g}  \\
            \sqrt{1-g} & \sqrt{g} 
    \end{pNiceMatrix}.
    \label{eq:U_Lambda_gamma}
\end{equation}
we can implement POVM $E^{\star}$ as in \Cref{fig:generic_pbt_circuit}. 

\begin{figure}[H]
\centering
\newcommand{\Sgi}{\gate[wires=8]{U_{\mathrm{Sch}}(n,1)^\dagger}}
\newcommand{\Sg}{\gate[wires=8,label style={yshift=0.3cm}]{U_{\mathrm{Sch}}(n,1)}}
\newcommand{\go}[1]{\gateoutput{#1}}
\newcommand{\iA}[1]{\lstick{$A_{#1}$}}
\newcommand{\ipsi}{\lstick{$\ket{\psi}$}}
\newcommand{\keto}{\lstick{$\ket{0}$}}
\newcommand{\Ginv}{\gate{U^\dagger_{\Lambda,\gamma}}}
\newcommand{\Ggate}{\gate{U_{\Lambda,\gamma}}}
\newcommand{\Zgate}[1]{\gate{\widetilde{Z}^{#1}}}
\newcommand{\namel}[1]{\wire[l][1]["(\lambda {,} \0)"{above,pos=#1}]{a} }
\newcommand{\namer}[1]{\wire[r][1]["\Lambda"{above,pos=#1}]{a} }
\newcommand{\namerm}[1]{\wire[r][1]["(\mu {,} \square)"{above,pos=#1}]{a} }
\newcommand{\name}[1]{\wire[l][1]["#1"{above,pos=0.1}]{a} }
\newcommand{\lambdar}[1]{\wire[r][1]["(\lambda {,} \0)"{above,pos=#1}]{a} }
\newcommand{\Cyc}[1]{\gate[wires=#1]{\pi^\dagger}}
\newcommand{\gW}[1]{\gate[wires=#1]{\widetilde{W}_{\lambda}}}
\newcommand{\gWinv}[1]{\gate[wires=#1]{\widetilde{W}_{\lambda}^\dagger}}
\newcommand{\gateCorr}{\gate[wires=9]{\mathrm{Corr}}}
\newcommand{\clo}[1]{\ctrl[open]{#1}} 
\newcommand{\gateR}[1]{\gate{\omega_{n+1}^{#1}}}

\resizebox{\textwidth}{!}{
\begin{quantikz}[classical gap = 2.5pt, row sep = 0.6cm]
\iA{1}  &\Sg \go{$\gtM{d-1}$}&\qb                  &                    &                      & & & & & & & \HD & & & &                                                                                   & & & \Sgi & \qq \\
\iA{2}  &\go{$T^{2}$}        &                     &\Cyc{6}             &                      & & & \Cyc{6}  & & & & \HD & \Cyc{6} & & &                                                                & & \gateCorr & & \\
\iA{3}  &\go{$T^{3}$}        &                     &                    &                      & & &           & & & & \HD &          & & &                                                                        & & & & \\
\qn\vd  &                    &\vd                  &                    &\vd                   & \vd & \vd & & \vd & \vd & \vd & \vd & & \vd & \vd & \vd                            &\vd& & & \vd \\
\iA{n-2}&\go{$T^{n-2}$}      &                     &                    &                      & & & & & & & \HD & & & &                                                                                           & & & & \\
\iA{n-1}&\go{$T^{n-1}$}      &                     &                    &\ctrl{1}\name{\lambda}& \ctrl{1}  & \ctrl{1} & & \ctrl{1} & \ctrl{1} & \ctrl{1} & \HD & & \ctrl{1} & \ctrl{1} & \ctrl{1}        & & & & \\
\iA{n}  &\go{$T^{n}$}        &\ctrl{1}\name{\gamma}&                    &\gWinv{1}             & \gateR{i} & \gW{1}  & & \gWinv{1} & \gateR{2i} & \gW{1} & \HD & & \gWinv{1} & \gateR{ni} & \gW{1}       &\ctrl{1}& & & \\
\ipsi   &\go{$\Lambda$}      &\ctrl{1}\namer{0.1}  &\ctrl{2}\namerm{0.3}&\ctrl{-1}\lambdar{0.3}& \ctrl{-1} & \ctrl{-1} & & \ctrl{-1} & \ctrl{-1} & \ctrl{-1} & \HD & & \ctrl{-1} & \ctrl{-1} & \ctrl{-1} &\ctrl{1}& & & \\
\qn     &\lstick{$\ket{0}_2$}&\Ginv     \qq        &                    &                      & \clo{-1} &           & &           & \clo{-1} &           & \HD & &           & \clo{-1} &           &\Ggate & & \rstick{$\ket{0}_2$}   & \qn  \\
\qn     &\lstick{$\ket{+}_n$}&\qq                  &\Zgate{i}           &                      &          &           & &           &          &           & \HD & &                      &   &   &  & & \rstick{$\ket{1}_n$}   & \qn   \\
\keto   &\gateQFT{n+1}       &                     &\ctrl{-1}\name{i}   &                      &\ctrl{-2} \name{i} & & & & \ctrl{-2} & & \HD & & & \ctrl{-2} & & \gateQFTinv{n+1} & \ctrl{-1} \name{k} & \meter{} \wire[r][1][classical gap = 0.2pt]{c} & \qn \rstick{$k$}
\end{quantikz}
}
\caption{The circuit implementation of a generic POVM $E^{\star}$ for PBT from \cref{def:povm_generic} in standard encoding.
}
\label{fig:generic_pbt_circuit}
\end{figure}

Note that our construction works for any diagonal matrix $\psi_{\Lambda}(G)$ as long as its diagonal entries $g_{\lambda,a},g_{\mu}$ are efficiently classically computable. The correction gate $\mathrm{Corr}$ in \cref{fig:generic_pbt_circuit} is implemented in a similar way as in \cref{fig:correction_gate,fig:correction_gate_dpbt,fig:correction_gate_ppbt}.

\begin{figure}[H]
    \centering
    \newcommand{\mideq}{\midstick[8,brackets=none]{\text{$=$}}}
    \newcommand{\Corr}{\gate[wires=9]{\mathrm{Corr}}}
    \newcommand{\T}[1]{\lstick{$T^{#1}$}}
    \newcommand{\Linit}{\lstick{$\Lambda$}}
    \newcommand{\name}[1]{\wire[l][1]["#1"{above,pos=0.1}]{a} }  
    \newcommand{\namep}[2]{\wire[l][1]["#1"{above,pos=#2}]{a} } 
    \newcommand{\Cyc}[1]{\gate[wires=6]{\pi^{#1}}}
    \newcommand{\gWt}{\gate{\widetilde{W}_{\lambda}^\dagger}}
    \newcommand{\gW}{\gate{W_{\lambda}}}
    \newcommand{\Xg}[1]{\gate{X^{#1}}}
    
    \resizebox{0.5\textwidth}{!}{
    \begin{quantikz}[classical gap = 2pt]
    \T{2}  &  \Corr          &   &\mideq\qn&&\qq&    \Cyc{-k}     &                                     &         &    \Cyc{k}      &                       &\\
    \T{3}  &                 &   &      \qn&&\qq&                 &                                     &         &                 &                       &\\
    \qn\vd &                 &\vd&      \qn&&\vd&                 &  \vd                                &  \vd    &                 &\vd                    &\\
    \T{n-2}&                 &   &      \qn&&\qq&                 &                                     &         &                 &                       &\\
    \T{n-1}&                 &   &      \qn&&\qq&                 &\ctrl{1}\namep{\lambda}{0.1}         &\ctrl{1} &                 &                       &\\
    \T{n}  &                 &   &      \qn&&\qq&                 &\gWt                                 &\gW      &                 &                       &\\
    \Linit &                 &   &      \qn&&\qq&                 &\ctrl{-1}\namep{(\lambda{,}\0)}{0.25}&\ctrl{-1}& &\ctrl{1}\namep{(\mu {,} \square)}{0.2} &\\
           &                 &   &      \qn&&\qq&                 &                                     & \targ{} &                 &\targ{}                &\\
           &                 &   &      \qn&&\qq&                 &\Xg{-(k-1)}                          &         &                 &                       &\\
           &\ctrl{-1}\name{k}&   &      \qn&&\qq&\ctrl{-4}\name{k}&  \ctrl{-1}\namep{k>0}{0.2}          &\ctrl[open]{-2}&\ctrl{-4}\name{k}&                 &
    \end{quantikz}
    }
    \caption{Circuit for the correction gate $\mathrm{Corr}$ from \Cref{fig:generic_pbt_circuit}.}
    \label{fig:correction_gate_generic}
\end{figure}

\subsection{Exponentially improved lower bound for non-local quantum computation}

Port-based teleportation has interesting applications in holography and non-local quantum computation \cite{may2019quantum,may2022complexity}, where it was argued that the complexity of the local operation controls the amount of entanglement needed to implement it non-locally, using ideas from AdS/CFT correspondence. In particular, it was derived in \cite[Lemma 9]{may2022complexity} that port-based teleportation can be used to lower bound the amount of entanglement needed to implement a given channel (from a large class of one-sided quantum channels) non-locally in terms of the so-called \emph{interaction-class circuit complexity} \cite[Definition 3]{may2022complexity} denoted by $\mathcal{C}$:
\begin{equation}
    \Omega \of*{ \log \log \mathcal{C}} \leq E_c,
\end{equation}
where $E_c$ is the entanglement cost needed to implement non-locally a unitary with complexity $\mathcal{C}$, see \cite[Equation 29]{may2022complexity}. Port-based teleportation can also be used to find an upper bound \cite{beigi2011simplified,speelman2016nonlocal,may2022complexity}. The derivation of the lower bound uses a trivial upper bound $\exp\of{O(p)}$ for the complexity of the port-based teleportation in terms of the number of ports $n$, see \cite[Equation 47]{may2022complexity}. It is already pointed out in \cite[page 28]{may2022complexity} that a better implementation of the port-based teleportation protocol would lead to a better lower bound. Complexities of our implementations of PBT protocols are $\widetilde{O}(nd^4)$, therefore this immediately translates, according to \cite[Lemma 9]{may2022complexity}, to a better lower bound:
\begin{equation}
    \Omega \of*{ \log \mathcal{C}} \leq E_c,
\end{equation}
thus improving exponentially upon the previous known bound.

\section{Logarithmic space PBT via Yamanouchi encoding}\label{sec:yaman_enc}

Using the Yamanouchi encoding possibility of the mixed quantum Schur transform \cite{grinko2023gelfandtsetlin} defined in \cref{eq:quantum_schur_domain_range_yamanouchi}, it is possible to reduce the space complexity of the constructions presented in \cref{sec:circuit_for_pgm} from $O(n\log(n))$ to $O(\log(n))$. The resulting circuits for generic PBT measurements $E^{\star}$, including standard PGM $E$, dPBT POVM $E^{\hollowstar}$ and the POVM for EPR pPBT, are presented in \cref{fig:p_pbt_epr_circuit_yamanouchi}. They look essentially the same as \cref{fig:generic_pbt_circuit} except for the differences in implementation of gates $\pi^\dagger$, $\sigma_i$ and $\widetilde{W}_\lambda$ which stem from the different type of encoding. 

\begin{figure}[H]
\centering
\newcommand{\Sgi}{\gate[wires=8]{U_{\mathrm{Sch}}(n,1)^\dagger}}
\newcommand{\Sg}{\gate[wires=8,label style={yshift=0.3cm}]{U_{\mathrm{Sch}}(n,1)}}
\newcommand{\go}[1]{\gateoutput{#1}}
\newcommand{\iA}[1]{\lstick{$A_{#1}$}}
\newcommand{\ipsi}{\lstick{$\ket{\psi}$}}
\newcommand{\keto}{\lstick{$\ket{0}$}}
\newcommand{\Ginv}{\gate{U^\dagger_{\lambda,a}}}
\newcommand{\Ggate}{\gate{U_{\lambda,a}}}
\newcommand{\namel}[1]{\wire[l][1]["(\lambda {,} \0)"{above,pos=#1}]{a} }
\newcommand{\namer}[1]{\wire[r][1]["(\lambda {,} \0)"{above,pos=#1}]{a} }
\newcommand{\name}[1]{\wire[l][1]["#1"{above,pos=0.1}]{a} }
\newcommand{\gateRt}[1]{\gate[wires=2]{\omega_{n+1}^{#1}}}
\newcommand{\finalcw}{\wire[r][1][classical gap = 0.2pt]{c}}
\newcommand{\Cyc}[1]{\gate[wires=#1]{\pi^\dagger}}
\newcommand{\gW}[1]{\gate[wires=#1]{\widetilde{W}_{\lambda}}}
\newcommand{\gWinv}[1]{\gate[wires=#1]{\widetilde{W}_{\lambda}^\dagger}}
\newcommand{\gateCorr}{\gate[wires=9]{\mathrm{Corr}}}
\newcommand{\clo}[1]{\ctrl[open]{#1}}  
\newcommand{\Zgate}[1]{\gate{\widetilde{Z}^{#1}}}
\newcommand{\nrm}[1]{\wire[r][1]["(\mu {,} \square)"{above,pos=#1}]{a} }

\resizebox{\textwidth}{!}{
\begin{quantikz}[classical gap = 2.5pt, row sep = 0.6cm]
\iA{1}  &\Sg \go{$\gtM{d-1}$}&\qb&                   &         &            &          &           &         &            &           &         & \HD &         &            &           &         &           &         &\Sgi                &\qq \\
\iA{2}  &\go{$y_1$}          &   &                   &\Cyc{5}  &  \ctrl{1}  & \ctrl{1} & \ctrl{1}  &\Cyc{5}  &  \ctrl{1}  & \ctrl{1}  & \ctrl{1}& \HD &\Cyc{5}  &  \ctrl{1}  & \ctrl{1}  & \ctrl{1}&           &\gateCorr&                    &    \\
\iA{3}  &\go{$y_2$}          &   &                   &         &  \ctrl{2}  & \ctrl{2} & \ctrl{2}  &         &  \ctrl{2}  & \ctrl{2}  & \ctrl{2}& \HD &         &  \ctrl{2}  & \ctrl{2}  & \ctrl{2}&           &         &                    &    \\
\qn\vd  &                    &\vd&       \vd         &         &            &          &           &         &            &           &         & \vd &         &            &           &         &  \vd      &         &                    &\vd \\
\iA{n-2}&\go{$y_{n-1}$}      &   &                   &         &  \ctrl{1}  & \ctrl{1} & \ctrl{1}  &         &  \ctrl{1}  & \ctrl{1}  & \ctrl{1}& \HD &         &  \ctrl{1}  & \ctrl{1}  & \ctrl{1}&           &         &                    &    \\
\iA{n-1}&\go{$y_{n}$}        &   &                   &         & \gWinv{2}  &\gateRt{i}& \gW{2}    &         & \gWinv{2}  &\gateRt{2i}& \gW{2}  & \HD &         & \gWinv{2}  &\gateRt{ni}& \gW{2}  &           &         &                    &    \\
\iA{n}  &\go{$y_{n+1}$}      &   &\ctrl{1}\name{y_{n+1}}&\ctrl{-1}&         &          &           &\ctrl{-1}&            &           &         & \HD &\ctrl{-1}&            &           &         &  \ctrl{1} &         &                    &    \\
\ipsi&\go{$\Lambda$}&\ctrl{2}\nrm{0.3}&\ctrl{1}\namer{0.3}&\ctrl{-1}&\ctrl{-1}&\ctrl{-1}&\ctrl{-1} &\ctrl{-1}&  \ctrl{-1} & \ctrl{-1} &\ctrl{-1}& \HD &\ctrl{-1}&  \ctrl{-1} & \ctrl{-1} &\ctrl{-1}&  \ctrl{1} &         &                    &    \\
\qn     &\lstick{$\ket{0}$}  &\qq&\Ginv              &         &            & \clo{-1} &           &         &            & \clo{-1}  &         & \HD &         &            & \clo{-1}  &         &  \Ggate   &         &\rstick{$\ket{0}$}  &\qn \\
\qn     &\lstick{$\ket{+}_n$}&\Zgate{i} \qq&         &         &            &          &           &         &            &           &         & \HD &         &            &           &         &           &         &\rstick{$\ket{1}_n$}&\qn \\
\keto   &\gateQFT{n+1}       &\ctrl{-1}\name{i}&     &         &            & \ctrl{-2} \name{i}  &&         &            & \ctrl{-2} &         & \HD &         &            & \ctrl{-2} &         &\gateQFTinv{n+1}&\ctrl{-1}\name{k}&\meter{} \finalcw &\qn \rstick{$k$}
\end{quantikz}
}
\caption{The circuit implementation of generic POVM $E^{\star}$ in Yamanouchi encoding. Notice that the structure of the circuit is exactly the same as with standard encoding, see \cref{fig:generic_pbt_circuit}, except that wires $T^2,\ldots,T^n$ containing information about path $T=(T^0,\ldots,T^n)$ via standard encoding are replaced by wires $y_1,\ldots,y_{n+1}$ which contain the same information via Yamanouchi encoding. This is obtained by another form of mixed Schur isometry, see \cref{eq:Yamanouchi_encoding}. This requires reformulation for all subsequent gates from the standard to Yamanouchi encoding, which we present on \cref{fig:perm_circuit_yamanouchi,fig:transposition_circuit_yamanouchi_full,fig:transposition_circuit_yamanouchi}.}
\label{fig:p_pbt_epr_circuit_yamanouchi}
\end{figure}

Similarly to \cref{sec:circuit_for_pgm} we can count the total gate and time complexities in \cref{fig:p_pbt_epr_circuit_yamanouchi} as follows:
\begin{enumerate}[leftmargin=*]
    \item The complexity of implementing the mixed quantum Schur transform $U_{\text{Sch}}(n,1)$ in Yamanouchi encoding is $\widetilde{O}(nd^4)$ \cite{grinko2023gelfandtsetlin}.
    \item The complexity of implementing $\pi = \sigma_{1}\sigma_{2} \dotsc \sigma_{n-1}$ based on \cref{fig:perm_circuit_yamanouchi} is $\widetilde{O}(n d^2)$.
    The factor $n$ comes from the number of transpositions $\sigma_i$ in $\pi$, which are implemented sequentially.
        \begin{figure}[H]
        \centering
        \newcommand{\mideq}{\midstick[10,brackets=none]{\text{$=$}}}
        \newcommand{\y}[1]{\lstick{$y_{#1}$}}
        \newcommand{\Ld}{\lstick{$\Lambda$}}
        \newcommand{\sgm}[1]{\gate[wires=2]{\sigma_{#1}}}
        \newcommand{\Cyc}[1]{\gate[wires=#1]{\pi^\dagger}}
        
        \resizebox{0.6\textwidth}{!}{
        \begin{quantikz}[classical gap = 2.5pt, row sep = 0.3cm]
        \y{1}  &\Cyc{8}  &   &\mideq\qn&&\sgm{1}\qq  &\ctrl{1} & \ctrl{1}&\HD&\ctrl{1 }&\ctrl{1 }&\\
        \y{2}  &         &   &      \qn&&         \qq& \sgm{2} & \ctrl{1}&\HD&\ctrl{1 }&\ctrl{1 }&\\
        \y{3}  &         &   &      \qn&&\ctrl{-1}\qq&         & \sgm{3} &\HD&\ctrl{1 }&\ctrl{1 }&\\
        \y{4}  &         &   &      \qn&&\ctrl{-1}\qq&\ctrl{-1}&         &\HD&\ctrl{2 }&\ctrl{2 }&\\
        \qn\vd &         &\vd&      \qn&&            &         &         &\vd&         &         &\\
        \y{n-2}&         &   &      \qn&&\ctrl{-2}\qq&\ctrl{-2}&\ctrl{-2}&\HD&\sgm{n-2}&\ctrl{1 }&\\
        \y{n-1}&         &   &      \qn&&\ctrl{-1}\qq&\ctrl{-1}&\ctrl{-1}&\HD&         &\sgm{n-1}&\\
        \y{n}  &         &   &      \qn&&\ctrl{-1}\qq&\ctrl{-1}&\ctrl{-1}&\HD&\ctrl{-1}&         &\\
        \y{n+1}&\ctrl{-1}&   &      \qn&&\ctrl{-1}\qq&\ctrl{-1}&\ctrl{-1}&\HD&\ctrl{-1}&\ctrl{-1}&\\
        \Ld    &\ctrl{-1}&   &      \qn&&\ctrl{-1}\qq&\ctrl{-1}&\ctrl{-1}&\HD&\ctrl{-1}&\ctrl{-1}&
        \end{quantikz}
        }
        \caption{Quantum circuit for the cyclic permutation $\pi^\dagger = \sigma_{n-1} \sigma_{n-2} \cdots \sigma_2 \sigma_1$ in the Yamanouchi encoding. Each transposition $\sigma_i$ acts locally on registers $y_{i}$ and $ y_{i+1}$, while being controlled on all other registers $y_i$ and $\Lambda$. Note that in standard encoding, $\sigma_i$ was controlled only on two registers, compare with \cref{fig:cyclic_circuit_std}. \cref{fig:transposition_circuit_yamanouchi_full} presents the exact form of a transposition $\sigma_i$ gate in Yamanouchi encoding.}
        \label{fig:perm_circuit_yamanouchi}
        \end{figure}
    \item Each transposition $\sigma_i$ is more tricky in Yamanouchi encoding, see \cref{fig:transposition_circuit_yamanouchi_full,fig:transposition_circuit_yamanouchi}.
    More specifically, to implement $\sigma_i$ we need to obtain the full information about the Young diagram $T^{i-1}$ by using an auxiliary space and recording the description of $T^{i-1}$ sequentially from registers $y_k$ for $k<i$ via a $\mathrm{Rec_k}$ gates, see \cref{fig:transposition_circuit_yamanouchi_full}.
    Now, according to \cref{fig:transposition_circuit_yamanouchi}, each $\sigma_i$ can be implemented with $\widetilde{O}(d^2)$ gates $\mathrm{R}_{i,j}$ from \cref{eq:Rjk}, acting on wires $y_i,\,y_{i+1}$ each of which decompose into $\widetilde{O}(1)$ elementary gates and $\widetilde{O}(1)$ auxiliary qubits for computation of rotation parameters $r_{j,k}$.
        \begin{figure}[H]
        \centering
        \newcommand{\mideq}{\midstick[11,brackets=none]{\text{    $=$    }}}
        \newcommand{\y}[1]{\lstick{$y_{#1}$}}
        \newcommand{\Ld}{\lstick{$\Lambda$}}
        \newcommand{\sgm}[1]{\gate[wires=2]{\sigma_{#1}}}
        \newcommand{\rec}[1]{\gate{\mathrm{Rec}_{#1}}}
        \newcommand{\reci}[1]{\gate{\mathrm{Rec}_{#1}^\dagger}}
        \newcommand{\gcopy}{\gate{\mathrm{Copy}}}
        \newcommand{\gcopyi}{\gate{\mathrm{Copy}^\dagger}}
        \newcommand{\kol}{\lstick{$\ket{\bf{0}}$}}
        \newcommand{\kor}{\rstick{$\ket{\bf{0}}$}}
        \newcommand{\qwb}{\qwbundle{}}
        \newcommand{\nameu}{\wire[l][1]["T^{i-1}"{above=0.2,pos=0.4}]{q} }
        \newcommand{\named}{\wire[l][1]["T^{i+1}"{above=0.2,pos=0.4}]{q} }
        
        \resizebox{\textwidth}{!}{
        \begin{quantikz}[classical gap = 2.5pt, row sep = 0.5cm]
        \qn    &           &   &         &&   &\kol&\rec{1}\qb& \rec{2} & \rec{3}  &\HD&\rec{i-1} &\ctrl{6}\nameu &\reci{i-1}&\HD& \reci{3}& \reci{2}&\reci{1} &\kor&\qn& \\
        \y{1}  & \ctrl{1}  &   &\mideq\qn&&\qq&    &\ctrl{-1} &         &          &\HD&          &               &          &\HD&         &         &\ctrl{-1}&    &   & \\
        \y{2}  & \ctrl{1}  &   & \qn     &&\qq&    &          &\ctrl{-2}&          &\HD&          &               &          &\HD&         &\ctrl{-2}&         &    &   & \\
        \y{3}  & \ctrl{2}  &   & \qn     &&\qq&    &          &         &\ctrl{-3} &\HD&          &               &          &\HD&\ctrl{-3}&         &         &    &   & \\
        \qn\vd &           &\vd&         &&\vd&    &   \vd    &  \vd    &  \vd     &\vd&          &               &          &\vd&  \vd    &  \vd    &   \vd   &    &\vd& \\
        \y{i-1}& \ctrl{1}  &   & \qn     &&\qq&    &          &         &          &\HD& \ctrl{-5}&               & \ctrl{-5}&\HD&         &         &         &    &   & \\
        \y{i}  & \sgm{i}   &   & \qn     &&\qq&    &          &         &          &\HD&          &    \sgm{i}    &          &\HD&         &         &         &    &   & \\
        \y{i+1}&           &   & \qn     &&\qq&    &          &         &          &\HD&          &               &          &\HD&         &         &         &    &   & \\
        \y{i+2}& \ctrl{-1} &   & \qn     &&\qq&    &          &         &          &\HD& \ctrl{5} &               & \ctrl{5} &\HD&         &         &         &    &   & \\
        \qn\vd &           &\vd&         &&\vd&    &   \vd    &  \vd    &  \vd     &\vd&          &               &          &\vd&  \vd    &  \vd    &   \vd   &    &\vd& \\
        \y{n-1}& \ctrl{-2} &   & \qn     &&\qq&    &          &         & \ctrl{3} &\HD&          &               &          &\HD& \ctrl{3}&         &         &    &   & \\
        \y{n}  & \ctrl{-1} &   & \qn     &&\qq&    &          & \ctrl{2}&          &\HD&          &               &          &\HD&         & \ctrl{2}&         &    &   & \\
        \y{n+1}& \ctrl{-1} &   & \qn     &&\qq&    & \ctrl{1} &         &          &\HD&          &               &          &\HD&         &         & \ctrl{1}&    &   & \\
        \Ld    & \ctrl{-1} &   & \qn     &&\qb&    &\reci{n+1}& \reci{n}&\reci{n-1}&\HD&\reci{i+2}&\ctrl{-6}\named&\rec{i+2} &\HD&\rec{n-1}&  \rec{n}&\rec{n+1}&    &   & 
        \end{quantikz}   
        }
        \caption{Quantum circuit for implementing a transposition $\sigma_i$ in the Yamanouchi encoding. To correctly compute the rotation angles, we shall recover the information about parts $T^{i-1}$ and $T^{i+1}$ of a path $T$ from its form in Yamanouchi encoding. This is achieved by a sequence of recording gates $\text{Rec}_i$ performed on two auxiliary registers and controlled on $y_i$ register. After recovering information about $T^{i-1}$ and $T^{i+1}$, transpoition $\sigma_i$ can be simply performed as presented on \cref{fig:transposition_circuit_yamanouchi}.}
        \label{fig:transposition_circuit_yamanouchi_full}
        \end{figure}

        \begin{figure}[H]
        \centering
        \newcommand{\y}[1]{\lstick{$y_{#1}$}}
        \newcommand{\Tu}{\lstick{$T^{i-1}$}}
        \newcommand{\Td}{\lstick{$T^{i+1}$}}
        
        \newcommand{\Tust}{\lstick[5]{$T^{i-1}$}}
        \newcommand{\Tdst}{\lstick[5]{$T^{i+1}$}}
        
        \newcommand{\Tui}[1]{\rstick{$T^{i-1}_{#1}$}}
        \newcommand{\Tdi}[1]{\rstick{$T^{i+1}_{#1}$}}
        
        \newcommand{\grm}[1]{\gate{\mathrm{#1}}}
        \newcommand{\sgma}[1]{\gate[wires=2]{\sigma_{#1}}}
        \newcommand{\sgm}[1]{\gate[wires=2]{\mathrm{R}_{#1}}}
        
        \newcommand{\mideq}{\midstick[12,brackets=none]{\text{$=$}}}
        \newcommand{\clo}[1]{\ctrl[open]{#1}}  
        
        \resizebox{\textwidth}{!}{
        \begin{quantikz}[classical gap = 2.5pt]
        \qn    &         &&\mideq\qn&&\Tust  &\qq&\ctrl{7}&        &\HD&        &        &\ctrl{1} &\HD&\ctrl{3}   &\ctrl{4} &\HD&           &         &\HD&           &        &        &\HD&        &\ctrl{7}&\Tui{1}  \\
        \qn    &         &&      \qn&&       &\qq&        &\ctrl{7}&\HD&        &        &\ctrl{4} &\HD&           &         &\HD&\ctrl{2}   &\ctrl{3} &\HD&           &        &        &\HD&\ctrl{7}&        &\Tui{2}  \\
        \Tu \qb&\ctrl{3} &&      \qn&&       &\qn&        &        &\vd&  \vd   &  \vd   &         &\vd&           &         &\vd&           &         &\vd&   \vd     &  \vd   &  \vd   &\vd&        &        &\vd      \\
        \qn    &         &&      \qn&&       &\qq&        &        &\HD&\ctrl{7}&        &         &\HD&\ctrl{2}   &         &\HD&\ctrl{2}   &         &\HD&\ctrl{1}   &        &\ctrl{7}&\HD&        &        &\Tui{d-1}\\
        \qn    &         &&      \qn&&       &\qq&        &        &\HD&        &\ctrl{7}&         &\HD&           &\ctrl{1} &\HD&           &\ctrl{1} &\HD&\ctrl{1}   &\ctrl{7}&        &\HD&        &        &\Tui{d}  \\
        \y{i}  &\sgma{i} &&      \qn&&\y{i}  &\qq&        &        &\HD&        &        &\sgm{1,2}&\HD&\sgm{1,d-1}&\sgm{1,d}&\HD&\sgm{2,d-1}&\sgm{2,d}&\HD&\sgm{d-1,d}&        &        &\HD&        &        &         \\
        \y{i+1}&         &&      \qn&&\y{i+1}&\qq&        &        &\HD&        &        &         &\HD&           &         &\HD&           &         &\HD&           &        &        &\HD&        &        &         \\
        \qn    &         &&      \qn&&\Tdst  &\qq&\grm{-} &        &\HD&        &        & \clo{-1}&\HD& \clo{-1}  & \clo{-1}&\HD&           &         &\HD&           &        &        &\HD&        &\grm{+} &\Tdi{1}  \\
        \qn    &         &&      \qn&&       &\qq&        &\grm{-} &\HD&        &        & \clo{-1}&\HD&           & \clo{-1}&\HD& \clo{-2}  & \clo{-2}&\HD&           &        &        &\HD&\grm{+} &        &\Tdi{2}  \\
        \Td \qb&\ctrl{-3}&&      \qn&&       &\qn&  \vd   &  \vd   &\vd&        &        &  \vd    &\vd&           &         &\vd&           &         &\vd&           &        &        &\vd&  \vd   &  \vd   &\vd      \\
        \qn    &         &&      \qn&&       &\qq&        &        &\HD&\grm{-} &        &         &\HD& \clo{-3}  &         &\HD& \clo{-2}  &         &\HD& \clo{-4}  &        &\grm{+} &\HD&        &        &\Tdi{d-1}\\
        \qn    &         &&      \qn&&       &\qq&        &        &\HD&        &\grm{-} &         &\HD&           &\clo{-4} &\HD&           & \clo{-3}&\HD& \clo{-1}  &\grm{+} &        &\HD&        &        &\Tdi{d}  
        \end{quantikz}   
        }
        \caption{Quantum circuit implementing the transposition $\sigma_i$ in the Yamanouchi encoding is similar to the circuit in the standard encoding, see \cref{fig:transposition_circuit_std}. Controlled ``$\pm$'' gates and rotation $\mathrm{R}_{j,k}$ gates are exactly the same as on \cref{fig:transposition_circuit_std}.}
        \label{fig:transposition_circuit_yamanouchi}
        \end{figure}

    \item The operation $\widetilde{W}$ defined in \cref{def:W_lambda} is implemented similarly to \cref{fig:W_gate} and the recording procedure described in \cref{fig:transposition_circuit_yamanouchi_full}. The time and gate complexities for that are $\widetilde{O}(n d^2)$.
    \item The implementation of $\omega^{ki}_{n+1}$ has complexity $\widetilde{O}(n)$.
    \item The complexity of the Quantum Fourier Transform $\mathrm{QFT}_{n+1}$ is $\widetilde{O}(1)$.
    \item One can optionally implement the correction gate $\mathrm{Corr}$ together with inverse mixed Schur transform at the end to get the right post-measurement state as in \cref{fig:generic_pbt_circuit}. The complexity of implementing this does not change both the total gate and time complexities of the full circuit, adding only a constant factor overhead.
    \item Overall, counting everything together gives $\widetilde{O}(n^2 d^4)$ for both total gate and time complexities in Yamanouchi encoding: essentially all nontrivial operations run sequentially.
\end{enumerate}

\noindent Similarly, we can count the number of auxiliary qubits needed to implement our circuit \cref{fig:pbt_std_pgm_circuit} in the Yamanouchi encoding similarly to \cref{sec:circuit_for_pgm}:

\begin{enumerate}[leftmargin=*]
    \item The number of auxiliary qubits needed to implement the mixed Schur transform isometry in the standard encoding \cite{grinko2023gelfandtsetlin} and create a Naimark's dilation from \cref{sec:pgm:dilation} after the mixed Schur transfer is $d^2\log(n)\polylog(d,1/\epsilon)$. One important technical remark regarding \cref{fig:p_pbt_epr_circuit_yamanouchi,fig:perm_circuit_yamanouchi,fig:transposition_circuit_yamanouchi,fig:transposition_circuit_yamanouchi_full} is that we do not depict an additional qudit of dimension $n+1$ which is needed to extend the Bratteli diagram from $\Brat$ to $\widetilde{\Brat}$ according to \cref{sec:pgm:dilation}: it extends the space of Yamanouchi words on alphabet $[d]$ to the space of Yamanouchi words with at most one symbol $d+1$ among $y_1,\dotsc,y_n$ by encoding the location $i$ of the value $d+1$ among $y_1,\dotsc,y_n$ (if there exist $i$ such that $y_i = d+1$ then it also must be $y_{n+1}=d+1$). Incorporation of this qubit is trivial and it does not change the time and gate complexities, however, the depiction of it is not convenient, so we omit it.
    \item The number of auxiliary qubits needed to implement each gate $\sigma_i$ from \cref{fig:transposition_circuit_yamanouchi} is $\log(n)\polylog(d,1/\epsilon)$. We implement all gates $\sigma_i$ sequentially in the Yamanouchi encoding so we can reuse $\log(n)\polylog(d,1/\epsilon)$ auxiliary qubits for each gate.
    \item The gates $\widetilde{W}_\lambda$ and $W_\lambda$ require $d^2\log(n)\polylog(d,1/\epsilon)$ qubits to implement gates $\mathrm{R}_i$, see \cref{fig:W_gate}.
    \item Overall, the total number of auxiliary qubits needed is $d^2\log(n)\polylog(d,1/\epsilon)$.
\end{enumerate}

The above analysis finishes the proof of the second statement of \cref{thm:pbt}.

\section{Quantum circuits for optimized resource states}
\label{sec:opt_states}

In this section, we describe efficient quantum circuits for preparing optimized resource states for the aforementioned PBT protocols. 
To write and present such circuits in a unified way, we shall introduce another variant of mixed Schur transform $ \Usch(n,n)$ corresponding to the matrix algebra $\A^d_{n,n}$ of partially transposed permutations acting on $n+n$ qudits, each of local dimension $d$. 
Note that this short introduction is analogous, yet not identical, to the one presented in \cref{sec:Representation theory of the partially transposed permutation algebra,sec:Mixed quantum Schur transform}. 
Again, we refer the more interested reader to \cite{grinko2023gelfandtsetlin} for more details. 

The matrix algebra $\A^d_{n,n}$ is generated by acting on $(\C^d)\xp{n+n}$ in a similar way as (\ref{eq:Brauer action}). 
Its irreducible representations are labelled by the following pairs of Young diagrams:
\begin{equation}
    \Irr{\A_{n,n}^{d}} \defeq
    \Set[\Big]{
    (\lambda,\mu) \given
    \lambda \pt n - k, \;
    \mu \pt n - k, \;
    \len(\lambda) + \len (\mu) \leq d, \;
    \text{for }k \in [n]
    }.
\end{equation}
Furthermore, we adapt the Okounkov--Vershik approach to describe representation theory of $\Irr{\A_{n,n}^{d}} $, together with the notion of the Bratteli diagram $\hat{\Brat}$ for the sequence of algebras
$\A_{0,0}^d \hookrightarrow \A_{1,0}^d \hookrightarrow \cdots \hookrightarrow \A_{n,0}^d \hookrightarrow \A_{n,1}^d\hookrightarrow \cdots \hookrightarrow \A_{n,n}^d$, and the notion of paths in the Bratteli diagram $\Paths(\Lambda,\hat{\Brat}) $ starting at the root and terminating at $\Lambda$. 
As we shall see, for our purpose it is enough to present only one special representation corresponding to irrep $\Lambda=(\0,\0)\in \Irr{\A_{n,n}^{d}}$. 
The set of paths 
\begin{align}
    \Paths((\0,\0),\hat{\Brat}) \defeq
     \Big\{(T^0,\dotsc,T^n,T^{n+1},\dotsc,T^{2n}) \,\Big|\,
     &
     T^{k} , T^{2n-k} \pt k \text{ for } k \leq n, \, \text{ and }
     \\&
      T^{i-1} \rightarrow T^{i},  T^{2n-i+1} \rightarrow T^{2n-i}  \text{ for } i \in [n]
     \Big\}
     \nonumber
     \label{eq:paths2}
\end{align}
span the Gelfand--Tsetlin basis $\set{\ket{T} \mid T \in \Paths(\Lambda,\Brat)}$ of irreducible representation $(\0,\0)\in \Irr{\A_{n,n}^{d}}$, see \cref{fig:StatePrep}. 
Furthermore, for $\mu\pt k$, $k\leq n$, we denote by $\Paths_n(\mu,\hat{\Brat}) $ the set of all paths which terminate in $\mu$ at the level $k$. 
Notice that for $i \in [n]$ the diagram $T^i$ is obtained from $T^{i-1}$ by adding some addable cell $a \in \AC_d(T^{i-1})$, while for $i >n$ the diagram $T^i$ is obtained from $T^{i-1}$ by removing some cell. 

\begin{figure}
\begin{tikzpicture}[> = latex,
cut/.style = {thick, dashed, blue, rounded corners = 12pt},
  every node/.style = {inner sep = 3pt, anchor = west},
  cut2/.style = {thick, dashed, gray, rounded corners = 12pt},
  every node/.style = {inner sep = 3pt, anchor = west},
 MT/.style = {draw = blue!40, line width = 3pt},
  every node/.style = {inner sep = 1pt}]
  \def\W{2.0cm}
  \def\H{1cm}
  \foreach \i/\p/\q in {0/0/0, 0.7/1/0, 1.5/2/0, 2.5/3/0, 3.5/3/1, 4.3/3/2, 5.0/3/3} {
    \node at (\i*\W,3.1*\H) {$\A^3_{\p,\q}$};
  }
  \draw[dashed] (3*\W,3.4*\H) -- (3*\W,-2.5*\H);
  \node (0)   at (0.0*\W, 0.0*\H) {$\0$};
  \node (1)   at (0.7*\W, 0.0*\H) {$\yd{1}$};
  \node (2)   at (1.5*\W, 1.0*\H) {$\yd{2}$};
  \node (11)  at (1.5*\W,-1.0*\H) {$\yd{1,1}$};
  \node (3)    at (2.5*\W, 2*\H) {$\yd{3}$};
  \node (21)   at (2.5*\W, 0*\H) {$\yd{2,1}$};
  \node (111)  at (2.5*\W,-2*\H) {$\yd{1,1,1}$};
   \node (2p)   at (3.5*\W, 1.0*\H) {$\yd{2}$};
  \node (11p)  at (3.5*\W,-1.0*\H) {$\yd{1,1}$};
  \node (1p)   at (4.3*\W, 0.0*\H) {$\yd{1}$};
   \node (0p)   at (5.0*\W, 0.0*\H) {$\0$};
  \draw[->] (0) -- (1);
  \draw[->] (1.north east) -- (2);
  \draw[->] (1.south east) -- (11);
  \draw[->] (2.north east) -- (3);
  \draw[->] (2.south east) -- (21);
  \draw[->] (11.north east) -- (21);
  \draw[->] (11.south east) -- (111);
  \draw[<-] (0p) -- (1p);
  \draw[<-] (1p.north west) -- (2p);
  \draw[<-] (1p.south west) -- (11p);
  \draw[<-] (2p.north west) -- (3);
  \draw[<-] (2p.south west) -- (21);
  \draw[<-] (11p.north west) -- (21);
  \draw[<-] (11p.south west) -- (111);
  \draw [decorate,decoration={brace,amplitude=8pt,mirror,raise=4pt},yshift=0pt,thick]
(5.2*\W, -0.3*\H) -- (5.2*\W, 0.3*\H) node [black,midway,xshift=1.3cm] {$\Lambda=(\0,\0)$};
\end{tikzpicture}
\caption{A part of Bratteli diagram $\hat{\Brat}$ corresponding to an irrep $\Lambda=(\0,\0)$ of algebra $\A_{3,3}^3$. The set of paths starting at the root and terminating at the end span the Gelfand--Tsetlin basis of $\Lambda=(\0,\0)$. A tensor product of $n$ copies of EPR states shared between the first and second half of the systems is fully supported in irrep $\Lambda=(\0,\0)$ in the Schur basis. In fact, it corresponds to a uniform superposition of all symmetric paths in the Gelfand--Tsetlin basis, see \cref{eq:EPR,eq:EPRprim}. Similarly, the optimized resource state for PBT protocols can be expressed as a weighted superposition of symmetric paths in the Gelfand--Tsetlin basis. In particular, for pPBT the exact formula for the weights might be computed in $\widetilde{O}(d)$ time, see \cref{eq:pPBT_amplitudes_resource_state}, while for dPBT, weights are the result of non-trivial optimization procedure, see \cref{eq:Opt_procedure_dPBT}. }
\label{fig:StatePrep}
\end{figure}

A variant of mixed Schur--Weyl duality partitions the space $(\C^d)\xp{n+n}$ into subspaces that are invariant under the natural $U^{\otimes n} \otimes \bar{U}^{\otimes n}$ action of $U \in \U{d}$ and the action of the matrix algebra $\A^d_{n,n}$. 
It implies existence of mixed quantum Schur transform $\Usch(n,n) \in \U{d^{n+n}}$, a unitary basis change $\Usch(n,n) \in \U{d^{n+n}}$ that maps the computational basis $\set{\ket{x} \mid x \in [d]^{n+n}}$ of $(\C^d)\xp{n+n}$ to a new basis composed of irreducible representations $\Uirrep{\Lambda}$ and $\Airrep{\Lambda}$ of the aforementioned actions of $\U{d}$ and $\A^d_{n,n}$, respectively:
\begin{equation}
    \label{eq:mSWII}
    \Usch(n,n) \colon
    (\C^d)\xp{n+n}
    \to
    \bigoplus_{\Lambda \in \Irr{\A_{n,n}^{d}}} \Airrep{\Lambda} \x \Uirrep{\Lambda}
    \quad\text{where}\quad
    \Airrep{\Lambda} \defeq \C^{\Paths(\Lambda,\hat{\Brat})}
    \text{ and }
    \Uirrep{\Lambda} \defeq \C^{\GT(\Lambda)},
\end{equation}
where the direct sum ranges over all irreducible representations $\Lambda$ of $\A_{n,n}^d$, and $\GT(\Lambda)$ denotes the set of, so-called, Gelfand--Tsetlin patterns, see \cite{grinko2023gelfandtsetlin} for more details. 
Similarly as in \cref{sec:Representation theory of the partially transposed permutation algebra}, transformation (\ref{eq:mSWII}) extends to the following isometry 
\begin{equation}
    \label{eq:quantum_schur_domain_rangeII}
    U_{\text{Sch}} (n,n) \colon (\C^d)\xp{n+1} \to
    \underbrace{\C^{\Irr{\A_{0,0}^d}} \x \cdots \x \C^{\Irr{\A_{n,n}^d}}  }_T
    \otimes
    \underbrace{\C^{\GT((n,n),d)}}_{\gtM{d-1}},
\end{equation}
where paths $T \in \Paths(\hat{\Brat})$ is stored as a tensor product state $\ket{T} = \ket{T^2} \otimes \cdots \otimes  \ket{T^{n+n}}$. 
In \cite{grinko2023gelfandtsetlin}, we described a quantum circuit implementing the mixed Schur isometry, which for any computational basis vector outputs the corresponding superposition of paths $T \in \Paths(\Brat)$ and Gelfand--Tsetlin patterns. 
The complexity of implementing the mixed quantum Schur transform isometry $U_{\text{Sch}} (n,n)$ is $\widetilde{O}(nd^4)$ \cite{nguyen2023mixed,grinko2023gelfandtsetlin}.

A tensor product of $n$ copies of EPR states shared between the first and second half of the systems in $(\C^d)\xp{n+n}$ has a relatively simple form in mixed Schur basis \cite{studzinski2017port}. In particular, it is supported only on one irrep, namely $\Lambda = (\0,\0) \in \Irr{\A^d_{n,n} }$. Notice that the unitary group representation corresponding to $\Lambda = (\0,\0)  $ is one dimensional. 
With a small abuse of notation, we have
\begin{equation}
\label{eq:EPR}
    \Usch(n,n) \; \Big( \bigotimes_{i=1}^n  \ket{\Phi^+}_{A_i B_i} \Big)
    =
    \sum_{\substack{\mu \pt_d n }} \sqrt{\frac{d_\mu m_\mu}{d^n}} \, \ket{\mathrm{EPR^{[n-1]}_\mu}} \, \ket{\mu}_{T^n},
\end{equation}
where 
\begin{equation}
\label{eq:EPRprim}
\ket{\mathrm{EPR^{[n-1]}_\mu}} \defeq 
\sum_{S \in \Paths_n(\mu,\hat{\Brat})} \sqrt{\frac{1}{d_\mu}} \ket{S_{0}}_{T^{0}} \dotsb \ket{S_{n-1}}_{T^{n-1}} \ket{S_{n-1}}_{T^{n+1}} \dotsb \ket{S_{0}}_{T^{2n}}
\end{equation}
and $\ket{\Phi^+}_{A_i B_i} \defeq \tfrac{1}{\sqrt{d}}\sum_{k=1}^{d} \ket{k}_{A_i}\ket{k}_{B_i}$ is an EPR pair shared between Alice's register $i$ and Bob's register $i$. 
Notice that the order of the path registers is slightly changed compared to (\ref{eq:quantum_schur_domain_rangeII}), and all paths present in the formula above are symmetric with respect to the middle vertex. 
The analytical expressions for optimized resource states in dPBT and pPBT protocols were developed in Ref~\cite{studzinski2017port,mozrzymas2018optimal,christandl2021asymptotic}, and in a mixed Schur basis have a similar form to $n$ copies of EPR pairs \eqref{eq:EPR}. 
Indeed, in both cases, they are of the following form
\begin{align}
    \label{def:resource_state}
    \ket{\Psi}_T &= \sum_{\substack{\mu \pt_d n}} \sqrt{f_\mu} \, \ket{\mathrm{EPR^{[n-1]}_\mu}} \, \ket{\mu}_{T^n},
\end{align}
where $\set*{f_\mu}_{\mu \pt_d n}$ is some probability distribution satisfying $\sum_{\mu \pt_d n} f_\mu = 1$.\footnote{In PBT literature sometimes a different parametrization used: $f_\mu = \tfrac{c_\mu d_\mu m_\mu}{d^n}$ where $c_\mu$ are variables. EPR resource state corresponds to the choice $c_\mu = 1$ for every $\mu \pt_d n$.} 
Furthermore, we can rewrite this expression as \begin{align}
    \ket{\Psi}_T &= \sum_{\substack{\lambda \pt_d n-1}} \sqrt{f_\lambda} \ket{\mathrm{EPR^{[n-2]}_\lambda}} \ket{\lambda}_{T^{n-1}} \ket{\lambda}_{T^{n+1}} \of*{\sum_{a \in \AC_d(\lambda)} \sqrt{\frac{f_{\lambda \cup a}}{f_{\lambda}} \frac{d_{\lambda}}{d_{\lambda \cup a}}} \ket{\lambda \cup a}_{T^n}}, \\
    \ket{\mathrm{EPR^{[n-2]}_\lambda}} &\defeq \sum_{S \in \Paths_{n-1}(\lambda,\hat{\Brat})} \sqrt{\frac{1}{d_\lambda}} \ket{S_{0}}_{T^{0}} \dotsb \ket{S_{n-2}}_{T^{n-2}} \ket{S_{n-2}}_{T^{n+2}} \dotsb \ket{S_{0}}_{T^{2n}},
\end{align}
where $f_\lambda$ are defined in such a way that the state $\sum_{a \in \AC_d(\lambda)} \sqrt{\frac{f_{\lambda \cup a}}{f_{\lambda}} \frac{d_{\lambda}}{d_{\lambda \cup a}}} \ket{\lambda \cup a}_{T^n}$ is normalized, namely
\begin{equation}
    \frac{f_\lambda}{d_\lambda} \defeq \sum_{a \in \AC_d(\lambda)} \frac{f_{\lambda \cup a}}{d_{\lambda \cup a}}.
\end{equation}
We continue doing this rewriting recursively. That leads to the circuit for the preparation of $\ket{\Psi}_T$ presented in \cref{fig:state_prep}, where gates $\mathrm{F}_i$ prepare the following states controlled on $T^{i-1}$:
\begin{equation}
    \label{def:F_gates}
    \mathrm{F}_i \ket{\nu}_{T^{i-1}} \ket{0}_{T^{i}} = \ket{\nu}_{T^{i-1}} \of*{\sum_{a \in \AC_d(\nu)} \sqrt{\frac{f_{\nu \cup a}}{f_{\nu}} \frac{d_{\nu}}{d_{\nu \cup a}}} \ket{\nu \cup a}_{T^i}}
\end{equation}
In particular, for pPBT \cite{studzinski2017port} the formulas for $f_{\nu}$ for every $\nu \pt_d k$ turn out to be as follows:
\begin{equation}
    \label{def:f_resource_state}
    f_\nu = \frac{m^2_\nu}{\sum_{\substack{\chi \pt_d k}} m^2_\chi}
\end{equation}
Therefore due to \cite[Proposition 25]{studzinski2017port} and \cref{eq:cont_ratios} the amplitudes in \cref{def:F_gates} for every $a \in \AC_d(\nu)$ and $\nu \pt k$ can be computed efficiently in time $\widetilde{O}(d)$:
\begin{equation}
    \label{eq:pPBT_amplitudes_resource_state}
    \frac{f_{\nu \cup a}}{f_{\nu}} \frac{d_{\nu}}{d_{\nu \cup a}} = \frac{d + \cont(a)}{d^2 + k} \frac{m_{\nu \cup a}}{m_{\nu}} = \frac{d + \cont(a)}{d^2 + k} \of*{\prod_{\substack{i \, : \, i \neq r \\ 1 \leq i \leq d }} \frac{\nu_r - \nu_i + i - r +1}{\nu_r - \nu_i + i - r}},
\end{equation}
where the last equality is due to the Weyl dimension formula \cref{eq:unitary_dimension} and $r$ denotes the row number where the box $a$ was added to Young diagram $\nu$. 

\begin{figure}[H]
\centering
\newcommand{\Sch}{\gate[wires=14]{U_{\mathrm{Sch}}(n,n)^\dagger}}
\newcommand{\go}[1]{\gateoutput{#1}}
\newcommand{\gi}[1]{\gateinput{#1}}
\newcommand{\Cpy}{\gate{\mathrm{Copy}}}
\newcommand{\Fgt}[1]{\gate{\mathrm{F}_{#1}}}

\newcommand{\keto}{\lstick{$\ket{0}$}}
\newcommand{\namel}[1]{\wire[l][1]["(\lambda {,} \0)"{above,pos=#1}]{a} }
\newcommand{\name}[1]{\wire[l][1]["#1"{above,pos=0.1}]{a} }

\resizebox{0.6\textwidth}{!}{
\begin{quantikz}[classical gap = 2.5pt]
\keto   &       &        &        &        &\HD&        &        &\Sch \gi{$T^0$}\go{$A_1$}        &   &\rstick[7]{Alice}\\
\keto   &       &        &        &        &\HD&        &        &     \gi{$T^1$}\go{$A_2$}        &   &                 \\
\keto   &\Fgt{2}&\ctrl{8}&\ctrl{1}&        &\HD&        &        &     \gi{$T^2$}\go{$A_3$}        &   &                 \\
\keto   &       &        &\Fgt{3} &\ctrl{6}&\HD&        &        &     \gi{$T^3$}\go{$A_4$}        &   &                 \\
\qn\vd  &\vd    &        &\vd     &        &\vd&  \vd   & \vd    &                                 &\vd&                 \\
\keto   &       &        &        &        &\HD&\ctrl{2}&\ctrl{1}&     \gi{$T^{n-1}$}\go{$A_{n-1}$}&   &                 \\
\keto   &       &        &        &        &\HD&        &\Fgt{n} &     \gi{$T^{n}$}  \go{$A_n$}    &   &                 \\
\keto   &       &        &        &        &\HD&\gate{+}&        &     \gi{$T^{n+1}$}\go{$B_n$}    &   &\rstick[7]{Bob}  \\
\qn\vd  &\vd    &        & \vd    &        &\vd& \vd    & \vd    &                                 &\vd&                 \\
\keto   &       &        &        &\gate{+}&\HD&        &        &     \gi{$T^{2n-3}$} \go{$B_{5}$}&   &                 \\
\keto   &       &\gate{+}&        &        &\HD&        &        &     \gi{$T^{2n-2}$} \go{$B_{4}$}&   &                 \\
\keto   &       &        &        &        &\HD&        &        &     \gi{$T^{2n-1}$} \go{$B_{3}$}&   &                 \\
\keto   &       &        &        &        &\HD&        &        &     \gi{$T^{2n}$}   \go{$B_{2}$}&   &                 \\
\keto\qb&       &        &        &        &\HD&        &        &     \gi{$\gtM{d-1}$}\go{$B_1$}  &\qq&       
\end{quantikz}
}
\caption{Circuit for the preparation of the resource state $\ket{\Psi}_T \ket{0}_{\gtM{d-1}}$ from \cref{def:resource_state}. Gates $\mathrm{F}_i$ are defined in \cref{def:F_gates} via \cref{def:f_resource_state,eq:pPBT_amplitudes_resource_state}.}
\label{fig:state_prep}
\end{figure}

However, we cannot do the same analysis for the dPBT protocol \cite{mozrzymas2018optimal,leditzky2020optimality} since $f_\mu$ for $\mu \pt_d n$ are defined via non-trivial optimization problem \cite[Equation 6.3]{leditzky2020optimality}:
\begin{equation}
\label{eq:Opt_procedure_dPBT}
    \set*{f_\mu}_{\mu \pt_d n} = \argmax_{\sum_{\mu} f_\mu = 1} \sum_{\lambda \pt_d n-1} \of[\Bigg]{\sum_{\substack{a \in \AC_d(\lambda)}} \sqrt{f_{\lambda \cup a}}}^2.
\end{equation}
We leave it as an open question for future work to understand how to efficiently compute the amplitudes in \cref{def:F_gates} for dPBT. At the same time, a suboptimal choice for dPBT resource states, described in \cite[Section B]{christandl2021asymptotic}, are implementable via our scheme presented above.

\section*{Acknowledgements}

DG thanks Tudor Giurgica-Tiron, Quynh Nguyen, Aram Harrow, Hari Krovi, Yanlin Chen and John van de Wetering for useful general discussions. We thank Jiani Fei for her comments and for spotting an error in the dPBT post-measurement state presented in version 1 of the manuscript. We thank Adam Wills for his comments about known results on port-based teleportation, for pointing out logarithmic space qubit Schur transform from \cite{wills2023generalised}, and for useful general discussions. We also thank Rene Allerstorfer, Harry Burhman, Florian Speelman and Philip Verduyn Lunel for discussing the relationship between port-based teleportation and non-local quantum computation. Finally, we thank all authors of \cite{nguyen2023mixed,fei2023efficient} for the coordination with version 1 of \cite{grinko2023gelfandtsetlin} which initially described the efficient construction of standard PGM, now presented in the current paper. DG, AB, and MO were supported by an NWO Vidi grant (Project No VI.Vidi.192.109).


\printbibliography

\end{document}